\newtheorem{theorem}{Theorem}
\newtheorem{corollary}{Corollary}
\newtheorem{proposition}{Proposition}
\newtheorem{lemma}{Lemma}
\theoremstyle{remark}
\newcommand{\parti}[2]{\frac{\partial #1}{\partial #2}}
\newcommand{\avg}[1]{\langle#1\rangle}
\newcommand{\Avg}[1]{\left\langle#1\right\rangle}
\newcommand{\abs}[1]{\left|#1\right|}
\newcommand{\bk}[1]{\left(#1\right)}
\newcommand{\Bk}[1]{\left[#1\right]}
\newcommand{\BK}[1]{\left\{#1\right\}}
\newcommand{\iverson}[1]{1_{#1}}
\newcommand{\norm}[1]{\lVert #1 \rVert}
\newcommand{\opnorm}[1]{\norm{#1}_{\rm op}}
\DeclareMathOperator*{\argmin}{arg\,min}
\DeclareMathOperator{\spn}{span}
\DeclareMathOperator{\cspn}{\overline{span}}
\DeclareMathOperator{\trace}{tr}
\DeclareMathOperator{\real}{Re}
\DeclareMathOperator{\imag}{Im}
\DeclareMathOperator{\supp}{supp}
\newcommand{\PRXagain}[1]{#1}
\begin{document}

\title{Quantum Semiparametric Estimation}

\author{Mankei Tsang}
\email{mankei@nus.edu.sg}
\homepage{https://blog.nus.edu.sg/mankei/}
\affiliation{Department of Electrical and Computer Engineering,
  National University of Singapore, 4 Engineering Drive 3, Singapore
  117583}

\affiliation{Department of Physics, National University of Singapore,
  2 Science Drive 3, Singapore 117551}

\author{Francesco Albarelli}
\email{francesco.albarelli@gmail.com}
\affiliation{Faculty of Physics, University of Warsaw, 02-093 Warszawa, Poland}
\affiliation{Department of Physics, University of Warwick, Coventry CV4 7AL, United Kingdom}

\author{Animesh Datta}
\email{animesh.datta@warwick.ac.uk}
\affiliation{Department of Physics, University of Warwick, Coventry CV4 7AL, United Kingdom}

\date{\today}

\begin{abstract}
  In the study of quantum limits to parameter estimation, the high
  dimensionality of the density operator and that of the unknown
  parameters have long been two of the most difficult challenges.
  Here we propose a theory of quantum semiparametric estimation that
  can circumvent both challenges and produce simple analytic bounds
  for a class of problems \PRXagain{in which the dimensions are
    arbitrarily high, few prior assumptions about the density operator
    are made, but only a finite number of the unknown parameters are
    of interest.}  \PRXagain{We also relate our bounds to Holevo's
    version of the quantum Cram\'er-Rao bound, so that they can
    inherit the asymptotic attainability of the latter in many cases
    of interest.}  The theory is especially relevant to the estimation
  of a parameter that can be expressed as a function of the density
  operator, such as the expectation value of an observable, the
  fidelity to a pure state, the purity, or the von Neumann entropy.
  Potential applications include quantum state characterization for
  many-body systems, optical imaging, and interferometry, where full
  tomography of the quantum state is often infeasible and only a few
  select properties of the system are of interest.
\end{abstract}

\maketitle
\section{Introduction}

The random nature of quantum mechanics has practical implications for
the noise in sensing, imaging, and quantum-information applications
\cite{helstrom,demkowicz15,paris,*glm2011,*szczykulska16,*pirandola18,*braun18,*pezze18,*albarelli20a,tnl,tsang19a,kolobov,*kolobov07,*kolobov_fabre,*jezek,*hradil,*taylor16,*genovese16,*berchera19,*moreau19}.
To derive their fundamental quantum limits, one standard approach is
to compute quantum versions of the Cram\'er-Rao bound
\cite{helstrom,demkowicz15,paris,tnl,tsang19a,holevo11,hayashi,hayashi05}. In
addition to serving as rigorous limits to parameter estimation, the
quantum bounds have inspired new sensing and imaging paradigms that go
beyond conventional methods \cite{paris,tnl,tsang19a}.

The study of quantum limits has grown into an active research field
called quantum metrology in recent years, building on the pioneering
work of Helstrom \cite{helstrom} and Holevo \cite{holevo11}. A major
current challenge is the computation of quantum bounds for
high-dimensional density operators and high-dimensional parameters, as
the brute-force method quickly becomes intractable for increasing
dimensions; see
Refs.~\cite{yuan17,*genoni19,*chabuda20,*fanizza19,albarelli19} for a
sample of recent efforts to combat the so-called curse of
dimensionality. Most of the existing methods, however, ultimately have
to resort to numerics for high dimensions. While numerical methods are
no doubt valuable, analytic solutions should be prized higher---as
with any study in physics---for their simplicity and offer of
insights. Unfortunately, except for a few cases where one can exploit
the special structures of the density-operator family
\cite{helstrom,holevo11,twc,*guta11,ng16,zhou19,tsang19}, analytic
results for high-dimensional problems remain rare in quantum
metrology.

Here we propose a theory of quantum semiparametric estimation that can
turn the problem on its head and deal with density operators with
arbitrarily high dimensions and little assumed structure. The theory
is especially relevant to the estimation of a parameter that can be
expressed as a function of the density operator, such as the
expectation value of an observable, the fidelity to a given pure
state, the purity, or the von Neumann entropy. The density operator is
assumed to come from an enormous family, its dimension can be
arbitrarily high and possibly infinite, and the unknown ``nuisance''
parameters have a similar dimension to that of the density operator.
Despite the seemingly bleak situation, our theory can yield
surprisingly simple analytic results, precisely because of the absence
of structure. Our results are ideally suited to scientific
applications, such as quantum state characterization
\cite{guehne09,paris04,*lvovsky09,*horodecki,*filip02,*brun04,*flammia11,*enk12,horodecki03}
optical imaging \cite{helstrom,tsang19a,kolobov,zhou19,tsang19}, and
interferometry \cite{helstrom,holevo11,demkowicz15,paris}, where the
dimensions can be high, the density operator is difficult to specify
fully, and it is prudent to assume little prior information.

The theory set forth generalizes the deep and exquisite theory of
semiparametric estimation in classical statistics
\cite{ibragimov81,bickel93,tsiatis06}, which has seen wide
applications in fields such as biostatistics \cite{tsiatis06},
econometrics \cite{newey90}, astrostatistics \cite{feigelson12}, and,
most recently, optical superresolution \cite{tsang19b}.  By necessity,
the classical theory involves infinite-dimensional spaces for random
variables and makes extensive use of geometric and Hilbert-space
concepts. As will be seen later, the operator Hilbert space introduced
by Holevo \cite{holevo11,holevo77} turns out to be the right arena for
the quantum case, and the geometric picture of quantum states
\cite{hayashi,hayashi05,amari,uhlmann93,*braunstein,*bengtsson,*sidhu20}
can provide illuminating insights.

\PRXagain{Our formalism is primarily based on Helstrom's version of
  the quantum Cram\'er-Rao bound \cite{helstrom}. While this allows us
  to adapt the classical methods more easily, it is unable to account
  for the increased errors due to the incompatibility of quantum
  observables when multiple parameters are involved
  \cite{holevo11,demkowicz20}.  We address this issue by studying also
  Holevo's version of the quantum Cram\'er-Rao bound \cite{holevo11}
  in the semiparametric setting and proving that the two versions turn
  out to be close. This result enables our bounds to inherit the
  asymptotic attainability of Holevo's bound
  \cite{kahn09,gill_guta,*yamagata13,*yang19,demkowicz20} in many
  cases of interest.}

\section{\label{sec_preview}Preview of typical results}
Before going into the formalism, we present some typical results of the
theory to offer motivation.

Suppose that an experimenter has received $N$ quantum objects, such as
atoms, electrons, photons, or optical pulses, each with the same
quantum state $\rho$.  The experimenter would like to estimate a
parameter $\beta$ as a function of $\rho$.
Without any knowledge or assumption about $\rho$, what is the best
measurement to perform for the estimation of $\beta$, and what is the
fundamental limit to the precision for any measurement?

The quantum semiparametric theory can provide simple answers to the
above questions.  For the simplest example, let
$\beta = \trace \rho Y$, where $Y$ is a given observable, and assume
that the estimator is required to be unbiased. For example, one
may wish to estimate
\begin{enumerate}
\item the mean position of photons or electrons in optical or electron
  microscopy,

\item the mean photon number in an optical mode in optical sensing,
  imaging, and communication \cite{helstrom},

\item the mean energy, momentum, or field of quantum particles in
  particle-physics, condensed-matter, or quantum-chemistry
  experiments,

\item a density-matrix element, the fidelity
  $\bra{\psi}\rho\ket{\psi}$ to a target pure state $\ket{\psi}$, or
  an entanglement witness in quantum-information experiments
  \cite{guehne09,paris04}.
\end{enumerate}
This problem appears in all areas of quantum mechanics
\cite{schwartz,*chaikin,*haken,*bravyi19}, as most quantum
calculations offer predictions in terms of expectation values only,
and experiments that aim to estimate the expectation values and verify
the predictions with few assumptions about the density operator are in
essence semiparametric estimation.  The theory here shows that the
optimal measurement is simply a von Neumann measurement of the
observable $Y$ of each copy of the objects, followed by an average of
the outcomes.  For any measurement, the mean-square error of the
estimation, denoted by the sans-serif $\mathsf{E}$, has a quantum
limit given by
\begin{align}
\mathsf{E} &\ge \frac{1}{N}\trace \rho (Y-\beta)^2.
\label{MSE_Y}
\end{align}
Absent any information about $\rho$, the separate measurements and the
sample mean seem to be the most obvious procedure, but it is not at
all obvious that it is optimal, given the infinite possibilities
allowed by quantum mechanics.

While Eq.~(\ref{MSE_Y}) has been derived before via a more
conventional method for a finite-dimensional $\rho$
\cite{watanabe10,*watanabe11}, our theory can also deal with
infinite dimensions as well as more advanced examples in quantum
information and quantum thermodynamics.  For example, if the parameter
of interest is the purity $\beta = \trace \rho^2$, the bound is
\begin{align}
\mathsf{E} &\ge \frac{4}{N} \trace \rho (\rho-\beta)^2,
\label{MSE_purity}
\end{align}
and if the parameter is the relative entropy
$\beta = \trace \rho(\ln\rho-\ln\sigma)$ with respect to a target
state $\sigma$, the bound is
\begin{align}
\mathsf{E} &\ge \frac{1}{N} \trace \rho (\ln\rho-\ln\sigma-\beta)^2.
\label{MSE_entropy}
\end{align}
For these two examples, the bounds are asymptotically attainable in
principle, at least when $\rho$ is finite-dimensional
\cite{kahn09,gill_guta,*yamagata13,*yang19,demkowicz20}.

The semiparametric theory is relevant to experiments on many-body
quantum systems and quantum simulation \cite{bloch,*georgescu14},
because often there is no simple model for $\rho$, full tomography of
$\rho$ is infeasible, and only a few select properties of the system
may be of interest. Although a significant literature in quantum
information has been devoted to such semiparametric problems
\cite{guehne09,paris04,*lvovsky09,*horodecki,*filip02,*brun04,*flammia11,*enk12,horodecki03},
their connections to the classical theory have not yet been
recognized. By generalizing the classical theory, this work
establishes fundamental limits to the task, indicating the minimum
amount of resources needed to achieve a desired precision and also
offering a rigorous yardstick for experimental design.  This work thus
addresses a foundational question by Horodecki \cite{horodecki03}:
\emph{``What kind of information (whatever it means) can be extracted
  from an unknown quantum state at a small measurement cost?''} Our
work shows that quantum metrology---and quantum semiparametric
estimation in particular---offers a viable attack on the question via
a statistical notion of efficiency.

An extension of the above scenario is the estimation of $\beta$ given
a constraint on $\rho$. For example, suppose that the quantum state is
known to possess a mean energy $\trace \rho H = E$, where $H$ is the
Hamiltonian, or attain a fidelity of $\bra{\phi}\rho\ket{\phi} = F$
with respect to another pure state $\ket{\phi}$. How may this new
information affect the estimation? Write the constraint as
$\trace \rho Z = \zeta$, where $Z$ is an observable and $\zeta$ is a
given constant. The quantum bound for the $\beta = \trace \rho Y$
example turns out to be
\begin{align}
\mathsf{E} &\ge \frac{1}{N}\bk{V_Y- \frac{C_{YZ}^2}{V_Z}},
&
V_Y &= \trace \rho (Y-\beta)^2,
\label{correlation_bound}
\\
C_{YZ} &= \trace \rho (Y-\beta)\circ (Z-\zeta),
&
V_Z &= \trace \rho (Z-\zeta)^2,
\label{correlation_bound2}
\end{align}
where $A \circ B = (AB+BA)/2$ denotes the Jordan product. The bound is
reduced by the correlation between $Y$ and $Z$. 

Another paradigmatic problem in quantum metrology is displacement
estimation \cite{helstrom,holevo11,demkowicz15,paris}, which can be
modeled by
\begin{align}
\rho = \exp\bk{-iH\beta}\rho_0 \exp\bk{iH\beta},
\label{displacement}
\end{align}
where $\rho_0$ is the initial state, $H$ is a generator, such as the
photon-number operator in optical interferometry, and $\beta$ is the
displacement parameter to be estimated.  Applications range from
optical and atomic interferometry to atomic clocks, magnetometry,
laser ranging, and localization microscopy
\cite{demkowicz15,paris,kolobov}. If nothing is known about $\rho_0$
other than a constraint $\trace \rho_0 Z = 0$, the quantum bound
turns out to be
\begin{align}
\mathsf{E} &\ge 
\frac{\trace \rho_0Z^2}{N\{-i\trace \rho_0 [Z,H]\}^2},
\label{displacement_bound}
\end{align}
where $[Z,H] \equiv ZH-HZ$. Our theory can in fact give similarly
simple results for a class of such semiparametric problems.

It must be stressed that, apart from the underlying Hilbert space and
the constraints discussed above, the experimenter is assumed to know
nothing about the density operator, and the bounds here are
valid regardless of its dimension. The existing method of deriving
such quantum limits is to model $\rho$ with 
many parameters \cite{hayashi,hayashi05,kahn09,watanabe10},
compute a quantum version of the Fisher information matrix, and then
invert it. This brute-force method is rarely feasible for problems
with high or infinite dimensions. A new philosophy is needed.

In the next sections, we present the theory of quantum semiparametric
estimation in increasing sophistication. Sections~\ref{sec_direct} and
\ref{sec_submodels} generalize the quantum Cram\'er-Rao bound proposed
by Helstrom \cite{helstrom} in a geometric picture. While the picture
is not new \cite{hayashi05,amari}, it has so far remained an
intellectual curiosity only.  Sections~\ref{sec_direct} and
\ref{sec_submodels} show that it can in fact give simple solutions,
such as Eqs.~(\ref{MSE_Y})--(\ref{MSE_entropy}), to a class of
semiparametric problems with arbitrary
dimensions. Section~\ref{sec_direct} establishes the general formalism
and also proves results that are valid for finite dimensions, while
Sec.~\ref{sec_submodels} deals with the infinite-dimensional case via
an elegant concept called parametric submodels. In the classical
theory, the concept was first adumbrated by Charles Stein
\cite{stein56} and developed by Levit and many others
\cite{ibragimov81,bickel93,tsiatis06}. Section~\ref{sec_con} further
develops the formalism to account for constraints on the
density-operator family, in order to produce results such as
Eq.~(\ref{correlation_bound}). An example of entropy estimation in
quantum thermodynamics is also discussed there.
Section~\ref{sec_optics} discusses some practical problems in optics
and summarizes existing results on incoherent optical imaging
\cite{tsang19a} in the language of quantum semiparametrics, in order
to provide a more concrete context for the formalism.
Section~\ref{sec_displacement} considers semiparametric estimation in
the presence of explicit nuisance parameters and studies in particular
the problem of displacement estimation with a poorly characterized
initial state, in order to produce results such as
Eq.~(\ref{displacement_bound}). To complete the formalism,
Section~\ref{sec_multi} considers a vectoral parameter of interest and
Holevo's version of the quantum Cram\'er-Rao bound
\cite{holevo11}. \PRXagain{There we prove that the Helstrom and Holevo
  bounds are equal if the parameter of interest is a scalar, and they
  remain within a factor of two of each other in the vectoral
  case. The latter fact generalizes a recent result in the parametric
  setting \cite{carollo19,*carollo20}.  Thus the Helstrom version can
  inherit the asymptotic attainability of the latter
  \cite{kahn09,gill_guta,demkowicz20} to within a factor of two.}

\section{\label{sec_direct}Geometric picture of quantum estimation
  theory}
This section is organized as follows. Section~\ref{sec_helstrom}
introduces the Helstrom bound in the conventional formulation.
Section~\ref{sec_holevo} introduces some important Hilbert-space
concepts, including the tangent space and the influence operators.
Section~\ref{sec_GHB} generalizes the Helstrom bound in terms of a
projection of an influence operator into the tangent
space. Section~\ref{sec_gradient} shows how an influence operator can
be derived for a given parameter of interest, while
Sec.~\ref{sec_tangent} proves that the tangent space is simple if the
density operator is assumed to be finite-dimensional but otherwise
arbitrary.  The projection is then straightforward, and
Sec.~\ref{sec_tangent} demonstrates the derivation of
Eqs.~(\ref{MSE_Y})--(\ref{MSE_entropy}) as examples.

\subsection{\label{sec_helstrom}Helstrom bound}
Let 
\begin{align}
\mathbf F \equiv \BK{\rho(\theta): \theta \in 
\Theta \subseteq \mathbb R^p}
  \label{family}
\end{align}
be a family of density operators parametrized by
$\theta = (\theta_1,\dots,\theta_p)^\top$, where the superscript
$\top$ denotes the matrix transpose and $p$ denotes the dimension of
the parameter space $\Theta$. The operators are assumed to operate on
a common Hilbert space $\mathcal H$, with an orthonormal basis
\begin{align}
\BK{\ket{j}: j  \in \mathcal Q, \braket{j|k} = \delta_{jk}}
\label{Q}
\end{align}
that does not depend on $\theta$. Let 
\begin{align}
d &\equiv \dim \mathcal H = \abs{\mathcal Q}
\end{align}
be the dimension of $\mathcal H$, which may be infinite.  The family
is assumed to be smooth enough so that any
$\partial_j \equiv \partial/\partial\theta_j$ can be interchanged with
the operator trace $\trace$ in any operation on $\rho(\theta)$.
Define $\partial \equiv (\partial_1,\dots,\partial_p)^\top$,
  and define a vector of operators $S \equiv (S_1,\dots,S_p)^\top$ as
solutions to
\begin{align}
  \partial \rho &= \rho \circ S,
\label{sld}
\end{align}
which is shorthand for the system of equations
\begin{align}
\left.\partial_j\rho(\theta)\right|_{\theta=\phi}
= \rho(\phi) \circ S_j(\phi),
\quad j = 1,\dots,p.
\end{align}
$\phi$ is the true parameter value, and all functions of $\theta$ in
this section are assumed to be evaluated implicitly at the same
$\theta = \phi$. Each $S_j$ is called a symmetric logarithmic
derivative in the quantum metrology literature, but here we call it a
score, in accordance with the statistics terminology
\cite{ibragimov81,bickel93,tsiatis06}. All vectors are assumed to be
column vectors in this paper.

To model a measurement, define a positive operator-valued measure
(POVM) $E$ on a measurable space $(\mathcal X,\Sigma_{\mathcal X})$,
where $\Sigma_{\mathcal X}$ is the sigma algebra on the set
$\mathcal X$. Let the parameter of interest be a scalar
$\beta(\theta) \in \mathbb R$; generalization for a vectoral $\beta$
will be done in Sec.~\ref{sec_multi}.  Assume an estimator
  $\check\beta: \mathcal X \to \mathbb R$ that satisfies
\begin{align}
\int \check\beta(\lambda)\trace dE(\lambda)\rho  &= \beta,
&
\int \check\beta(\lambda)\trace dE(\lambda)\partial\rho  &= \partial\beta.
\label{unbiased}
\end{align}
$(E,\check\beta)$ is called a locally unbiased measurement, as we only
require Eqs.~(\ref{unbiased}) to hold at the true $\theta =
\phi$. Only local unbiasedness conditions are needed in this paper,
and for brevity we will no longer explicitly describe them as local.
Define the mean-square estimation error as
\begin{align}
\mathsf{E} &\equiv \int \Bk{\check\beta(\lambda)-\beta}^2 \trace dE(\lambda)\rho.
\end{align}
If $p < \infty$, a quantum version of the Cram\'er-Rao bound due to
Helstrom \cite{helstrom}, denoted by the sans-serif $\mathsf{H}$,
applies to any unbiased measurement and can be expressed as
\begin{align}
\mathsf{E} &\ge
\mathsf{H} \equiv (\partial\beta)^\top K^{-1} \partial\beta,
\label{HB}
\end{align}
where the Helstrom information matrix $K$ is defined as
\begin{align}
K_{jk} &\equiv \trace \rho\bk{S_j \circ S_k}.
\label{K}
\end{align}
The Helstrom bound sets a lower bound on the estimation error for any
quantum measurement and any unbiased estimator
\cite{helstrom,holevo11,hayashi,hayashi05}. The estimation of $\beta$
with an infinite-dimensional $\theta$ ($p = \infty$) is called
semiparametric estimation in statistics
\cite{ibragimov81,bickel93,tsiatis06}, although the methodology
applies to arbitrary dimensions. If $\theta$ is partitioned into
$(\beta,\eta_1,\eta_2,\dots)^\top$, then $\eta$ is called nuisance
parameters \cite{tsiatis06,suzuki19}.

\subsection{\label{sec_holevo}Hilbert spaces for operators}
We now follow Holevo \cite{holevo11,holevo77} and introduce operator
Hilbert spaces in order to generalize the Helstrom bound for
semiparametric estimation. The formalism may seem daunting at first
sight, but the payoff is substantial, as it simplifies proofs, treats
the infinite-dimensional case rigorously, and also enables one to
avoid the explicit computation of $S$ and $K^{-1}$ for a large class
of problems. In the following, we assume familiarity with the basic
theory of Hilbert spaces and the mathematical treatment of quantum
mechanics; see, for example,
Refs.~\cite{holevo11,debnath05,reed_simon}.

All operators considered in this paper are self-adjoint.  Consider
$\rho$ in the diagonal form
$\rho = \sum_j \lambda_j \ket{e_j}\bra{e_j}$ with $\lambda_j > 0$. The
support of $\rho$ is
$\supp(\rho) = \cspn\{\ket{e_j}\} \subseteq \mathcal H$, where $\cspn$
denotes the closed linear span. $\rho$ is called full rank if
$\supp(\rho) = \mathcal H$.  Define the weighted inner product between
two operators $h$ and $g$ as
\begin{align}
\Avg{h,g} &\equiv \trace \rho \bk{h \circ g},
\label{inner}
\end{align}
and a norm as
\begin{align}
\norm h &\equiv \sqrt{\Avg{h,h}},
\end{align}
not to be confused with the operator norm
$\opnorm h = \sup_{\ket{\psi} \in \mathcal H}
\sqrt{\bra{\psi}h^2\ket{\psi}} \ge \norm h$.  An operator is called
bounded if $\opnorm h < \infty$ and square summable with respect to
$\rho$ if $\norm h < \infty$, although all operators are bounded by
definition if $d < \infty$.  For two vectors of operators $A$ and $B$,
it is convenient to use $\avg{A,B}$ to denote a matrix with entries
\begin{align}
\Avg{A,B}_{jk} &= \Avg{A_j,B_k},
\label{inner_matrix}
\end{align}
such as $K = \avg{S,S}$ as a Gram matrix.

Define the real Hilbert space for square-summable operators
with respect to the true $\rho$ as \cite{holevo11,holevo77}
\begin{align}
\mathcal Y &\equiv \BK{h:\norm h < \infty}.
\label{Y}
\end{align}
To be precise, each Hilbert-space element is an equivalence
class of operators with zero distance between them, viz.,
$\{\hat h_j:\norm{\hat h_j-\hat h_k} = 0\ \forall j,k\}$.  The
distinction between an element and its operators is important only if
$\rho$ is not full rank; we put a hat on an operator if the
distinction is called for. Two important Hilbert-space elements are
the identity element $I$ and the zero element $0$; sometimes we will
substitute $I = 1$ for brevity.

Define a subspace of zero-mean operators as
\begin{align}
\mathcal Z &\equiv \BK{h \in \mathcal Y: \trace \rho h = \Avg{h,I}= 0},
\label{Z}
\end{align}
and the orthocomplement of $\mathcal Z$ in $\mathcal Y$ as
\begin{align}
\mathcal Z^\perp &\equiv \BK{h \in \mathcal Y:
\Avg{g,h} = 0\ \ \forall g \in \mathcal Z}
 = \spn\BK{I}.
\end{align}
In particular, the projection of any $h \in \mathcal Y$ into
$\mathcal Z^\perp$ is simply $\Pi(h|\mathcal Z^\perp) = \avg{h,I}$,
where $\Pi$ denotes the projection map, and
\begin{align}
\Pi(h|\mathcal Z) = h - \Pi(h|\mathcal Z^\perp) = h - \Avg{h,I}.
\end{align}
The most important Hilbert space in estimation theory is the tangent
space spanned by the set of scores
$\{S\} \equiv \{S_1,\dots,S_p\}$ 
\cite{ibragimov81,bickel93,tsiatis06}, generalized here as
\begin{align}
  \mathcal T &\equiv \cspn\BK{S} \subseteq \mathcal Z.
\label{tangent}
\end{align}
$\{S\}$ is also known as the tangent set. The condition
$\mathcal T \subseteq \mathcal Z$ requires the assumption
$K_{jj} = \avg{S_j,S_j} < \infty$ for all $j$; the zero-mean
requirement is satisfied because
$\avg{S,I} = \trace \partial \rho = \partial \trace\rho = 0$.  A
useful relation for any bounded operator $h$ is
\begin{align}
\Avg{S_j,h} &= \trace \rho (S_j \circ h) 
= \trace (\rho \circ S_j)  h = \trace (\partial_j \rho)h,
\label{cyclic}
\end{align}
via Ref.~\cite[Eq.~(2.8.88)]{holevo11}. Denote also the
orthocomplement of $\mathcal T$ in $\mathcal Z$ as
\begin{align}
\mathcal T^\perp &\equiv \BK{h\in \mathcal Z: \Avg{S,h} = 0},
\end{align}
which is useful if a projection of $h \in \mathcal Z$ into
$\mathcal T$ is desired and $\Pi(h|\mathcal T^\perp)$ is easier to
compute, since 
\begin{align}
\Pi(h|\mathcal T) = h- \Pi(h|\mathcal T^\perp).
\end{align}
Another important concept in the classical theory is the influence
functions \cite{ibragimov81,bickel93,tsiatis06}, which we generalize by defining
the set of influence operators as
\begin{align}
\mathcal D &\equiv
\BK{\delta \in \mathcal Z: \Avg{S,\delta} = \partial\beta}.
\label{influence_set}
\end{align}
These operators play a major role in Holevo's formulation of quantum
Cram\'er-Rao bounds \cite{holevo11,ragy16}, although their connection
to the classical concept did not seem to be appreciated before.

\subsection{\label{sec_GHB}Generalized Helstrom bound}
Let the error operator with respect to an unbiased measurement
be
\begin{align}
\delta = \int \check\beta(\lambda) dE(\lambda) - \beta.
\label{error}
\end{align}
It can be shown \cite[Sec.~6.2]{holevo11} that $\delta\in\mathcal D$
(as long as $\norm{\delta} < \infty$), and also that $\norm{\delta}^2$
bounds the estimation error as
\begin{align}
\mathsf{E} &  \ge \norm{\delta}^2.
\label{MSE}
\end{align}
A generalized Helstrom bound (GHB) for any unbiased
measurement, denoted by $\tilde{\mathsf{H}}$, can then be expressed as
\begin{align}
\mathsf{E} &\ge
\norm{\delta}^2 \ge 
\inf_{\delta \in \mathcal D} \norm{\delta}^2 \equiv \tilde{\mathsf{H}}.
\label{GHB}
\end{align}
We call an unbiased measurement efficient if it has an error that
achieves the GHB, following the common statistics terminology
\cite{ibragimov81,bickel93,tsiatis06}. 

Proofs that Eq.~(\ref{GHB}) is equal to Eq.~(\ref{HB}) if $p < \infty$
and $K^{-1}$ exists can be found in
Refs.~\cite{nagaoka89,amari,ragy16}.  The following theorem gives a
more general expression that is the cornerstone of quantum
semiparametric estimation.
\begin{theorem}
\label{thm_GHB}
\begin{align}
\tilde{\mathsf{H}} &= \min_{\delta \in \mathcal D} \norm{\delta}^2
= \norm{\delta_{\rm eff}}^2,
\label{GHB2}
\end{align}
where $\delta_{\rm eff}$, henceforth called the efficient influence,
is the unique element in the influence-operator set $\mathcal D$ given by
\begin{align}
\delta_{\rm eff} &=
\Pi(\delta|\mathcal T),
\label{influ_eff}
\end{align}
and $\Pi(\delta|\mathcal T)$ denotes the projection of any influence
operator $\delta \in \mathcal D$ into the tangent space $\mathcal T$.
\end{theorem}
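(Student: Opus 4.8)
The plan is to recognize the influence-operator set $\mathcal D$ as an affine subspace of $\mathcal Z$ whose direction space is $\mathcal T^\perp$, and then to reduce the whole statement to the Hilbert-space projection theorem. First I would show that $\mathcal D$ is affine and parallel to $\mathcal T^\perp$. Indeed, if $\delta_1,\delta_2\in\mathcal D$ then $\Avg{S,\delta_1-\delta_2}=\partial\beta-\partial\beta=0$ and $\delta_1-\delta_2\in\mathcal Z$, so $\delta_1-\delta_2\in\mathcal T^\perp$; conversely, adding any $t\in\mathcal T^\perp$ to a given $\delta\in\mathcal D$ leaves both constraints intact, so $\delta+t\in\mathcal D$. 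Hence $\mathcal D=\delta+\mathcal T^\perp$ for any fixed $\delta\in\mathcal D$, the set being nonempty by the error-operator construction preceding the theorem.

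The key structural step is that the projection $\Pi(\delta|\mathcal T)$ does not depend on which $\delta\in\mathcal D$ is used: since any two members of $\mathcal D$ differ by an element of $\mathcal T^\perp$, and $\mathcal T^\perp$ projects to $0$ in $\mathcal T$, every member of $\mathcal D$ has the same image $\delta_{\rm eff}:=\Pi(\delta|\mathcal T)$. This well-definedness is exactly the uniqueness asserted in the theorem. I would then check that $\delta_{\rm eff}\in\mathcal D$: decomposing $\delta=\delta_{\rm eff}+(\delta-\delta_{\rm eff})$ with $\delta-\delta_{\rm eff}\in\mathcal T^\perp$, orthogonality gives $\Avg{S,\delta_{\rm eff}}=\Avg{S,\delta}=\partial\beta$, and $\delta_{\rm eff}\in\mathcal T\subseteq\mathcal Z$, so all defining conditions of $\mathcal D$ hold.

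The minimization itself is then the orthogonal-decomposition argument. For arbitrary $\delta\in\mathcal D$, write $\delta=\delta_{\rm eff}+t$ with $t=\delta-\delta_{\rm eff}\in\mathcal T^\perp$; because $\delta_{\rm eff}\in\mathcal T$ is orthogonal to $t$, the Pythagorean identity gives $\norm{\delta}^2=\norm{\delta_{\rm eff}}^2+\norm{t}^2\ge\norm{\delta_{\rm eff}}^2$, with equality iff $t=0$, i.e.\ $\delta=\delta_{\rm eff}$. This single inequality simultaneously shows that the infimum defining $\tilde{\mathsf H}$ is attained (so $\inf$ upgrades to $\min$), that its value equals $\norm{\delta_{\rm eff}}^2$, and that the minimizer is unique.

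I expect the only genuine subtlety—the main obstacle in the infinite-dimensional setting—to be the existence and well-definedness of the projection $\Pi(\cdot|\mathcal T)$ underlying the orthogonal decomposition $\mathcal Z=\mathcal T\oplus\mathcal T^\perp$. This is secured by the fact that $\mathcal T=\cspn\BK{S}$ is by construction a \emph{closed} subspace of the Hilbert space $\mathcal Y$, so the projection theorem applies verbatim and $\Pi(\delta|\mathcal T)$ exists for every $\delta\in\mathcal Z$. The earlier assumption $K_{jj}=\Avg{S_j,S_j}<\infty$ guarantees $\mathcal T\subseteq\mathcal Z\subseteq\mathcal Y$, and the preceding error-operator discussion guarantees $\mathcal D\neq\varnothing$, so no additional regularity hypotheses are required.
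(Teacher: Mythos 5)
Your proof is correct and follows essentially the same route as the paper's: orthogonal decomposition of any $\delta\in\mathcal D$ with respect to $\mathcal T\oplus\mathcal T^\perp$, verification that $\delta_{\rm eff}=\Pi(\delta|\mathcal T)\in\mathcal D$, and the Pythagorean theorem to obtain both the minimum and uniqueness. Your affine-subspace framing $\mathcal D=\delta+\mathcal T^\perp$ and the equality-case argument are just a cleaner repackaging of the paper's contradiction-based uniqueness step, and your explicit remark on the closedness of $\mathcal T$ addresses a point the paper leaves implicit.
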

\begin{proof}
  The proof is similar to the classical one
  \cite{bickel93,tsiatis06}. First note that, since
  $\mathcal D \subseteq \mathcal Z = \mathcal T \oplus \mathcal
  T^\perp$, any $\delta \in \mathcal D$ can always be decomposed into
\begin{align}
\delta &= \delta_{\rm eff} + h,
&
\delta_{\rm eff} &= \Pi(\delta|\mathcal T),
&
h &= \Pi(\delta|\mathcal T^\perp).
\end{align}
This implies
$\avg{S,\delta_{\rm eff}} = \avg{S,\delta-h} = \avg{S,\delta} =
\partial\beta$, and therefore $\delta_{\rm eff} \in \mathcal D$. Now
the Pythagorean theorem gives
\begin{align}
\norm{\delta}^2 &= \norm{\delta_{\rm eff}}^2 
+ \norm{h}^2 \ge \norm{\delta_{\rm eff}}^2,
\end{align}
which results in Eq.~(\ref{GHB2}).

To prove the uniqueness of $\delta_{\rm eff}$ in $\mathcal D$, suppose
that there exists another $\delta'\in\mathcal D$ that gives
$\norm{\delta'} = \norm{\delta_{\rm eff}}$.  Define
$g = \delta'-\delta_{\rm eff}$. Since
$\avg{S,g} = \avg{S,\delta'}-\avg{S,\delta_{\rm eff}} =
\partial\beta-\partial\beta = 0$, $g \in \mathcal T^\perp$, and the
Pythagorean theorem yields
$\norm{\delta'}^2 = \norm{\delta_{\rm eff}}^2+\norm{g}^2$.  This
implies that $\norm{g} = 0$ and $g = 0$, contradicting the assumption
that $\delta' \neq \delta_{\rm eff}$.  Hence $\delta_{\rm eff}$ must
be unique, and $\Pi(\delta|\mathcal T)$ for
any $\delta \in \mathcal D$ results in the same $\delta_{\rm eff}$.
\end{proof}

Figure~\ref{efficient_influence} illustrates all the Hilbert-space
concepts involved in Theorem~\ref{thm_GHB}. 

\begin{figure}[htbp!]
\centerline{\includegraphics[width=0.45\textwidth]{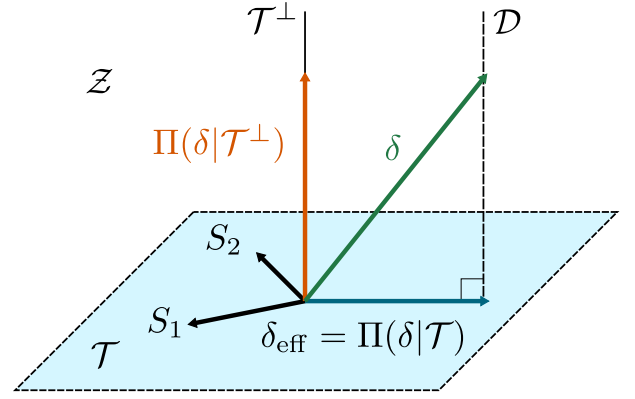}}
\caption{\label{efficient_influence} The whole space in the picture
  represents $\mathcal Z$, the space of zero-mean operators.
  $\mathcal T$ is the tangent space spanned by the tangent set
  $\{S\}$.  $\mathcal T^\perp$ is the orthocomplement, which contains
  elements orthogonal to all the scores. $\mathcal D$ is the set of
  influence operators, which all have a fixed projection in
  $\mathcal T$ determined by $\partial\beta$.  $\delta$ is an
  influence operator in $\mathcal D$. The projection of $\delta$ into
  $\mathcal T$ gives the efficient influence $\delta_{\rm eff}$, which
  has the smallest norm among all the influence operators.
  $\Pi(\delta|\mathcal T^\perp)$ is the projection of $\delta$ into
  $\mathcal T^\perp$.}
\end{figure}

Before we apply the theorem to examples, we list a couple of important
corollaries. The first corollary reproduces the original Helstrom
bound given by Eq.~(\ref{HB}) and is expected from earlier
derivations; see, for example, Ref.~\cite[Eq.~(20) in
Chap.~18]{hayashi05} and Ref.~\cite[Eq.~(7.93)]{amari}. Here we simply
clarify that it is a special case of Theorem~\ref{thm_GHB}.
\begin{corollary}
\label{cor_HB}
If $p < \infty$ and $K^{-1} = \avg{S,S}^{-1}$ exists,
the GHB is equal to the original Helstrom bound given by
Eq.~(\ref{HB}).
\end{corollary}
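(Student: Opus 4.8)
The plan is to make the abstract characterization in Theorem~\ref{thm_GHB} concrete by exploiting the finite dimension of the tangent space. When $p < \infty$, the span $\spn\{S\}$ is a finite-dimensional subspace of $\mathcal Y$ and is therefore already closed, so $\mathcal T = \spn\{S\}$ and the efficient influence $\delta_{\rm eff} = \Pi(\delta|\mathcal T)$ is a genuine finite linear combination of the scores. I would accordingly write $\delta_{\rm eff} = \sum_k c_k S_k = c^\top S$ for some coefficient vector $c = (c_1,\dots,c_p)^\top \in \mathbb R^p$ that remains to be determined.

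Next I would fix the coefficients using the defining property of the influence set, namely $\delta_{\rm eff} \in \mathcal D$, which by Theorem~\ref{thm_GHB} means $\avg{S,\delta_{\rm eff}} = \partial\beta$. Expanding the left-hand side component-wise and using bilinearity of the inner product gives $\avg{S_j,\delta_{\rm eff}} = \sum_k c_k \avg{S_j,S_k} = (Kc)_j$, so the condition becomes the linear system $Kc = \partial\beta$. Since $K^{-1}$ is assumed to exist, this has the unique solution $c = K^{-1}\partial\beta$.

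Finally I would evaluate the norm directly. Using bilinearity again together with the Gram-matrix identity $K = \avg{S,S}$, one has $\norm{\delta_{\rm eff}}^2 = \avg{c^\top S, c^\top S} = c^\top \avg{S,S} c = c^\top K c$. Substituting $c = K^{-1}\partial\beta$ and using that $K$, and hence $K^{-1}$, is symmetric yields $\norm{\delta_{\rm eff}}^2 = (\partial\beta)^\top K^{-1} K K^{-1}\partial\beta = (\partial\beta)^\top K^{-1}\partial\beta$, which is exactly $\mathsf{H}$ of Eq.~(\ref{HB}). Combined with $\tilde{\mathsf{H}} = \norm{\delta_{\rm eff}}^2$ from Theorem~\ref{thm_GHB}, this establishes the corollary.

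I do not anticipate a serious obstacle, since the argument is essentially the finite-dimensional specialization of the projection formula. The one point requiring slight care is the claim that $\mathcal T = \spn\{S\}$ rather than merely its closure, so that $\delta_{\rm eff}$ is an actual finite combination $c^\top S$; this holds precisely because finite-dimensional subspaces are automatically closed. The invertibility of $K$ then supplies both existence and uniqueness of $c$, consistent with the uniqueness of $\delta_{\rm eff}$ already proved in Theorem~\ref{thm_GHB}.
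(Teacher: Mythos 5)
Your proof is correct, but it reaches the result by a different mechanism than the paper. The paper's Appendix~A starts from the variational definition of the projection, $\Pi(\delta|\mathcal T) = \argmin_{h\in\mathcal T}\norm{\delta-h}$, expands $\norm{\delta - w^\top S}^2$ as an explicit quadratic in $w$, solves the least-squares problem to get $w_{\rm min} = \avg{S,S}^{-1}\avg{S,\delta}$, and only at the last step invokes $\avg{S,\delta}=\partial\beta$. You instead bypass the minimization entirely: you use the two facts already established in Theorem~\ref{thm_GHB} --- that $\delta_{\rm eff}$ lies in $\mathcal T$ (which, as you correctly note, equals $\spn\{S\}$ without closure since finite-dimensional subspaces are closed) and that $\delta_{\rm eff}\in\mathcal D$ --- so the coefficients of $\delta_{\rm eff}=c^\top S$ are pinned down by the linear system $Kc=\partial\beta$ rather than by an optimization. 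Both arguments are elementary and rigorous; the trade-off is that the paper's route produces the general-purpose projection formula $\Pi(\delta|\mathcal T) = \avg{S,\delta}^\top\avg{S,S}^{-1}S$ valid for \emph{any} $\delta$, which the paper explicitly reuses later (e.g., for $\Pi(\delta|\mathcal T^\perp)$ in Eqs.~(\ref{holes})--(\ref{reduction}) of the constrained-bound section), whereas your argument characterizes only $\delta_{\rm eff}$ itself but is shorter and leans more directly on the structure of Theorem~\ref{thm_GHB}. One point worth making explicit in your version: invertibility of the Gram matrix $K$ means the scores are linearly independent, so the representation $c^\top S$ is itself unique, which is what makes your linear system a complete characterization; also, like the paper, you tacitly assume $\mathcal D$ is nonempty so that $\delta_{\rm eff}$ exists at all (under $K^{-1}$ existing this is automatic, since $(\partial\beta)^\top K^{-1}S$ is readily checked to be an influence operator).
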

\begin{proof}
  Delegated to Appendix~\ref{sec_HB}.
\end{proof}

Note that, unlike Eq.~(\ref{HB}), which assumes that $S$ consists of
linearly independent operators and $K$ is invertible,
Theorem~\ref{thm_GHB} works with no regard for any linear dependence
in $S$. This generalization is in fact indispensable to the
semiparametric theory, especially when the concept of parametric
submodels is introduced in Sec.~\ref{sec_submodels}.

The second corollary, which gives a scaling of the bound with the
number of object copies and is easy to prove via $K^{-1}$, requires
more effort to prove if $K^{-1}$ is to be avoided.
\begin{corollary}
\label{iid}
For a family of density operators that model
$N$ independent and identical quantum objects in the form of
\begin{align}
\mathbf F^{(N)}
&\equiv \BK{\rho(\theta)^{\otimes N}: \theta \in \Theta
\subseteq \mathbb R^{p}},
\end{align}
where the tensor power is defined as the tensor product
\begin{align}
\rho^{\otimes N} &\equiv \underbrace{\rho\otimes
\dots \otimes \rho}_{N \textrm{ terms}},
\end{align}
the efficient influence and the GHB are given by
\begin{align}
\delta_{\rm eff}^{(N)} &= \frac{U\delta_{\rm eff}^{(1)}}{\sqrt{N}},
&
\tilde{\mathsf{H}}^{(N)} &= \frac{\tilde{\mathsf{H}}^{(1)}}{N},
\label{scaling}
\end{align}
where $U$ is a map defined as
\begin{align}
U h &\equiv \frac{1}{\sqrt{N}}
\sum_{n=1}^N I^{\otimes(n-1)} \otimes h \otimes I^{\otimes(N-n)}.
\label{unitary}
\end{align}
\end{corollary}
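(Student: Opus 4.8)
The plan is to reduce the $N$-copy problem entirely to the single-copy quantities of Theorem~\ref{thm_GHB} by showing that $U$ is an isometry between the relevant zero-mean Hilbert spaces, so that it carries the single-copy tangent space, influence set, and efficient influence directly onto their $N$-copy counterparts up to the stated factors of $\sqrt N$. Throughout I write $h_{(n)} \equiv I^{\otimes(n-1)}\otimes h \otimes I^{\otimes(N-n)}$ for a single-copy operator $h$ placed on the $n$th copy, I use superscripts to distinguish the single-copy spaces $\mathcal Z^{(1)},\mathcal T^{(1)},\mathcal D^{(1)}$ from their $N$-copy analogues $\mathcal Z^{(N)},\mathcal T^{(N)},\mathcal D^{(N)}$, and I denote the two inner products by $\avg{\cdot,\cdot}^{(1)}$ and $\avg{\cdot,\cdot}^{(N)}$, the latter taken with respect to $\rho^{\otimes N}$.

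First I would identify the $N$-copy scores. Differentiating $\rho^{\otimes N}$ by the Leibniz rule and substituting $\partial_j\rho = \rho\circ S_j$ gives
\begin{align}
\partial_j \rho^{\otimes N} &= \sum_{n=1}^N \rho^{\otimes(n-1)}\otimes(\rho\circ S_j)\otimes\rho^{\otimes(N-n)},
\nonumber
\end{align}
which is readily checked to equal $\rho^{\otimes N}\circ S_j^{(N)}$ with $S_j^{(N)} = \sum_{n=1}^N (S_j)_{(n)} = \sqrt N\, U S_j$; hence $\mathcal T^{(N)} = \cspn\{U S_j\}$. The crucial step is to show that $U$ is an isometry on the zero-mean subspace. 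For $h,g\in\mathcal Z^{(1)}$ one expands
\begin{align}
\Avg{Uh,Ug}^{(N)} &= \frac1N\sum_{m,n=1}^N \trace \rho^{\otimes N}\bk{h_{(m)}\circ g_{(n)}},
\nonumber
\end{align}
and evaluates two cases: the diagonal terms $m=n$ each contribute $\trace\rho(h\circ g) = \avg{h,g}^{(1)}$ because the remaining factors trace to unity, while the off-diagonal terms $m\neq n$ factor into $(\trace\rho h)(\trace\rho g)$ and therefore vanish precisely because $h$ and $g$ are zero-mean. Summing the $N$ diagonal terms yields $\avg{Uh,Ug}^{(N)} = \avg{h,g}^{(1)}$, and the same computation with $g=I$ shows $\trace\rho^{\otimes N}(Uh)=0$, so $U$ maps $\mathcal Z^{(1)}$ isometrically into $\mathcal Z^{(N)}$. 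Since $U$ is a norm-preserving linear map, it is continuous with closed range on the complete space $\mathcal T^{(1)}$, which legitimizes the identity $\mathcal T^{(N)} = U\mathcal T^{(1)}$ even in infinite dimensions.

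With the isometry in hand, I would verify that $\tfrac1{\sqrt N}U\delta_{\rm eff}^{(1)}$ is the $N$-copy efficient influence. It lies in $\mathcal T^{(N)} = U\mathcal T^{(1)}$ because $\delta_{\rm eff}^{(1)}\in\mathcal T^{(1)}$, and it lies in $\mathcal D^{(N)}$ because
\begin{align}
\Avg{S_j^{(N)},\tfrac1{\sqrt N}U\delta_{\rm eff}^{(1)}}^{(N)} &= \Avg{U S_j,\, U\delta_{\rm eff}^{(1)}}^{(N)} = \Avg{S_j,\delta_{\rm eff}^{(1)}}^{(1)} = \partial_j\beta,
\nonumber
\end{align}
using $S_j^{(N)}=\sqrt N\,U S_j$, the isometry, and $\delta_{\rm eff}^{(1)}\in\mathcal D^{(1)}$. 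By the uniqueness part of Theorem~\ref{thm_GHB}, $\mathcal D^{(N)}\cap\mathcal T^{(N)}$ contains exactly one element, namely the efficient influence, so $\delta_{\rm eff}^{(N)} = \tfrac1{\sqrt N}U\delta_{\rm eff}^{(1)}$; the GHB then follows from the isometry as $\tilde{\mathsf H}^{(N)} = \norm{\delta_{\rm eff}^{(N)}}^2 = \tfrac1N\norm{U\delta_{\rm eff}^{(1)}}^2 = \tfrac1N\norm{\delta_{\rm eff}^{(1)}}^2 = \tilde{\mathsf H}^{(1)}/N$. I expect the isometry computation to be the main obstacle, both because the vanishing of the off-diagonal terms is exactly where the restriction to $\mathcal Z$ does the essential work—on all of $\mathcal Y$ the map $U$ would not preserve the inner product—and because in infinite dimensions one must confirm that $U$ sends closed spans to closed spans, which is what secures $\mathcal T^{(N)} = U\mathcal T^{(1)}$ and hence the applicability of Theorem~\ref{thm_GHB}.
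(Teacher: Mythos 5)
Your proof is correct, and its overall architecture matches the paper's own proof in Appendix~\ref{sec_iid}: both arguments hinge on showing that $U$ is an isometry from the single-copy zero-mean space into the $N$-copy one and that the $N$-copy scores satisfy $S^{(N)} = \sqrt{N}\,US$, so that $\mathcal T^{(N)}$ is the isometric image of the single-copy tangent space. Where you differ is the finishing move. The paper takes an arbitrary influence operator $\delta$, shows $U\delta/\sqrt{N} \in \mathcal D^{(N)}$, and then computes $\delta_{\rm eff}^{(N)} = \Pi\bk{U\delta/\sqrt{N}\,|\,\mathcal T^{(N)}} = U\Pi(\delta|\mathcal T)/\sqrt{N}$, which relies on the auxiliary fact that projection commutes with the isometry, $\Pi(Uh|\mathcal T^{(N)}) = U\Pi(h|\mathcal T)$ --- a lemma the paper asserts ``is not difficult to prove'' but does not prove. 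You instead verify that the single candidate $U\delta_{\rm eff}^{(1)}/\sqrt{N}$ lies in $\mathcal D^{(N)} \cap \mathcal T^{(N)}$ and invoke the uniqueness part of Theorem~\ref{thm_GHB}, under which $\mathcal D \cap \mathcal T$ is a singleton consisting of the efficient influence; this bypasses the projection-commutation lemma entirely. You also supply details the paper compresses into ``it can be shown'': the Leibniz-rule derivation of $S^{(N)}$, the diagonal/off-diagonal split in the isometry computation (making explicit that the zero-mean restriction is exactly what kills the cross terms), and the closed-range argument guaranteeing $\cspn\BK{US} = U\cspn\BK{S}$ in infinite dimensions. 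Each version buys something: the paper's projection identity is reusable (it is what Appendix~\ref{sec_iid2} leans on for the mother-family generalization), while your membership-plus-uniqueness route is logically leaner and leaves fewer unproved steps.
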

\begin{proof}
Delegated to Appendix~\ref{sec_iid}.
\end{proof}

\subsection{\label{sec_gradient}Influence
operator via a functional gradient}

Theorem~\ref{thm_GHB} is useful if an influence operator
$\delta \in \mathcal D$ can be found and $\Pi(\delta|\mathcal T)$ is
tractable. One way of deriving an influence operator is to assume that
the parameter of interest is a functional $\beta[\rho]$ and consider a
derivative of $\beta[\rho]$ in the ``direction'' of an operator $h$
given by
\begin{align}
D_h\beta[\rho]
&\equiv \lim_{\epsilon \to 0}
\frac{\beta[\rho + \epsilon \rho \circ h]-\beta[\rho]}{\epsilon}.
\label{fdiff}
\end{align}
Assume that the directional derivative can be expressed as
\begin{align}
D_h\beta[\rho] = \trace (\rho \circ h) \tilde\beta
= \avg{h,\tilde\beta}
\quad
\forall h \in \mathcal Y
\label{b}
\end{align}
in terms of a $\tilde\beta \in \mathcal Y$, hereafter called a
gradient of $\beta[\rho]$. Any ordinary partial derivative of $\beta$
becomes
\begin{align}
\partial_j\beta[\rho] &= 
\lim_{\epsilon \to 0} 
\frac{\beta[\rho + \epsilon \partial_j\rho]-\beta[\rho]}{\epsilon}
= D_{S_j}\beta[\rho] = \avg{S_j,\tilde\beta}.
\end{align}
Projecting the gradient into $\mathcal Z$ then gives an influence
operator, viz.,
\begin{align}
\delta &= \Pi(\tilde\beta|\mathcal Z) = 
\tilde\beta - \Pi(\tilde\beta|\mathcal Z^\perp) =
\tilde\beta - \avg{\tilde\beta,I} \in \mathcal D,
\label{proj_b}
\end{align}
as it is straightforward to check that $\avg{\delta,I} = 0$ and
$\avg{S,\delta}= \partial\beta$.  The top flowchart in
Fig.~\ref{gradients} illustrates the steps to obtain $\delta$ from
$\beta[\rho]$.  $\tilde\beta$, $\delta$, and $\delta_{\rm eff}$ are
all gradients that satisfy Eq.~(\ref{b}); the difference lies in the
set of directions to which each is restricted. $\delta$, for instance,
is restricted to $\mathcal Z$ and orthogonal to $\mathcal Z^\perp$,
while $\delta_{\rm eff}$ is restricted to $\mathcal T$ and orthogonal
to $\mathcal T^\perp$ \footnote{More precisely, $\tilde\beta$ is the
  unique Riesz-Fr\'echet representation \cite{reed_simon} of
  $D_h\beta$ as a continuous linear functional of $h \in \mathcal Y$,
  $\delta$ is that for $h \in \mathcal Z \subset \mathcal Y$, and
  $\delta_{\rm eff}$ is that for
  $h \in \mathcal T \subseteq \mathcal Z \subset \mathcal Y$
  \cite{reed_simon,bickel93}.  The existence of each relies on
  $D_h\beta$ being continuous with respect to $h$ in each domain, so
  the existence of $\tilde\beta$ implies that of $\delta$ and
  $\delta_{\rm eff}$.}.

\begin{figure}[htbp!]
\centerline{\includegraphics[width=0.45\textwidth]{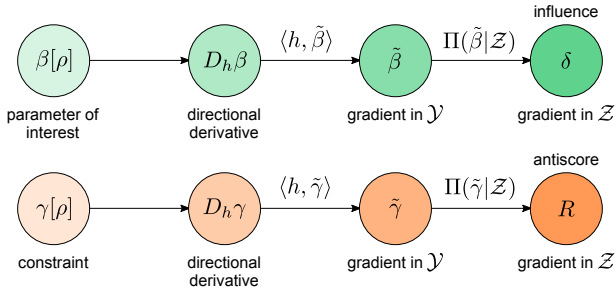}}
\caption{\label{gradients}Top (for Sec.~\ref{sec_gradient}): steps to
  obtain an influence operator $\delta$ from the functional
  $\beta[\rho]$ via Eqs.~(\ref{fdiff}), (\ref{b}), and
  (\ref{proj_b}). Bottom (for Sec.~\ref{sec_antiscore}): steps to
  obtain the antiscore operators that span $\mathcal T^\perp$ via
  Eqs.~(\ref{fdiffc}) and (\ref{proj_g}).  }
\end{figure}

Now consider some examples. The first is $\beta=\trace \rho Y$ for a
given (i.e., $\theta$-independent) observable $Y$, which leads to
\begin{align}
D_h\beta &= \trace (\rho \circ h) Y = \Avg{h,Y},
&
\delta &= Y - \beta.
\label{delta_Y}
\end{align}
The second example is the purity $\beta = \trace \rho^2$, which leads
to
\begin{align}
D_h\beta &= \trace[(\rho\circ h)\rho + \rho(\rho\circ h)] = \Avg{h,2\rho},
&
\delta &= 2(\rho - \beta).
\label{delta_purity}
\end{align}
The final example is the relative entropy
$\beta =\trace \rho (\ln\rho-\ln\sigma)$ \cite{hayashi,holevo12}.
where $\ln\rho = \sum_{j}(\ln \lambda_j)\ket{e_j}\bra{e_j}$ and
$\sigma$ is a given density operator with
$\supp(\sigma) \supseteq \supp(\rho)$. The differentiability of
$\beta$ is not a trivial question when $d = \infty$ \cite{holevo12},
but for $d < \infty$ it can be done to give
\begin{align}
D_h \beta &= \Avg{h,\ln\rho-\ln\sigma},
&
\delta &= \ln\rho-\ln\sigma - \beta,
\label{delta_entropy}
\end{align}
where $D_h\beta$ uses the fact that
$\trace \rho[\ln(\rho+\epsilon \rho \circ h)-\ln\rho]$ is second order
in $\epsilon$ for any $h \in \mathcal Z$
\cite[Theorem~6.3]{hayashi}. The von Neumann entropy is a simple
variation of this example.

\subsection{\label{sec_tangent}Projection into the tangent space}
The next step is $\Pi(\delta|\mathcal T)$. If the family of density
operators is large enough, $\mathcal T$ can fill the entire
$\mathcal Z$ and the projection becomes trivial. We call a family
full-dimensional if its tangent space at each $\rho$ satisfies
\begin{align}
\mathcal T = \mathcal Z.
\end{align}
For a specific example, consider the orthonormal basis of $\mathcal H$
given by Eq.~(\ref{Q}) and the most general parametrization of $\rho$
for $d < \infty$ given by \cite{kahn09} 
\begin{align}
\mathbf F_0 \equiv \BK{\rho(\theta) = \sum_{j} \theta_{aj} a_j + 
\sum_{k_1 < k_2}\bk{\theta_{bk} b_k + \theta_{ck} c_k}},
\label{F0}
\end{align}
where
\begin{align}
a_j &= \ket{j}\bra{j},
\\
b_k &= \frac{1}{2}
\bk{\ket{k_1}\bra{k_2} + \ket{k_2}\bra{k_1}}, \quad k_1 < k_2,
\\
c_k &= \frac{i}{2}
\bk{\ket{k_1}\bra{k_2} - \ket{k_2}\bra{k_1}}, \quad k_1 < k_2,
\end{align}
and a special entry $\theta_{a0}$ is removed from the parameters and
set as $\theta_{a0}=1- \sum_{j \neq 0}\theta_{aj}$, such that
$\trace \rho(\theta) = \sum_j\theta_{aj} = 1$ and 
\begin{align}
p = d^2 -1.
\end{align}
$\partial\rho$ is then given by
\begin{align}
\partial_{aj} \rho &= a_j - a_0,
&
\partial_{bk} \rho &= b_k,
&
\partial_{ck} \rho &= c_k.
\label{drho}
\end{align}
The next theorem is a key step in deriving simple analytic results.
\begin{theorem} The $\mathbf F_0$ family is full-dimensional.
\label{TeqZ}
\end{theorem}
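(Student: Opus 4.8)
The plan is to exploit the finite dimension: since $d<\infty$, the operator Hilbert space $\mathcal Y$ is finite-dimensional, every subspace is closed, and the orthogonal decomposition $\mathcal Z = \mathcal T \oplus \mathcal T^\perp$ holds with no analytic caveats. Because $\mathcal T \subseteq \mathcal Z$ is already built into Eq.~(\ref{tangent}), proving full-dimensionality $\mathcal T = \mathcal Z$ is equivalent to proving $\mathcal T^\perp = \{0\}$, and I would attack the latter. The crucial move is to avoid ever writing down the scores $S_j$: solving $\partial_j\rho = \rho\circ S_j$ explicitly for all $p = d^2-1$ directions would be hopeless. Instead I would use the cyclic identity Eq.~(\ref{cyclic}), which says $\Avg{S_j,h} = \trace(\partial_j\rho)h$, so that an element $h \in \mathcal Z$ lies in $\mathcal T^\perp$ if and only if $\trace(\partial_j\rho)h = 0$ for every parameter $j$. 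This reduces the whole question to the explicitly known derivatives in Eq.~(\ref{drho}).

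The execution is then a single linear-algebra observation. The derivatives $\{\partial_{aj}\rho,\partial_{bk}\rho,\partial_{ck}\rho\} = \{a_j-a_0,\,b_k,\,c_k\}$ are self-adjoint, traceless, and linearly independent, and they number $(d-1)+2\binom{d}{2} = d^2-1$; hence they form a Hermitian basis of the entire space of traceless self-adjoint operators. Consequently $h\in\mathcal T^\perp$ forces $\trace A h = 0$ for \emph{every} traceless self-adjoint $A$, which (decomposing $h$ into its trace part plus a traceless part and pairing with the latter) leaves only $h\propto I$. Equivalently, reading the conditions off Eq.~(\ref{drho}) one by one, $\trace(a_j-a_0)h=0$ makes all diagonal entries of $h$ equal, while $\trace b_k h = \real h_{k_1k_2}$ and $\trace c_k h = \imag h_{k_1k_2}$ kill all off-diagonal entries, so $h = h_{00}I$. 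The defining zero-mean condition of $\mathcal Z$ in Eq.~(\ref{Z}), namely $\Avg{h,I}=\trace\rho h = 0$, then forces the constant $h_{00}=0$, giving $h=0$. Thus $\mathcal T^\perp=\{0\}$ and $\mathcal T = \mathcal Z$.

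The step demanding the most care is the passage through the weighted geometry when $\rho$ is not full rank. In that case the elements of $\mathcal Y$ are equivalence classes under $\norm{\cdot}$, the weighted inner product is degenerate, and a score solving $\partial_j\rho=\rho\circ S_j$ can fail to exist for directions that leave $\supp(\rho)$ (for instance $b_k$ or $c_k$ coupling into the kernel of $\rho$), so both $\mathcal T$ and the identity Eq.~(\ref{cyclic}) need to be handled delicately. At a generic, full-rank $\rho$---the case of real interest---all the scores exist, Eq.~(\ref{cyclic}) is unambiguous, and the argument above is immediate; I would prove the theorem there and treat the lower-rank boundary separately. Everything beyond this point is routine finite-dimensional bookkeeping, which is precisely why the tangent space collapses onto all of $\mathcal Z$.
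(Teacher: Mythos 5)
Your proof is correct and follows essentially the same route as the paper's: both reduce the claim to $\mathcal T^\perp = \{0\}$, use the cyclic identity Eq.~(\ref{cyclic}) to replace $\avg{S_j,h}=0$ by $\trace(\partial_j\rho)\hat h = 0$, read off from Eq.~(\ref{drho}) that $\hat h$ must be a multiple of the identity, and then invoke the zero-mean condition defining $\mathcal Z$ to force $\hat h = 0$. Your additional remarks---the repackaging of the entry-wise computation as ``the derivatives form a basis of the traceless self-adjoint operators,'' and the caution about rank-deficient $\rho$ (which the paper handles implicitly through its standing assumption that the scores exist with $K_{jj}=\avg{S_j,S_j}<\infty$)---refine but do not alter the argument.
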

\begin{proof}
  Consider the solution to $\avg{S,h} = 0$ for an $h\in \mathcal Z$.
  All operators are bounded if $d < \infty$.  We can then use
  Eqs.~(\ref{cyclic}) and (\ref{drho}) to obtain
\begin{align}
\trace(\partial_{aj}\rho) \hat h &= 
\bra{j}\hat h\ket{j} - \bra{0}\hat h\ket{0} = 0,
\\
\trace(\partial_{bk}\rho) \hat h &= 
\real \bra{k_1}\hat h\ket{k_2} = 0,
\quad
k_1 < k_2,
\\
\trace(\partial_{ck}\rho) \hat h &= \imag \bra{k_1}\hat h\ket{k_2} = 0,
\quad
k_1 < k_2,
\end{align}
where $\hat h$ is any operator in the equivalence class of $h$. Thus
all the diagonal entries of $\hat h$ are equal to
$\bra{0}\hat h\ket{0}$, and all the off-diagonal entries are zero. In
other words, $\hat h = \bra{0}h\ket{0} \hat I$, where $\hat I$ is the
identity operator.  But $h \in \mathcal Z$ also means that
$\trace \rho \hat h = \bra{0}\hat h\ket{0} = 0$, resulting in
$\hat h = 0$ as the only solution. Hence $\mathcal T^\perp = \{0\}$
contains only the zero element, and $\mathcal T = \mathcal Z$.

\end{proof}

$\mathbf F_0$ implies that the experimenter knows nothing about the
density operator, apart from the Hilbert space $\mathcal H$ on which
it operates. Despite the high dimension of the family,
Theorems~\ref{thm_GHB} and \ref{TeqZ} turn the problem into a
trivial exercise once an influence operator has been found, since a
$\delta \in \mathcal D \subseteq \mathcal Z$ is already in
$\mathcal Z = \mathcal T$ and hence efficient.  Corollary~\ref{iid}
can then be used to extend the result for $N$ copies. For
$\beta = \trace \rho Y$, Eq.~(\ref{delta_Y}) leads to
\begin{align}
\tilde{\mathsf{H}}^{(N)} &= \frac{\norm{\delta}^2}{N}
= \frac{1}{N}\trace \rho\bk{Y - \beta}^2.
\label{GHB_Y}
\end{align}
This implies that a von Neumann measurement of $Y$ of each copy and
taking the sample mean of the outcomes are already efficient; no other
measurement can do better in terms of unbiased estimation. For
$\beta = \trace \rho^2$, Eq.~(\ref{delta_purity}) leads to
\begin{align}
\tilde{\mathsf{H}}^{(N)} &= 
\frac{\norm{\delta}^2}{N}= \frac{4}{N}\trace \rho\bk{\rho-\beta}^2,
\label{GHB_purity}
\end{align}
and for $\beta = \trace \rho (\ln\rho- \ln\sigma)$,
Eq.~(\ref{delta_entropy}) leads to
\begin{align}
\tilde{\mathsf{H}}^{(N)} &= 
\frac{\norm{\delta}^2}{N}=\frac{1}{N}
\trace \rho \bk{\ln\rho- \ln\sigma - \beta}^2.
\label{GHB_entropy}
\end{align}
Intriguingly, this expression coincides with the information variance
that has found uses in other contexts of quantum information theory,
such as quantum hypothesis testing
\cite{tomamichel13,*li14a,*tomamichel16}.

Deriving Eqs.~(\ref{GHB_Y})--(\ref{GHB_entropy}) via the conventional
brute-force method would entail the following steps:
\begin{enumerate}
\item Assume the $\mathbf F_0$ family of density operators given by
  Eq.~(\ref{F0}), with $p = d^2 - 1$ parameters.

\item Compute the $p$ score operators via Eq.~(\ref{sld}).

\item Compute the $p$-by-$p$ Helstrom information matrix $K$ via
  Eq.~(\ref{K}).

\item Compute the inverse $K^{-1}$.

\item Compute $\beta(\theta)$ via Eq.~(\ref{F0}),
$\partial\beta(\theta)$, and the Helstrom bound via Eq.~(\ref{HB}).
\end{enumerate}
While this method has been used to produce Eq.~(\ref{GHB_Y})
\cite{watanabe10}, it is less clear whether it can easily give
Eq.~(\ref{GHB_purity}) or Eq.~(\ref{GHB_entropy}).  
  Contrast the brute-force method with the proposal here:
\begin{enumerate}
\item Compute the influence operator $\delta$ via a
functional derivative of $\beta[\rho]$ according to Sec.~\ref{sec_gradient}.

\item Find the tangent space $\mathcal T$ of the density-operator
  family or the orthocomplement $\mathcal T^\perp$.  For example,
  Theorem~\ref{TeqZ} shows that $\mathcal T$ is full-dimensional for
  the family of arbitrary density operators, while Sec.~\ref{sec_con}
  later shows that $\mathcal T^\perp$ may remain tractable for smaller
  families.

\item Compute
  $\delta_{\rm eff} = \Pi(\delta|\mathcal T) = \delta-
  \Pi(\delta|\mathcal T^\perp)$ and
  $\tilde{\mathsf{H}} = \norm{\delta_{\rm eff}}^2 = \trace \rho
  \delta_{\rm eff}^2$.
\end{enumerate}
Each step is tractable for all the examples here, regardless of the
dimensions.

Equations~(\ref{GHB_Y})--(\ref{GHB_entropy}) are the quantum bounds
promised in Sec.~\ref{sec_preview}, although they are merely the
simplest examples of what the semiparametric methodology can offer, as
Secs.~\ref{sec_con}--\ref{sec_displacement} later show.

\section{\label{sec_submodels}Parametric submodels}

The proof of Theorem~\ref{TeqZ} works only in the finite-dimensional
case ($p = d^2-1 < \infty$).  For infinite-dimensional problems, the
beautiful concept of parametric submodels
\cite{stein56,ibragimov81,bickel93,tsiatis06} offers a more rigorous
approach. Let 
\begin{align}
\mathbf G \equiv \BK{\rho(g): g \in \mathcal G}
\label{mother}
\end{align}
be a ``mother'' density-operator family, where $\mathcal G$ may be an
infinite-dimensional space. The density operators are still assumed to
operate on a common separable Hilbert space $\mathcal H$. Denote the
true density operator in the family as $\rho$.  A parametric submodel
$\mathbf F^\sigma$ is defined as any subset of $\mathbf G$ that
contains the true $\rho$ and has the parametric form of
Eq.~(\ref{family}). To wit,
\begin{align}
\mathbf F^\sigma &\equiv \BK{\sigma(\theta): \theta \in \Theta^\sigma
\subseteq \mathbb R^s, 
\sigma(\phi) = \rho} \subseteq \mathbf G,
\label{submodel}
\end{align}
where $s$ denotes the dimension of the parameter and $\phi$ denotes
the parameter value at which $\sigma(\phi) = \rho$ is the truth; both
may be specific to the submodel.  In the language of geometry
\cite{hayashi,hayashi05,uhlmann93}, each $\mathbf F^\sigma$ is an
$s$-dimensional surface in $\mathbf G$, and all the surfaces are
required to intersect at $\rho$.  Figure~\ref{manifolds} illustrates
the concept.

\begin{figure}[htbp!]
\centerline{\includegraphics[width=0.35\textwidth]{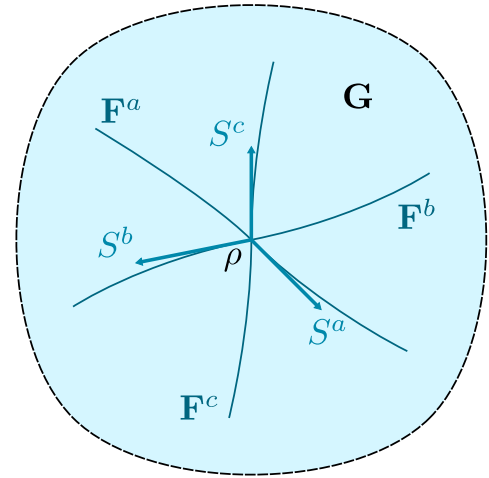}}
\caption{\label{manifolds} The space represents $\mathbf G$, a mother
  family of density operators. The true density operator is denoted as
  $\rho$. Parametric submodels are represented by curves in
  $\mathbf G$ that intersect at $\rho$. Each score $S^\sigma$ is a
  tangent vector that quantifies the ``velocity'' of a
  density-operator trajectory in a certain direction.}
\end{figure}

Each submodel $\mathbf F^\sigma$ is assumed
to be smooth enough for scores to be defined in the
same way as before by
\begin{align}
(\partial\sigma)_{\theta = \phi} &= \rho \circ
S^\sigma,
\end{align}
which denotes a system of $s$ equations given by
\begin{align}
\left.\partial_j\sigma(\theta)\right|_{\theta = \phi}
&= \sigma(\phi) \circ S_j^\sigma(\phi) = \rho \circ S_j^\sigma(\phi).
\end{align}
 As everything is evaluated at the true $\rho$, the scores of all
  submodels in fact live in the same Hilbert space $\mathcal Z$ with
  respect to $\rho$.  Let the set of \emph{all} parametric submodels
  of $\mathbf G$ with respect to the truth be
\begin{align}
\mathcal F &\equiv \BK{\mathbf F^\sigma: \sigma \in \mathcal S},
\end{align}
where $\mathcal S$ denotes the set of indices that label all
  the submodels.  Define the tangent set as the set of the scores
from all such parametric submodels of $\mathbf G$, viz.,
\begin{align}
\BK{S} &\equiv \bigcup_{\sigma \in \mathcal S} \BK{S^\sigma},
\end{align}
and the tangent space as the span of the set, viz.,
\begin{align}
\mathcal T &\equiv \cspn\BK{S} \subseteq \mathcal Z.
\label{tangent_mother}
\end{align}
An influence operator is now defined as any operator that satisfies
the unbiasedness condition for all submodels with respect to
$\{S\}$.  The condition can be expressed as
\begin{align}
\Avg{S^\sigma,\delta} = (\partial\beta)_{\theta = \phi}\quad 
\forall \mathbf F^\sigma \in \mathcal F,
\label{local_unbiased}
\end{align}
where $(\partial\beta)_{\theta=\phi}$ is specific to each submodel.
If $\avg{S,\delta} = \partial\beta$ in Eq.~(\ref{influence_set}) is
taken to mean Eq.~(\ref{local_unbiased}), then the influence-operator
set $\mathcal D$ is still defined by Eq.~(\ref{influence_set}).  The
error operator given by Eq.~(\ref{error}) for an unbiased measurement
still satisfies Eq.~(\ref{local_unbiased}) by the generic arguments in
Ref.~\cite[Sec.~6.2]{holevo11}, which apply to any submodel, so the
error operator remains in $\mathcal D$, and Eq.~(\ref{GHB}) still
holds. Theorem~\ref{thm_GHB} can now be extended for the mother
family.

\begin{theorem}
\label{thm_SHB}
The GHB in Eq.~(\ref{GHB}) for the mother family $\mathbf G$ is given
by
\begin{align}
\tilde{\mathsf{H}} = \min_{\delta \in \mathcal D} \norm{\delta}^2
= \norm{\delta_{\rm eff}}^2,
\end{align}
where the efficient influence $\delta_{\rm eff}$ is the unique element
in the influence-operator set $\mathcal D$ given by
\begin{align}
\delta_{\rm eff} &= \Pi(\delta|\mathcal T),
\end{align}
$\delta$ is any influence operator in $\mathcal D$, and $\mathcal T$
is the tangent space spanned by the scores of all parametric submodels
of $\mathbf G$.
\end{theorem}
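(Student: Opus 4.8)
The plan is to reprise the proof of Theorem~\ref{thm_GHB} essentially verbatim, the one new ingredient being the observation that the enlarged tangent space $\mathcal{T}$ in Eq.~(\ref{tangent_mother}) remains a \emph{closed} subspace of the Hilbert space $\mathcal{Z}$, since it is by construction a closed linear span. Closedness is the only structural property of $\mathcal{T}$ actually used in the finite-dimensional argument, so the cardinality of the tangent set $\{S\}$---which may now be uncountable, being a union over all submodels $\sigma \in \mathcal{S}$---is immaterial.

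First I would invoke the Hilbert-space projection theorem: because $\mathcal{T}$ is closed in $\mathcal{Z}$, one has the orthogonal decomposition $\mathcal{Z} = \mathcal{T} \oplus \mathcal{T}^\perp$ with a well-defined orthogonal projection $\Pi(\cdot|\mathcal{T})$, and $\mathcal{T}^\perp$ consists precisely of those $h \in \mathcal{Z}$ with $\Avg{S^\sigma,h} = 0$ for \emph{every} submodel score, since orthogonality to a closed span is equivalent to orthogonality to its spanning set. Then, for any $\delta \in \mathcal{D}$, I would decompose $\delta = \delta_{\rm eff} + h$ with $\delta_{\rm eff} = \Pi(\delta|\mathcal{T})$ and $h = \Pi(\delta|\mathcal{T}^\perp)$. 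Since $h$ is orthogonal to every score, $\Avg{S^\sigma,\delta_{\rm eff}} = \Avg{S^\sigma,\delta-h} = \Avg{S^\sigma,\delta} = (\partial\beta)_{\theta=\phi}$ holds for all submodels simultaneously, so $\delta_{\rm eff}$ satisfies the full unbiasedness condition Eq.~(\ref{local_unbiased}) and therefore lies in $\mathcal{D}$. The Pythagorean theorem yields $\norm{\delta}^2 = \norm{\delta_{\rm eff}}^2 + \norm{h}^2 \ge \norm{\delta_{\rm eff}}^2$, with equality when $h=0$, which delivers the minimizer. Uniqueness follows exactly as in Theorem~\ref{thm_GHB}: any other minimizer $\delta'\in\mathcal{D}$ gives $g=\delta'-\delta_{\rm eff}$ with $\Avg{S^\sigma,g}=0$ for all $\sigma$, hence $g\in\mathcal{T}^\perp$ and $\norm{\delta'}^2=\norm{\delta_{\rm eff}}^2+\norm{g}^2$, forcing $\norm{g}=0$ and $g=0$.

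The main conceptual hurdle---and the reason a separate theorem is warranted rather than a one-line appeal to Theorem~\ref{thm_GHB}---is ensuring that the \emph{single} operator $\delta_{\rm eff}$ meets the unbiasedness conditions of \emph{all} submodels at once, each of which carries its own gradient $(\partial\beta)_{\theta=\phi}$ and its own parameter dimension $s$. This is handled uniformly by the closed-span definition of $\mathcal{T}$: the orthocomplement is characterized by a single orthogonality condition against the entire tangent set, so subtracting any element of $\mathcal{T}^\perp$ leaves every submodel inner product unchanged simultaneously. The only prerequisite I would take as given is that the measurement error operator in Eq.~(\ref{error}) lies in $\mathcal{D}$ under this multi-submodel reading of Eq.~(\ref{influence_set}); the excerpt already secures this by noting that the argument of Ref.~\cite[Sec.~6.2]{holevo11} applies to each submodel separately.
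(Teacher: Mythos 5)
Your proposal is correct and follows essentially the same route as the paper, whose own proof is just the one-line remark that the argument of Theorem~\ref{thm_GHB} goes through verbatim once $\{S\}$ is taken to be the union of scores over all parametric submodels. Your expansion---noting that closedness of $\mathcal{T}$ is the only structural property needed, that orthogonality to the closed span is equivalent to orthogonality to every score, and that the error operator's membership in $\mathcal{D}$ is already secured by the multi-submodel reading of Eq.~(\ref{local_unbiased})---is exactly the content the paper leaves implicit.
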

\begin{proof}
  The proof is identical to that of Theorem~\ref{thm_GHB} if one
  takes $\{S\}$ to be the tangent set containing the scores of all
  parametric submodels.
\end{proof}

Corollary~\ref{iid} can also be generalized in an almost identical
way, although the proof requires more careful thought.
\begin{corollary}
\label{iid2}
For a family of density operators that model $N$ independent and
identical quantum objects in the form of
\begin{align}
\mathbf G^{(N)} \equiv \BK{\rho(g)^{\otimes N}: g\in\mathcal G},
\label{GN}
\end{align}
the efficient influence and the GHB are given by
\begin{align}
\delta_{\rm eff}^{(N)} &= \frac{ U\delta_{\rm eff}^{(1)}}{\sqrt{N}},
&
\tilde{\mathsf{H}}^{(N)} &= \frac{\tilde{\mathsf{H}}^{(1)}}{N},
\label{scaling2}
\end{align}
where $\delta_{\rm eff}^{(1)}$ and $\tilde{\mathsf{H}}^{(1)}$ are those for the
$N = 1$ family according to Theorem~\ref{thm_SHB} and $U$ is the
map given by Eq.~(\ref{unitary}).
\end{corollary}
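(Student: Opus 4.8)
The plan is to mirror the proof of Corollary~\ref{iid}, but to replace every appeal to $K^{-1}$ by geometric arguments built around the map $U$ of Eq.~(\ref{unitary}), since $K^{-1}$ need not exist in the submodel setting. The single indispensable fact about $U$ is that it is a linear \emph{isometry} from the single-copy space $\mathcal Z$ (with respect to $\rho$) into the $N$-copy space (with respect to $\rho^{\otimes N}$). I would verify this by expanding $\Avg{Uh,Ug}$ for $h,g\in\mathcal Z$ into a double sum over the tensor slots $n,m$: the diagonal terms $n=m$ each reduce to $\trace\rho(h\circ g)=\Avg{h,g}$ after tracing out the other copies, while every off-diagonal term $n\neq m$ factorizes into $(\trace\rho h)(\trace\rho g)$ and vanishes precisely because $h,g$ are zero-mean. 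The prefactor $1/N$ then turns the $N$ surviving diagonal terms into $\Avg{Uh,Ug}=\Avg{h,g}$, so $U$ preserves norms and inner products; one also checks $\trace\rho^{\otimes N}(Uh)=0$, so $Uh$ indeed lies in the $N$-copy zero-mean space.

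The crux is the identification of the $N$-copy tangent space. Here I would use that $\mathbf G^{(N)}$ of Eq.~(\ref{GN}) consists \emph{only} of product states $\rho(g)^{\otimes N}$, so every parametric submodel of $\mathbf G^{(N)}$ is forced to be the $N$-th tensor power $\{\sigma(\theta)^{\otimes N}\}$ of a single-copy submodel $\{\sigma(\theta)\}$ of $\mathbf G$. Differentiating $\sigma(\theta)^{\otimes N}$ by the product rule and inserting $\partial_j\sigma=\rho\circ S_j^\sigma$ shows the resulting $N$-copy score is the symmetric sum $\sum_n \rho^{\otimes(n-1)}\otimes(\rho\circ S_j^\sigma)\otimes\rho^{\otimes(N-n)}=\rho^{\otimes N}\circ(\sqrt N\,U S_j^\sigma)$, i.e.\ it equals $\sqrt N\,U S_j^\sigma$. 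Thus the tangent set of $\mathbf G^{(N)}$ is exactly $\{\sqrt N\,U S^\sigma\}$. Since $U$ is an isometry it maps closed spans onto closed spans, and therefore $\mathcal T^{(N)}=\cspn\{U S^\sigma\}=U\mathcal T^{(1)}$, where $\mathcal T^{(1)}$ is the single-copy tangent space.

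With the tangent space in hand the rest is routine. I would propose the candidate $\delta^{(N)}_{\rm eff}=U\delta^{(1)}_{\rm eff}/\sqrt N$ and verify both membership conditions of Theorem~\ref{thm_SHB}: it lies in $\mathcal T^{(N)}=U\mathcal T^{(1)}$ because $\delta^{(1)}_{\rm eff}\in\mathcal T^{(1)}$, and it lies in $\mathcal D^{(N)}$ because for each submodel $\Avg{\sqrt N\,U S^\sigma,\,U\delta^{(1)}_{\rm eff}/\sqrt N}=\Avg{U S^\sigma,U\delta^{(1)}_{\rm eff}}=\Avg{S^\sigma,\delta^{(1)}_{\rm eff}}=\partial\beta$ by the isometry and $\delta^{(1)}_{\rm eff}\in\mathcal D^{(1)}$. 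Because the element of $\mathcal D^{(N)}\cap\mathcal T^{(N)}$ is unique (any difference of two such elements lies in $\mathcal T^{(N)}\cap\mathcal T^{(N)\perp}=\{0\}$, exactly as in the uniqueness argument of Theorem~\ref{thm_GHB}), this candidate must be $\delta^{(N)}_{\rm eff}$, giving the first relation in Eq.~(\ref{scaling2}). The second follows immediately from the isometry: $\tilde{\mathsf H}^{(N)}=\norm{\delta^{(N)}_{\rm eff}}^2=N^{-1}\norm{U\delta^{(1)}_{\rm eff}}^2=N^{-1}\norm{\delta^{(1)}_{\rm eff}}^2=\tilde{\mathsf H}^{(1)}/N$.

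The main obstacle, and the reason this needs more care than Corollary~\ref{iid}, is the tangent-space step: in the finite-parametric case one has an explicit parametrization and can compute $K$ directly, whereas for the mother family one must quantify over the entire (possibly uncountable) collection of submodels and argue that \emph{none} of them yields a score outside $U\mathcal T^{(1)}$. This closure of $\mathcal T^{(N)}$ under $U$ rests squarely on the product structure of $\mathbf G^{(N)}$ together with the isometry (hence closed-range) property of $U$, and the vanishing of the off-diagonal cross terms—which requires the zero-mean condition—is what makes $U$ an isometry in the first place.
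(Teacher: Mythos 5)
Your proposal is correct and follows essentially the same route as the paper's proof: both identify the $N$-copy tangent set as $\{\sqrt{N}\,US^\sigma\}$ by arguing that parametric submodels of $\mathbf G^{(N)}$ are in one-to-one correspondence with tensor powers of single-copy submodels, and both then exploit the isometry of $U$ on zero-mean operators to transport $\delta_{\rm eff}^{(1)}$ and its norm. The only cosmetic difference is at the last step, where the paper carries over the projection identity $\Pi(U\delta|\mathcal T^{(N)})=U\Pi(\delta|\mathcal T)$ from the proof of Corollary~\ref{iid}, while you instead verify that $U\delta_{\rm eff}^{(1)}/\sqrt N\in\mathcal D^{(N)}\cap\mathcal T^{(N)}$ and invoke the uniqueness argument of Theorem~\ref{thm_GHB}; these are equivalent consequences of the same isometry.
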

\begin{proof}
Delegated to Appendix~\ref{sec_iid2}.
\end{proof}

We now generalize Theorem~\ref{TeqZ} for infinite-dimensional
systems. This is also a more precise generalization of a classic
result in semiparametric theory \cite[Example~1 in
Sec.~3.2]{bickel93}.

\begin{theorem}
  $\mathbf G_0$, defined as the family of arbitrary density operators,
  is full-dimensional.
\label{TeqZ2}
\end{theorem}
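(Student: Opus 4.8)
The plan is to prove $\mathcal{T}=\mathcal{Z}$ by constructing, for every bounded zero-mean operator, an explicit parametric submodel of $\mathbf{G}_0$ whose score is that operator, and then showing that such scores are dense in $\mathcal{Z}$. This replaces the argument of Theorem~\ref{TeqZ}, which leaned on the finite parametrization $\mathbf{F}_0$ and therefore breaks down once $d=\infty$; the submodel formalism of this section is precisely the tool that takes over. The statement is also the exact quantum counterpart of the classical fact that the nonparametric model has tangent space equal to the whole mean-zero $L^2$ space \cite{bickel93}.

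First I would construct the submodels. For any bounded self-adjoint $b$, consider the congruence family $\sigma_b(\theta)=(I+\theta b)\rho(I+\theta b)/\trace[(I+\theta b)\rho(I+\theta b)]$. Since $I+\theta b$ is bounded and, for small $\abs\theta$, invertible, $\sigma_b(\theta)$ is manifestly positive, trace-class, of unit trace, and smooth in $\theta$, so it is a legitimate submodel passing through $\rho$ at $\theta=0$. Expanding the numerator as $\rho+2\theta(\rho\circ b)+O(\theta^2)$ and the denominator as $1+2\theta\avg{b,I}+O(\theta^2)$ gives $(\partial_\theta\sigma_b)_{\theta=0}=2[\rho\circ b-\avg{b,I}\rho]=\rho\circ 2(b-\avg{b,I}I)$, so its score is $S^{\sigma_b}=2(b-\avg{b,I}I)$ as an element of $\mathcal{Y}$. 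This is twice the projection $\Pi(b|\mathcal{Z})$ and lies in $\mathcal{Z}$, consistent with $\mathcal{T}\subseteq\mathcal{Z}$. Rescaling $\theta$ absorbs the factor of two, so every $b-\avg{b,I}I$ with $b$ bounded self-adjoint belongs to the tangent set $\{S\}$ and hence to $\mathcal{T}$.

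It then remains to show these scores fill $\mathcal{Z}$, which I would do by proving $\mathcal{T}^\perp=\{0\}$ (the orthocomplement taken within $\mathcal{Z}$), so that the closed subspace $\mathcal{T}$ coincides with $\mathcal{Z}$. Take $h\in\mathcal{T}^\perp$. Orthogonality to every constructed score, together with $\avg{I,h}=0$ (which holds because $h\in\mathcal{Z}$), yields $\avg{b,h}=\avg{b-\avg{b,I}I,\,h}=0$ for every bounded self-adjoint $b$. If $h$ were bounded, the choice $b=h$ would immediately give $\norm{h}^2=0$; in general I would apply this to spectral truncations $h_n$ of $h$, obtained by functional calculus that discards the spectral components with modulus exceeding $n$. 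Each $h_n$ is bounded and self-adjoint, so $\avg{h_n,h}=0$, while $\norm{h-h_n}^2=\trace\rho(h-h_n)^2$ is the tail of the convergent quantity $\norm{h}^2=\trace\rho h^2$ and therefore tends to zero, giving $h_n\to h$ in $\norm{\cdot}$. Continuity of the inner product then forces $\norm{h}^2=\lim_n\avg{h_n,h}=0$, so $h=0$ in $\mathcal{Y}$. Hence $\mathcal{T}^\perp=\{0\}$ and $\mathcal{T}=\mathcal{Z}$.

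The main obstacle, and the reason the finite-dimensional argument cannot be transcribed verbatim, is positivity of the submodel in infinite dimensions: a naive additive perturbation $\rho+\theta(\rho\circ b)$ need not remain positive for any $\theta\neq 0$, because the eigenvalues of $\rho$ accumulate at zero and there is no uniform spectral gap to absorb the perturbation. The multiplicative congruence $(I+\theta b)\rho(I+\theta b)$ circumvents this by building in positivity for all $\theta$. The only other delicate point is the functional-analytic one of approximating a possibly unbounded $h\in\mathcal{Y}$ by bounded operators in the weighted norm, which the dominated-convergence estimate on $\norm{h-h_n}^2$ settles cleanly.
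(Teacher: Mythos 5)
Your proof is correct and shares the paper's overall architecture---(i) show every bounded zero-mean operator is the score of some submodel of $\mathbf G_0$, (ii) show such operators are dense in $\mathcal Z$, (iii) conclude $\mathcal T = \mathcal Z$ from closedness of $\mathcal T$---but it implements both key steps differently. For step (i) the paper uses the exponential family $\sigma(\theta) \propto \exp(\theta \hat h/2)\rho\exp(\theta \hat h/2)$, which is a valid state for \emph{all} $\theta\in\mathbb R$ (a geodesic through $\rho$), whereas your congruence family $(I+\theta b)\rho(I+\theta b)$ is valid only for $\abs{\theta}<1/\opnorm{b}$; this costs nothing, since scores are local objects and a submodel need only be defined near the truth, and your family has the minor advantage of making the score computation a two-line expansion rather than requiring properties of the operator exponential. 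For step (ii) the paper states a separate density lemma (its Appendix on bounded operators being dense in $\mathcal Z$), which it proves by invoking Holevo's Theorem~2.8.1 and then projecting the approximating bounded sequence into $\mathcal Z$; you instead prove $\mathcal T^\perp = \{0\}$ directly by spectral truncation $h_n = \int_{\abs{\lambda}\le n}\lambda\, dE_h(\lambda)$ and dominated convergence on $\trace\rho\, h^2 E_{\abs{\lambda}>n}(h)$, which is more elementary and self-contained. The one caveat in your route is that it presumes every element of $\mathcal Z$ is represented by a genuine (possibly unbounded) self-adjoint operator admitting the spectral theorem, with the eigenvectors of $\rho$ in its domain so that $\norm{h-h_n}^2 = \sum_j \lambda_j \norm{(h-h_n)\ket{e_j}}^2$ is a convergent sum of tails; this is exactly the reading suggested by the paper's informal definition of $\mathcal Y$ as ``square-summable operators,'' but the abstract completion viewpoint (where elements of $\mathcal Y$ are a priori only equivalence classes of Cauchy sequences of bounded operators) is precisely what Holevo's theorem, and hence the paper's lemma, is designed to handle rigorously. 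Your argument is therefore at the same level of rigor as the paper's main text, with the functional-analytic burden relocated from a citation to an explicit, correct truncation estimate.
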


\begin{proof}
  We call a Hilbert-space element in $\mathcal Y$ bounded and denote it
  by $\opnorm h < \infty$ if its equivalence class contains a bounded
  operator $\hat h$. Denote the set of all bounded elements in
  $\mathcal Z$ as
\begin{align}
\mathcal B \equiv \BK{h \in \mathcal Z: \opnorm h < \infty}.
\label{set_bounded}
\end{align}
Take any $h \in \mathcal B$ and its bounded operator $\hat h$.
Construct a scalar-parameter exponential family as
\cite{hayashi,hayashi05}
\begin{align}
\sigma(\theta) &= \frac{\kappa(\theta)}{\trace\kappa(\theta)},
&
\kappa(\theta) &= \exp(\theta \hat h/2)
\rho \exp(\theta \hat h/2),
\label{exp_family}
\end{align}
where $\theta \in \mathbb R$ and the truth is at $\sigma(0) = \rho$.
As
$\hat h$ is bounded, $\exp(\theta \hat h /2)$ is bounded and strictly
positive.  As $\rho$ is nonnegative and unit-trace, $\kappa(\theta)$
is nonnegative and trace-class
\cite[Theorem~2.7.2]{holevo11}. Moreover, $\trace\kappa(\theta)$
satisfies the properties
\begin{align}
\infty
> \trace\kappa(\theta) = \trace \rho \exp(\theta \hat h) > 0,
\end{align}
because $\kappa(\theta)$ is trace-class and $\exp(\theta\hat h)$ is
strictly positive. Hence $\sigma(\theta)$ is a valid density operator at
any $\theta$.  Since $\mathbf G_0$ contains arbitrary density
operators,
$\mathbf F^\sigma = \{\sigma(\theta): \theta \in \mathbb R, \sigma(0) =
\rho\}$ is a parametric submodel of $\mathbf G_0$.  It is
straightforward to show that
\begin{align}
(\partial\sigma)_{\theta = 0} &= \sigma(0) \circ \hat h = \rho \circ \hat h,
\end{align}
so the score for this model can be taken as $S^\sigma = h$.

Define a submodel in the same way for \emph{every} $h \in \mathcal B$,
such that all of the $\mathcal B$ elements are in the tangent set
$\{S\}$, leading to $\mathcal B \subseteq \{S\} \subseteq \mathcal
T$. As $\mathcal T$ is closed, the limit points of $\mathcal B$ must
also be in $\mathcal T$, and
$\overline{\mathcal B} \subseteq \mathcal T$, where
$\overline{\mathcal B}$ is the closure of $\mathcal B$.
Lemma~\ref{cauchy} in Appendix~\ref{sec_cauchy} states that
$\mathcal B$ is a dense subset of $\mathcal Z$, so
\begin{align}
\mathcal Z = \overline{\mathcal B} \subseteq \mathcal T.
\end{align}
Together with the fact $\mathcal T \subseteq \mathcal Z$, this implies
$\mathcal T = \mathcal Z$, and the theorem is proved.

\end{proof}

A comparison of the proofs of Theorems~\ref{TeqZ} and \ref{TeqZ2}
shows how the parametric-submodel concept works. Instead of dealing
with one large family such as Eq.~(\ref{F0}), here one exploits the
freedom offered by $\mathbf G_0$ to specify many ad-hoc and elementary
submodels. Each submodel in the proof cannot be simpler---the
exponential family is simply a type of geodesics through $\rho$ in
density-operator space \cite{hayashi}.  In fact, we do not have to use
the exponential family, and other families may also be used as long as
they fit the purpose of the proof.  An enormous number of submodels
are introduced, one for each $\mathcal B$ element in the proof,
leading to an extremely overcomplete tangent set. But that presents no
trouble for the geometric approach; only the resultant tangent space
matters at the end.  Figure~\ref{straight_lines} illustrates the idea.

\begin{figure}[htbp!]
  \centerline{\includegraphics[width=0.35\textwidth]{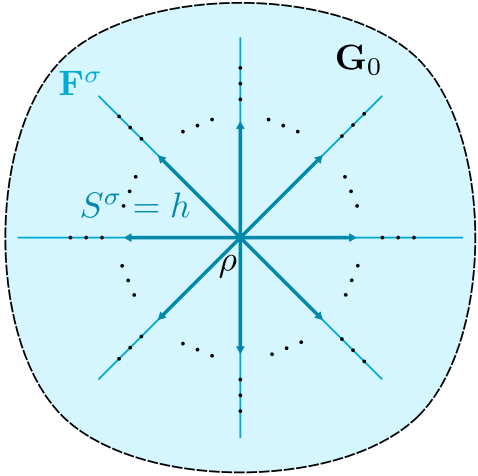}}
  \caption{\label{straight_lines} For any $h \in \mathcal B$, one can
    associate with it an exponential family (a straight line in
    the density-operator space) that passes through $\rho$. Since
    $\mathbf G_0$ contains arbitrary density operators, every line
    must be contained in $\mathbf G_0$. It follows that each line is a
    parametric submodel for $\mathbf G_0$, and each $h$ should be put
    in the tangent set. The dots represent the fact that the proof
    involves lines in all directions and, on each line, scores with
    all possible norms.}
\end{figure}

By virtue of Theorem~\ref{TeqZ2}, an influence operator
$\delta \in \mathcal D \subseteq \mathcal Z = \mathcal T$ found for a
parameter of interest is the efficient one for $\mathbf G_0$. The
examples in Secs.~\ref{sec_gradient} and \ref{sec_tangent} work for
$\mathbf G_0$ in the same way they work for $\mathbf F_0$.  If $\beta$
is given by $\beta[\rho]$, an influence operator that satisfies
Eq.~(\ref{local_unbiased}) can be found via a gradient of
$\beta[\rho]$, as shown in Sec.~\ref{sec_gradient} and
Fig.~\ref{gradients}. In particular, the influence operators given by
Eqs.~(\ref{delta_Y})--(\ref{delta_entropy}) and the bounds given by
Eqs.~(\ref{GHB_Y})--(\ref{GHB_entropy}) for the various examples
should still hold for $\mathbf G_0$, although the entropy example may
require a more rigorous treatment when $d = \infty$ \cite{holevo12}.

\section{\label{sec_con}Constrained bounds}

\subsection{\label{sec_antiscore}Antiscore operators}
Consider a constrained family of density operators defined as
\begin{align}
\mathbf G_\gamma &\equiv \BK{\rho(g) \in \mathbf G_0: \gamma[\rho(g)] = 0},
\label{family_con}
\end{align}
where $\gamma[\rho(g)] = 0$ denotes a finite set of equality
constraints $\{\gamma_k[\rho(g)] = 0: k = 1,\dots,r\}$. Such
constraints appear often in quantum thermodynamics
\cite{jaynes57,gogolin16}.  If there exist gradient operators
$\{\tilde\gamma_k \in \mathcal Y\}$ such that, for any
$h \in \mathcal Y$,
\begin{align}
D_h \gamma_k[\rho] &= \Avg{h,\tilde\gamma_k},
\label{fdiffc}
\end{align}
then each operator given by
\begin{align}
R_k \equiv \Pi(\tilde\gamma_k|\mathcal Z) = \tilde\gamma_k 
- \Avg{\tilde\gamma_k,I} \in \mathcal Z
\label{proj_g}
\end{align}
satisfies
\begin{align}
D_h\gamma_k[\rho]
 &= \Avg{h,R_k}\quad \forall h \in \mathcal Z,
\label{Rgradient}
\end{align}
and the constraint $\gamma[\rho(g)] = 0$ implies that
$\partial\gamma_k[\rho] = \Avg{S^\sigma,R_k} = 0$ for all submodels
and $k$.  In short, we write
\begin{align}
  \partial\gamma[\rho] = \Avg{S,R} = 0.
\label{R}
\end{align}
Thus $\{R\}$ is orthogonal to the tangent set $\{S\}$ and $\spn\{R\}$
must be a subset of $\mathcal T^\perp$. We call $R$ the antiscore
operators, as the following theorem shows that they span
$\mathcal T^\perp$ in the same way the scores span $\mathcal T$.

\begin{theorem}
  If $\avg{R,R}^{-1}$ exists, $\mathcal T^\perp = \spn\{R\}$ for the
  $\mathbf G_\gamma$ family.
\label{thm_R}
\end{theorem}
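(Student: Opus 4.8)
The plan is to prove the two inclusions $\spn\{R\} \subseteq \mathcal{T}^\perp$ and $\mathcal{T}^\perp \subseteq \spn\{R\}$ separately. The first is already supplied by Eq.~(\ref{R}): every score of every submodel of $\mathbf{G}_\gamma$ obeys the constraint-preservation identity $\avg{S^\sigma,R_k} = \partial\gamma_k = 0$, so $\{R\}$ is orthogonal to the whole tangent set and $\spn\{R\}\subseteq\mathcal{T}^\perp$. Equivalently, $\mathcal{T}$ sits inside the orthocomplement of $\spn\{R\}$ in $\mathcal{Z}$, which I will call $\mathcal{T}_0$. The real content of the theorem is the reverse inclusion $\mathcal{T}_0\subseteq\mathcal{T}$, i.e. that the $r$ constraints delete exactly the $r$ directions $\{R\}$ and nothing more; once this is shown, taking orthocomplements in $\mathcal{Z}$ gives $\mathcal{T}^\perp=\mathcal{T}_0^\perp=\spn\{R\}$, the last equality holding because $\spn\{R\}$ is finite-dimensional, hence closed.

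To fill $\mathcal{T}_0$ with scores, I would adapt the submodel construction of Theorem~\ref{TeqZ2}, but \emph{curve} each exponential family so that it lies on the constraint surface exactly rather than only to first order. Given any bounded $h\in\mathcal{T}_0$ (so $\opnorm h<\infty$, $\avg{h,I}=0$, and $\avg{h,R_k}=0$ for all $k$), I would form the $(1+r)$-parameter exponential family $\sigma(\theta,\mu)$ of the type in Eq.~(\ref{exp_family}) generated by $\theta\hat h+\sum_k\mu_k\hat R_k$, whose scores at the origin are $h$ (for $\theta$) and $R_k$ (for $\mu_k$). The constraint map $\mu\mapsto\gamma[\sigma(\theta,\mu)]$ has Jacobian $\partial_{\mu_j}\gamma_k=D_{R_j}\gamma_k=\avg{R_j,R_k}$ at the origin by Eq.~(\ref{Rgradient}), which is precisely the Gram matrix $\avg{R,R}$ assumed invertible. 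The implicit function theorem then yields a smooth $\mu(\theta)$ with $\mu(0)=0$ and $\gamma[\sigma(\theta,\mu(\theta))]=0$ identically, so $\tilde\sigma(\theta)\equiv\sigma(\theta,\mu(\theta))$ is a genuine submodel of $\mathbf{G}_\gamma$. Differentiating the constraint identity at $\theta=0$ and using $\partial_\theta\gamma_k|_0=\avg{h,R_k}=0$ together with the invertibility of $\avg{R,R}$ forces $\mu'(0)=0$; by the chain rule the score of $\tilde\sigma$ is then exactly $h$. Hence every bounded element of $\mathcal{T}_0$ lies in the tangent set and therefore in $\mathcal{T}$.

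It then remains to pass from bounded operators to all of $\mathcal{T}_0$. Here I would invoke Lemma~\ref{cauchy}, which states that the bounded set $\mathcal{B}$ is dense in $\mathcal{Z}$: for any $g\in\mathcal{T}_0$ choose $h_n\in\mathcal{B}$ with $h_n\to g$ and project off the constraint directions, $h_n'\equiv h_n-\Pi(h_n|\spn\{R\})$. Since $\spn\{R\}$ is finite-dimensional and closed, $\Pi(\cdot|\spn\{R\})$ is continuous, so $h_n'\to g-\Pi(g|\spn\{R\})=g$, while each $h_n'$ is bounded (a finite combination of the bounded $h_n$ and $R_k$) and lies in $\mathcal{T}_0$. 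Thus the bounded elements of $\mathcal{T}_0$ are dense in $\mathcal{T}_0$, and as $\mathcal{T}$ is closed and contains them, $\mathcal{T}_0\subseteq\mathcal{T}$. Combined with $\mathcal{T}\subseteq\mathcal{T}_0$ this gives $\mathcal{T}=\mathcal{T}_0$ and hence $\mathcal{T}^\perp=\spn\{R\}$.

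The main obstacle is the middle step---keeping the curved submodel exactly on the constraint surface while controlling its score---and the key quantitative input is the invertibility of $\avg{R,R}$, which simultaneously makes the implicit function theorem applicable and forces $\mu'(0)=0$ so that curving the family does not disturb the score. A secondary subtlety is the density argument, which tacitly needs the antiscores $R_k$ (equivalently the constraint gradients $\tilde\gamma_k$) to admit bounded representatives, so that the projected approximants $h_n'$ stay in $\mathcal{B}$ and the exponential-family generator above is well defined; for the linear constraints $\trace\rho Z=\zeta$ of primary interest, with bounded $Z$, this holds automatically.
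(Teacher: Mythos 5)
Your overall skeleton matches the paper's: both arguments get $\spn\{R\}\subseteq\mathcal T^{\perp}$ from Eq.~(\ref{R}) and then show that every element of $\mathcal R^{\perp}$ (your $\mathcal T_0$, the orthocomplement of $\spn\{R\}$ in $\mathcal Z$) is realized as the score of a submodel lying exactly on the constraint surface. Your handling of the on-surface correction via the implicit function theorem---with the invertibility of $\avg{R,R}$ supplying both the solvability of $\mu(\theta)$ and the vanishing of $\mu'(0)$---is in fact cleaner than the paper's explicit choice $w(\theta)=-\avg{R,R}^{-1}o(\theta)/\theta$, which is somewhat self-referential because the remainder term depends on $g(\theta)$ itself.

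However, there is a genuine gap: your construction requires the antiscores $R_k$ to admit bounded representatives, and the theorem makes no such assumption. Boundedness enters twice. First, the exponential family of Eq.~(\ref{exp_family}) with generator $\theta\hat h+\sum_k\mu_k\hat R_k$ is only guaranteed to produce valid (trace-class, normalizable) density operators when the generator is bounded; this is exactly where the proof of Theorem~\ref{TeqZ2} uses $\opnorm{\hat h}<\infty$. Second, your density step needs $h_n'=h_n-\Pi(h_n|\spn\{R\})$ to stay bounded, which fails when the $R_k$ are unbounded, and you cannot simply drop the projection, because density of $\mathcal B$ in $\mathcal Z$ does not imply density of $\mathcal B\cap\mathcal R^{\perp}$ in $\mathcal R^{\perp}$. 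You flag this as a secondary subtlety and assert that the cases of primary interest have bounded constraint observables, but that is not so in this paper: the mean-energy constraint $\trace\rho H=E$, the quadrature constraint of Sec.~\ref{sec_quadrature}, and the displacement problem of Sec.~\ref{sec_displacement} all have unbounded antiscores. This is precisely why the paper abandons the exponential family here and instead builds submodels from $f(u)=1+\tanh(u/2)$ in Eqs.~(\ref{f_family})--(\ref{f}): $f$ maps even an unbounded self-adjoint argument to a bounded, strictly positive operator, so $\sigma(\theta)$ is a valid state for \emph{every} $h\in\mathcal R^{\perp}$ and every correction in $\spn\{R\}$, with no boundedness restriction and no approximation argument. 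Your proof becomes complete---and arguably tightens the paper's own presentation of the correction step---if you replace the exponential family in your middle step by this $f$-family and keep your implicit-function-theorem argument.
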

\begin{proof}
The proof again follows the classical case \cite[Example~3 in
  Sec.~3.2]{bickel93}.  Let
\begin{align}
\mathcal R &\equiv \spn\BK{R},
&
\mathcal R^\perp &\equiv \BK{h \in \mathcal Z: \Avg{R,h} = 0}.
\end{align}
In view of Eq.~(\ref{R}),
\begin{align}
\mathcal T &\subseteq \mathcal R^\perp.
\label{TRperp}
\end{align}
Now construct a parametric submodel $\mathbf F^\sigma$ in terms of
each $h \in \mathcal R^\perp$ as
\begin{align}
\sigma(\theta) &= 
\frac{\kappa(\theta)}{\trace \kappa(\theta)},
&
\kappa(\theta) &= f(\theta h+\theta g)\rho f(\theta h+\theta g),
\label{f_family}
\end{align}
where $\theta \in \mathbb R$, $g = w^\top R\in \mathcal R$ is an
operator to be specified later, and $f(u)$ is defined with respect to
the spectral representation of $u = \int \lambda dE_u(\lambda)$ as
\begin{align}
f(u) &= \int \Bk{1 + \tanh\bk{\frac{\lambda}{2}}} dE_u(\lambda).
\label{f}
\end{align}
$f(u)$ is bounded and positive even if $u$ is unbounded, so
$\sigma(\theta)$ is a valid density operator. Since
$\rho \in \mathbf G_\gamma$, $\gamma[\rho] = 0$.  For a $\sigma(\theta)$
away from $\rho$ with $\theta \neq 0$,
\begin{align}
\gamma[\sigma(\theta)] &= \gamma[\rho] + 
\theta D_{h+g} \gamma[\rho] +  o(\theta)
\\
&= \theta \Avg{R,h+g} + o(\theta)
\label{gradientstep}
\\
&= \theta\Avg{R,g} + o(\theta),
\label{lin_gamma}
\end{align}
where Eq.~(\ref{gradientstep}) uses Eq.~(\ref{Rgradient}) and the last
step uses the fact $h \in \mathcal R^\perp$. To make
$\sigma(\theta)$ satisfy the constraint $\gamma[\sigma(\theta)] = 0$,
$g(\theta) = w(\theta)^\top R $ can be set as a function of $\theta$ to
cancel the $o(\theta)$ term, with
\begin{align}
 w(\theta) &= -\Avg{R,R}^{-1} o(\theta)/\theta.
\label{root}
\end{align}
Then $\gamma[\sigma(\theta)] = 0$ and $\mathbf F^\sigma$ is a valid
parametric submodel of $\mathbf G_\gamma$. Equation~(\ref{root}) also
implies that $\theta g(\theta) = o(\theta)$ is negligible relative to
$\theta h$ for infinitesimal $\theta$, so the score for
$\mathbf F^\sigma$ is $h$, which should be put in the tangent set
$\{S\}$. As this procedure can be done for any
$h \in \mathcal R^\perp$,
$\mathcal R^\perp \subseteq \{S\} \subseteq \mathcal T$.  Together
with Eq.~(\ref{TRperp}), this leads to
$\mathcal T = \mathcal R^\perp$, giving
$\mathcal T^\perp = \mathcal R$.

\end{proof}

The family given by Eqs.~(\ref{f_family}) and (\ref{f}) is more
  convenient to use here than the exponential family used in the proof
  of Theorem~\ref{TeqZ2}.  The $f(u)$ defined by Eq.~(\ref{f}) is a
  generalization of the classical version in Ref.~\cite[Example~1 in
  Sec.~3.2]{bickel93} and plotted in Fig.~\ref{smooth_f}. It is
  designed to give a valid density operator via
  Eqs.~(\ref{f_family})---even if the argument is an unbounded
  operator---yet produce the desired score when linearized at
  $\theta = 0$. An adjustable operator $g(\theta)$ is
included in the submodel to make $\sigma(\theta)$ satisfy the
constraint away from $\rho$.  Figure~\ref{constrained_tangent} further
illustrates the idea of the proof.

\begin{figure}[htbp!]
\centerline{\includegraphics[width=0.4\textwidth]{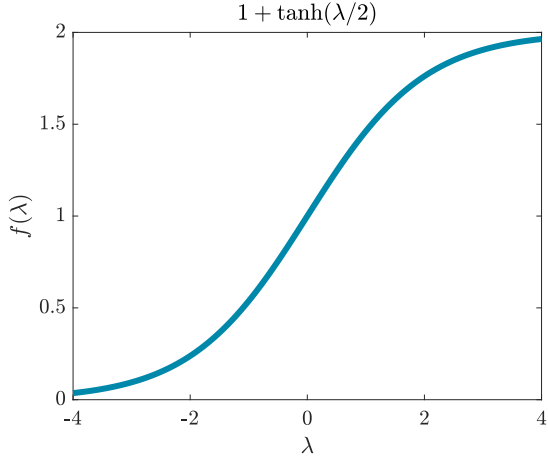}}
\caption{\label{smooth_f}A plot of $f(\lambda) = 1 + \tanh(\lambda/2)$ to
  illustrate its boundedness and positivity.}
\end{figure}

\begin{figure}[htbp!]
\centerline{\includegraphics[width=0.4\textwidth]{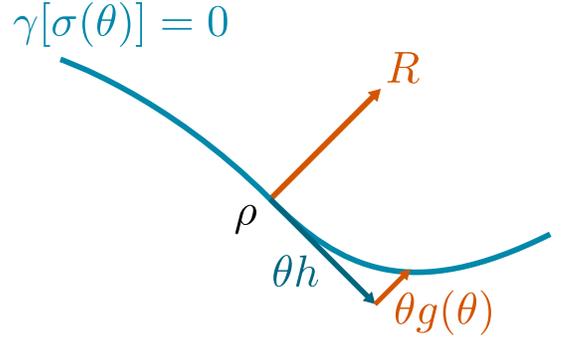}}
\caption{\label{constrained_tangent} Each $R$ is a vector normal to
  the surface defined by $\gamma[\rho(g)] = 0$ in density-operator
  space.  For any $h \in \mathcal R^\perp$, a parametric submodel
  $\sigma(\theta)$ can be constructed to satisfy the constraint
  $\gamma[\sigma(\theta)] = 0$. Away from $\rho$, a correction
  $\theta g(\theta) = o(\theta)$ in $\mathcal R$ is needed to make
  $\sigma(\theta)$ stay with the constraint. The tangent vector of the
  submodel at $\rho$ is still $h$, since the correction is higher
  order in $\theta$.}
\end{figure}

Given an influence operator $\delta$, such as those derived in
Sec.~\ref{sec_gradient}, the efficient influence and the GHB can be
computed in terms of $\mathcal T^\perp$ instead of $\mathcal T$ via
\begin{align}
\delta_{\rm eff} &= \Pi(\delta|\mathcal T) 
= \delta - \Pi(\delta|\mathcal T^\perp),
\label{eff_subtract}
\\
\norm{\delta_{\rm eff}}^2 
&= \norm{\delta}^2 - \norm{\Pi(\delta|\mathcal T^\perp)}^2.
\label{GHB_subtract}
\end{align}
The same projection formula that gives $\delta_{\rm eff}$ in
Appendix~\ref{sec_HB} can be adapted to give
\begin{align}
\Pi(\delta|\mathcal T^\perp) &= \Avg{R,\delta}^\top\Avg{R,R}^{-1}R,
\label{holes}
\\
\norm{\Pi(\delta|\mathcal T^\perp)}^2 &= 
\Avg{R,\delta}^\top \Avg{R,R}^{-1} \Avg{R,\delta}.
\label{reduction}
\end{align}
Equations~(\ref{holes}) and (\ref{reduction}) remain tractable if the
constraints are few.  The gradients of $\gamma[\rho]$ can be derived
in the same way as those of $\beta[\rho]$, as shown in
Fig.~\ref{gradients}, and $R$ can be computed analytically for linear
constraints, the purity constraint, and the entropy constraint by
following the same type of calculations shown in
Eqs.~(\ref{delta_Y})--(\ref{delta_entropy}). Equation~(\ref{correlation_bound})
is a special example of the constrained GHB when
$\beta = \trace\rho Y$ and $\gamma = \trace \rho(Z-\zeta) = 0$.

\subsection{Entropy estimation in quantum thermodynamics}
In quantum thermodynamics, conserved quantities of a dynamical system,
such as the energy and the particle number, are expressed as moment
constraints on the density operator with respect to a vector of
observables $Z$ and a vector of constants $\zeta$, viz.,
\begin{align}
\trace \rho Z_k &= \zeta_k,
\quad
k =1,\dots,r.
\end{align}
Given such constraints, the density
operator is often assumed to be the one with the maximum entropy
\cite{jaynes57}, known as the generalized Gibbs ensemble
\cite{gogolin16}. Such an assumption, however, requires verification
and does not hold out of equilibrium.  Experiments on Bose gases have
been performed to study the quantum states at different times and the
validity of the maximum-entropy principle at steady state
\cite{kinoshita06,langen15,langen15a}.

When the maximum-entropy principle is in question for those
experiments, it is prudent to make no prior assumption about the
density operator other than the constraints.  Thus one should consider
a family of density operators given by Eq.~(\ref{family_con}), where
the vectoral constraint is
$\gamma[\rho] = \trace \rho(Z - \zeta) = 0$.  Suppose that the von
Neumann entropy $\beta = -\trace \rho \ln \rho$ is the parameter of
interest.  The estimation of $\beta$ is then a problem of quantum
semiparametric estimation.

As the experiments typically involve high-dimensional systems, quantum
state tomography is impractical. More efficient estimation of $\beta$
should exist. The formalism here leads to a quantum limit given by
\begin{align}
\mathsf{E} &\ge \frac{1}{N}\bk{\norm{\delta}^2
- \Avg{R,\delta}^\top\Avg{R,R}^{-1}\Avg{R,\delta}},
\\
\delta &= -\ln \rho -\beta,
\quad
R = Z - \zeta.
\end{align}
This bound is equivalent to the Holevo bound, as shown in
Sec.~\ref{sec_multi}, so it is asymptotically attainable in principle,
at least for finite-dimensional systems
\cite{kahn09,gill_guta,demkowicz20}, although the experimental
implementation of efficient measurements remains an open question.

As entropy is an excellent measure of randomness and a central
quantity in information theory, entropy estimation has many
applications beyond thermodynamics. In classical statistics, the
semiparametric estimation of entropic quantities is a well studied
problem with known near-efficient estimators and applications in
universal coding, statistical tests, random-number generation,
econometrics, spectroscopy, and even neuroscience
\cite{beirlant97,*paninski03,*cover}. In the quantum domain, one
application is universal quantum information compression
\cite{jozsa98}: knowing just the von Neumann entropy and nothing else
about $\rho$ allows the quantum information to be compressed in
accordance with the entropy. Another application is the estimation of
an entropic measure of entanglement, which allows one to demonstrate
entanglement without full tomography \cite{guehne09}. The quantum
limit here quantifies the minimum amount of resources needed to
achieve a desired precision. Its asymptotic attainability suggests
that it is a lofty but fair yardstick for experimental design.


\subsection{\label{sec_phil}Philosophy}
The proposed approach to quantum semiparametric bounds is the polar
opposite of the usual approach in quantum metrology. In the usual
bottom-up approach, one assumes a small family of density operators
with a few parameters and computes $\norm{\Pi(\delta|\mathcal T)}^2$
that is determined by the overlap between $\delta$ and the scores $S$.
Here, one starts with a large family with almost full dimension,
computes $\norm{\delta}^2$ for an amenable $\delta$, and then reduces
it by $\norm{\Pi(\delta|\mathcal T^\perp)}^2$ that is determined by
the overlap between $\delta$ and the antiscores $R$, as illustrated by
Fig.~\ref{philosophy}. The complexity of the problem thus depends on
the dimension of the family, and the essential insight of this work is
that the problem can become simple again when the dimension is close
to being full.  Of course, if the dimension of $\mathcal T^\perp$ is
high, the top-down approach may also suffer from the curse of
dimensionality. The medium families with both $\mathcal T$ and
$\mathcal T^\perp$ in high dimensions are the most difficult to deal
with, as they may be impregnable from either end.

\begin{figure}[htbp!]
\centerline{\includegraphics[width=0.45\textwidth]{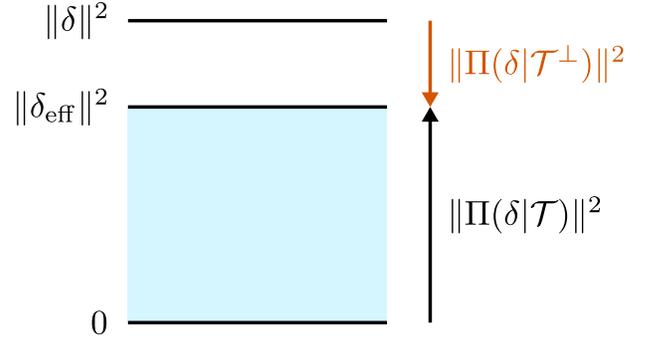}}
\caption{\label{philosophy}An illustration of the conventional
  bottom-up approach to quantum bounds and the top-down approach to
  semiparametric bounds, as discussed in Sec.~\ref{sec_phil}.}
\end{figure}

\subsection{Looser bounds}
It may often be the case that, despite one's best efforts, the exact
$\delta_{\rm eff}$ for a problem remains intractable. Then a standard
strategy in statistics and quantum metrology is to sandwich
$\norm{\delta_{\rm eff}}^2$ between upper and lower bounds.
$\norm{\delta}^2$ is an obvious upper bound and can be obtained from
the gradient method in Sec.~\ref{sec_gradient} if $\beta$ can be
expressed as a functional $\beta[\rho]$. Another way is to use
Eq.~(\ref{MSE}) if an unbiased measurement and its error are
known. The evaluation of lower bounds, on the other hand, can be
facilitated by the following proposition.

\begin{proposition}
\label{loose}
Let $\mathcal V \subseteq \mathcal T$ be a closed subspace of
$\mathcal T$ and $\mathcal V^\perp$ be the orthocomplement of
$\mathcal V$ in $\mathcal Z$. Then
\begin{align}
\tilde{\mathsf H} &= 
\norm{\delta_{\rm eff}}^2 \ge \norm{\Pi(\delta|\mathcal V)}^2
= \norm{\delta}^2 - \norm{\Pi(\delta|\mathcal V^\perp)}^2.
\end{align}
In particular, if 
\begin{align}
\mathcal V = \cspn\BK{S^\sigma}
\end{align}
is taken as the tangent space for a particular parametric submodel
$\mathbf F^\sigma$, then 
\begin{align}
\norm{\Pi(\delta|\mathcal V)}^2 &= \tilde{\mathsf{H}}^\sigma
\label{GHB_sub}
\end{align}
is the GHB for that submodel.
\end{proposition}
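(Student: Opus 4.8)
The plan is to exploit the nesting $\mathcal V \subseteq \mathcal T \subseteq \mathcal Z$ together with the Pythagorean theorem, in the same spirit as the proof of Theorem~\ref{thm_GHB}. First I would establish a tower property for the nested orthogonal projections. Writing $\delta = \delta_{\rm eff} + h$ with $\delta_{\rm eff} = \Pi(\delta|\mathcal T)$ and $h = \Pi(\delta|\mathcal T^\perp)$, I note that $h$ is orthogonal to all of $\mathcal T$ and hence to the subspace $\mathcal V \subseteq \mathcal T$, so $\Pi(h|\mathcal V) = 0$ and
\begin{align}
\Pi(\delta|\mathcal V) = \Pi(\delta_{\rm eff}|\mathcal V).
\end{align}
Because an orthogonal projection never increases the norm, $\norm{\Pi(\delta_{\rm eff}|\mathcal V)} \le \norm{\delta_{\rm eff}}$, which delivers the claimed inequality $\tilde{\mathsf{H}} = \norm{\delta_{\rm eff}}^2 \ge \norm{\Pi(\delta|\mathcal V)}^2$.

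The stated equality is routine. Since $\mathcal V^\perp$ is the orthocomplement of $\mathcal V$ in $\mathcal Z$ and $\delta \in \mathcal D \subseteq \mathcal Z$, the decomposition $\delta = \Pi(\delta|\mathcal V) + \Pi(\delta|\mathcal V^\perp)$ is orthogonal, and the Pythagorean theorem rearranges to $\norm{\Pi(\delta|\mathcal V)}^2 = \norm{\delta}^2 - \norm{\Pi(\delta|\mathcal V^\perp)}^2$.

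The part that requires care is the identification $\norm{\Pi(\delta|\mathcal V)}^2 = \tilde{\mathsf{H}}^\sigma$ when $\mathcal V$ is taken to be the tangent space $\cspn\{S^\sigma\}$ of a single submodel $\mathbf F^\sigma$. The key observation is that any influence operator $\delta \in \mathcal D$ for the mother family automatically satisfies the unbiasedness condition for $\mathbf F^\sigma$ in isolation, because Eq.~(\ref{local_unbiased}) holds for \emph{every} submodel and in particular gives $\avg{S^\sigma,\delta} = (\partial\beta)_{\theta=\phi}$. Hence $\delta$ is a legitimate influence operator for $\mathbf F^\sigma$, and applying Theorem~\ref{thm_GHB} to that submodel—whose tangent space is precisely $\mathcal V$—identifies $\Pi(\delta|\mathcal V)$ as its efficient influence and $\norm{\Pi(\delta|\mathcal V)}^2$ as its GHB $\tilde{\mathsf{H}}^\sigma$. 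The main conceptual point, rather than any technical obstacle, is to recognize that restricting from the mother family to a submodel only \emph{relaxes} the constraints defining the influence set—fewer scores must be matched—so that a mother-family influence operator remains admissible while its projection onto the smaller tangent space $\mathcal V$ can only shrink in norm; this is exactly why a submodel furnishes a valid, and generally looser, lower bound on $\tilde{\mathsf{H}}$.
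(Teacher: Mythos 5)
Your proof is correct and takes essentially the same route as the paper's: both rest on the tower property of nested projections (which the paper cites from Bickel \emph{et al.}, Proposition~3B, whereas you prove it inline from the orthogonality of $\Pi(\delta|\mathcal T^\perp)$ to $\mathcal V \subseteq \mathcal T$), the Pythagorean theorem in $\mathcal Z = \mathcal V \oplus \mathcal V^\perp$ for the equality, and an appeal to Theorem~\ref{thm_GHB} for the identification of $\norm{\Pi(\delta|\mathcal V)}^2$ with $\tilde{\mathsf{H}}^\sigma$. Your spelled-out justification of that last step---that any $\delta \in \mathcal D$ remains a valid influence operator for the submodel $\mathbf F^\sigma$ because Eq.~(\ref{local_unbiased}) holds for every submodel---is precisely what the paper's one-line citation leaves implicit, so no gap remains.
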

\begin{proof}
Delegated to Appendix~\ref{sec_loose}.
\end{proof}
A tight lower bound on $\norm{\delta_{\rm eff}}^2$ can be sought by
devising a submodel that is as unfavorable to the estimation of
$\beta$ as possible. Another approach is to devise an overconstrained
model with $\mathcal V^\perp \supseteq \mathcal T^\perp$ and evaluate
a lower bound on $\norm{\delta_{\rm eff}}^2$ from the top by
overshooting, as illustrated by Fig.~\ref{loose_bound}.

\begin{figure}[htbp!]
\centerline{\includegraphics[width=0.45\textwidth]{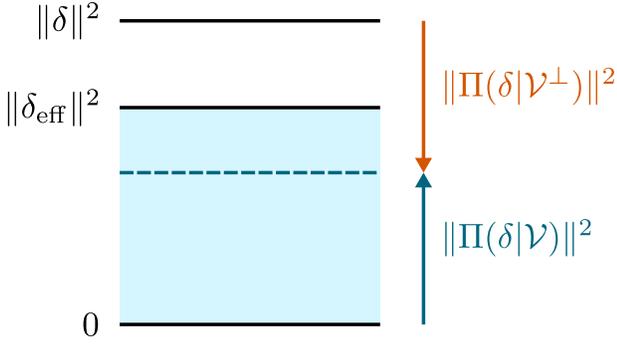}}
\caption{\label{loose_bound} One can obtain a lower bound on
  $\norm{\delta_{\rm eff}}^2$ either by undershooting from the bottom
  via a more amenable subspace $\mathcal V \subseteq \mathcal T$, or
  overshooting from the top via an overconstrained model with
  $\mathcal V^\perp \supseteq \mathcal T^\perp$.}
\end{figure}

\section{\label{sec_optics}Examples in optics}

\subsection{\label{sec_quadrature}Quadrature estimation}
Here we further illustrate the theory with examples in optics, where
quantum measurement theory has found the most experimental success
\cite{wiseman_milburn}. For the first and simplest example, let $\rho$
be a density operator of an optical mode and assume the $\mathbf G_0$
family of arbitrary density operators.  Consider the estimation of the
mean of a quadrature operator $Y$, with $\beta = \trace \rho Y$. This
problem appears often in optical state characterization,
communication, and sensing, where $\beta$ is a displacement parameter
\cite{paris04}. The GHB is given by Eq.~(\ref{GHB_Y}), and homodyne
detection of $Y$ is efficient.  Note that this example is different
from all previous studies of quadrature estimation
\cite{helstrom,holevo11}, which assume Gaussian states or similarly
low-dimensional parametric models. The semiparametric scenario here
allows $\rho$ to be arbitrary and possibly non-Gaussian.

Now suppose that side information $\trace \rho Z = \zeta$ concerning
another quadrature $Z$ is available. It follows from
Sec.~\ref{sec_con} that the efficient influence is now
\begin{align}
\delta_{\rm eff}&= Y-\beta - \frac{C_{YZ}}{V_Z}(Z-\zeta),
\end{align}
where $C_{YZ}$ and $V_Z$ are given by Eqs.~(\ref{correlation_bound2}).
The GHB is then given by Eq.~(\ref{correlation_bound}), which is
lowered by any correlation between $Y$ and $Z$.  From the efficient
influence, one may use Eq.~(\ref{error}) to find an efficient
measurement, which obeys
\begin{align}
\int \check\beta(\lambda) dE(\lambda)
&= Y - \frac{C_{YZ}}{V_Z}(Z-\zeta).
\end{align}
This can be satisfied if the POVM measures the quadrature
$Y - (C_{YZ}/V_Z) Z$ instead of the obvious $Y$.  Notice, however,
that $C_{YZ}/V_Z$ depends on the unknown $\rho$. Whether adaptive
measurements \cite{wiseman_milburn} can implement this POVM
approximately and whether asymptotic attainability is possible for
this infinite-dimensional problem are interesting open questions. One
approach may be to form rough estimates of the covariances $C_{YZ}$
and $V_Z$ via heterodyne detection of a portion of the light first,
and then measure the desired quadrature via homodyne detection based
on the approximate $C_{YZ}/V_Z$.

\subsection{\label{sec_classical}Family of classical states}
For a more nontrivial example, consider a density-operator family in
the form
\begin{align}
\mathbf G_c
&\equiv \BK{\rho(P) = \int d^2\alpha P(\alpha)\ket{\alpha}\bra{\alpha}:
P \in \mathcal G},
\\
\mathcal G &= \textrm{all positive probability densities},
\label{Gclassical}
\end{align}
where $\alpha = \alpha'+i\alpha'' \in \mathbb C$,
$d^2\alpha = d\alpha'd\alpha''$, $\ket{\alpha}$ is a coherent state,
and $P$ is the Glauber-Sudarshan function \cite{mandel}.  As $P$ is
assumed to be positive, $\mathbf G_c$ is a family of classical states
\cite{mandel} and a strict subset of $\mathbf G_0$. The assumption of
$\mathbf G_c$ instead of $\mathbf G_0$ is more appropriate for
practical applications with significant decoherence, as nonclassical
states are unlikely to survive in such an environment.

Consider a moment parameter of the form
\begin{align}
\beta(P) &= \int d^2\alpha P(\alpha) f(\alpha,\alpha^*),
\end{align}
where $f(\alpha,\alpha^*)$ is a real polynomial of $\alpha$ and
$\alpha^*$. For example, one may be interested in the mean of a
quadrature, in which case
$f = \alpha \exp(-i\theta) + \alpha^* \exp(i\theta)$, or the mean
energy, in which case $f = |\alpha|^2$.  The optical equivalence
theorem \cite{mandel} gives
\begin{align}
\beta &= \trace \rho Y,
&
Y &= :f(a,a^\dagger):,
\end{align}
where $:f(a,a^\dagger):$ denotes the normal ordering \cite{mandel}.
It follows from Sec.~\ref{sec_gradient} that an influence operator is
$\delta = Y-\beta$.

The next step is to find the tangent space of $\mathbf G_c$.  Although
$\mathbf G_c$ is a smaller family than $\mathbf G_0$, its dimension
turns out to be just as high.
\begin{proposition}
$\mathbf G_c$ is full-dimensional.
\label{classicalT}
\end{proposition}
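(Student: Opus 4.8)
The plan is to prove full-dimensionality in the same spirit as Theorems~\ref{TeqZ2} and \ref{thm_R}: rather than trying to exhibit every score directly, I would show that the orthocomplement $\mathcal T^\perp$ inside $\mathcal Z$ contains only the zero element, by constructing enough parametric submodels of $\mathbf G_c$ to pin down any $h$ orthogonal to $\mathcal T$. The crucial constraint is that every submodel must stay inside the \emph{classical} family, so I cannot tilt $\rho$ by an arbitrary exponential factor as in Theorem~\ref{TeqZ2}; I am only allowed to perturb the $P$-function. A convenient choice is $\sigma(\theta) = \rho(P_\theta)$ with $P_\theta(\alpha) = P(\alpha)[1+\theta\, t(\alpha)]$, where $t$ is a bounded real function supported on $\{P>0\}$ with $\int d^2\alpha\, P t = 0$. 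For $\abs{\theta}$ small, $1+\theta t > 0$, so $P_\theta$ is a genuine probability density and $\sigma(\theta) \in \mathbf G_c$, with $\sigma(0) = \rho$. Differentiating gives $(\partial_\theta\sigma)_{\theta=0} = \int d^2\alpha\, P(\alpha) t(\alpha) \ket{\alpha}\bra{\alpha}$, a trace-class operator, so the associated score $S^\sigma$ belongs to the tangent set.

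Next I would feed these submodels into the orthogonality condition. For any $h \in \mathcal T^\perp$, Eq.~(\ref{cyclic}) gives $\Avg{S^\sigma,h} = \trace (\partial_\theta\sigma) h = \int d^2\alpha\, P(\alpha) t(\alpha) \bra{\alpha}h\ket{\alpha} = 0$. Since this must hold for every bounded mean-zero $t$, the diagonal coherent-state symbol $Q_h(\alpha) \equiv \bra{\alpha}h\ket{\alpha}$ is forced to be constant for almost every $\alpha \in \{P>0\}$, and the zero-mean condition $\trace \rho h = \int d^2\alpha\, P\, Q_h = 0$ forces that constant to vanish. Hence $Q_h = 0$ almost everywhere on $\{P>0\}$, a set of positive Lebesgue measure because $P$ is a probability density.

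The decisive final step is an analyticity argument that upgrades this local vanishing to $h=0$. Expanding $\ket{\alpha}$ in the Fock basis shows that $e^{\abs{\alpha}^2} Q_h(\alpha) = \sum_{m,n} (\alpha^*)^m \alpha^n (m!\,n!)^{-1/2} \bra{m}h\ket{n}$ is real-analytic in $(\real\alpha,\imag\alpha)$; this is precisely the overcompleteness of coherent states at work, and it is what makes the small classical family $\mathbf G_c$ just as ``wide'' as $\mathbf G_0$. A nonzero real-analytic function on the connected domain $\mathbb C$ cannot vanish on a set of positive measure, so $Q_h \equiv 0$ everywhere, whence every matrix element $\bra{m}h\ket{n}$ vanishes and $h=0$. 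Therefore $\mathcal T^\perp = \{0\}$ and $\mathcal T = \mathcal Z$.

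I expect the main obstacle to be making the analyticity step fully rigorous when $h$ is unbounded, since $Q_h$ and its matrix-element series are only guaranteed to be well defined where $\ket{\alpha}$ lies in the domain of $h$. One clean fix is to establish the result first for bounded $h$---using the set $\mathcal B$ of bounded elements and Eq.~(\ref{cyclic}) exactly as stated---and then extend to arbitrary square-summable $h \in \mathcal Z$ by the same density argument used in Theorem~\ref{TeqZ2}; alternatively one can observe that $\norm{h}^2 = \trace \rho h^2 = \int d^2\alpha\, P\, \norm{h\ket{\alpha}}^2 < \infty$ already guarantees $\ket{\alpha} \in \on{dom}(h)$ for almost every $\alpha \in \{P>0\}$, so the symbol is well defined exactly where it is needed. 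A secondary point to verify is the smoothness of the submodels, namely that the $P$-tilt yields a score with $\norm{S^\sigma} < \infty$; restricting to bounded $t$ should suffice.
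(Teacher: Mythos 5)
Your proposal is correct and follows essentially the same route as the paper's proof: tilt the $P$-function to build parametric submodels whose scores are quantum conditional expectations of classical score functions, use the orthogonality condition to force the Husimi symbol $\bra{\alpha}h\ket{\alpha}$ to vanish $P$-almost everywhere, conclude $h=0$ from injectivity of the Husimi representation, and handle unbounded $h$ via the density of bounded elements (Lemma~\ref{cauchy}) exactly as the paper does. The only differences are cosmetic: you use a linear tilt $1+\theta t$ with bounded $t$ where the paper uses the $1+\tanh(\theta f)$ tilt valid for all $\theta$ and unbounded $f$, and you re-prove the injectivity step through real-analyticity of $e^{\abs{\alpha}^2}\bra{\alpha}h\ket{\alpha}$ where the paper simply cites it.
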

\begin{proof}
Delegated to Appendix~\ref{sec_classicalT}.
\end{proof}
With the full-dimensional tangent space, the GHB is also given by
Eq.~(\ref{GHB_Y}). This result shows that the obvious von Neumann
measurement of $Y$ remains efficient in estimating $\beta$, and no
alternative measurements can do better, despite restricting the family
to classical states.  For example, if $f(\alpha,\alpha^*)$ is a
quadrature, then the homodyne measurement is efficient, and if
$f(\alpha,\alpha^*) = |\alpha|^2$, then
$:f(a,a^\dagger): = a^\dagger a$, and the photon-number measurement is
efficient.

$\mathcal G$, the space of positive densities, is
infinite-dimensional.  The estimation of $P$ would be a nonparametric
problem \cite{artiles05}, in contrast with the semiparametric problems
studied here. In classical statistics, it is known that a
nonparametric estimation of the probability density cannot achieve a
parametric convergence rate ($\mathsf{E} = O(1/N)$)
\cite{ibragimov81,bickel93,tsybakov}, and this difficulty is expected
to translate to the quantum domain.  Semiparametric estimation, on the
other hand, can achieve the parametric rate and is the more feasible
task if one is interested in only a few parameters of the system.

A further restriction on the family of $P$ can give very different
results, as shown in the next section in the context of incoherent
imaging.

\subsection{\label{sec_imaging}Incoherent imaging}

\subsubsection{The mother model}
Here we summarize existing results concerning the problem of
incoherent imaging \cite{tsang19a} using the language of
semiparametrics. Unlike previous sections, this section presents no
new results essentially. Rather, the goal is to use this very
important but equally difficult problem to illustrate the concepts and
current limitations of the quantum semiparametric theory.

\begin{figure}[htbp!]
\centerline{\includegraphics[width=0.45\textwidth]{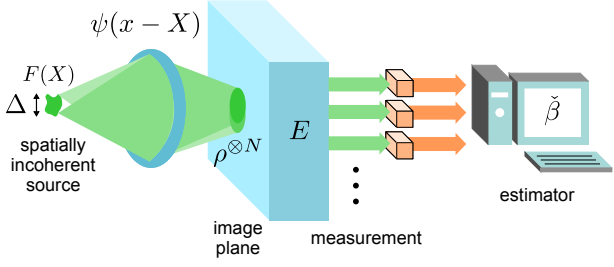}}
\caption{\label{spade}A far-field incoherent optical imaging system.}
\end{figure}

The basic setup of an imaging system is depicted in Fig.~\ref{spade}.
The object is assumed to emit spatially incoherent light at an optical
frequency. For simplicity, the imaging system is assumed to be
one-dimensional, paraxial, and diffraction-limited.  A model of each
photon on the image plane is \cite{tsang17,tsang19,tsang19a}
\begin{align}
\mathbf G &= \BK{\rho(F) = \int dX F(X)\ket{\psi_X}\bra{\psi_X}:
F \in \mathcal G_1},
\label{mother_imaging}
\\
\ket{\psi_X} &= \int dx \psi(x-X)\ket{x} = \exp(-ikX)\ket{\psi_0},
\end{align}
where $F$ is the unknown source density, $\mathcal G_1$ is a
  set of probability densities on $\mathbb R$, $X \in \mathbb R$ is
the object-plane coordinate, $\psi(x)$ is the point-spread function of
the imaging system, $x \in \mathbb R$ is the image-plane coordinate
normalized with respect to the magnification factor \cite{goodman},
$\ket{x}$ is the Dirac position ket that satisfies
$\braket{x|x'} = \delta(x-x')$, and $k$ is the canonical momentum
operator. $X$ and $x$ are further assumed to be normalized with
respect to the width of $\psi(x)$ so that they are dimensionless.
$\psi(x)$ is assumed here to be
\begin{align}
\psi(x) &= \frac{1}{(2\pi)^{1/4}} \exp\bk{-\frac{x^2}{4}},
\end{align}
such that $\ket{\psi_X} = \ket{\alpha = X/2}$ is a coherent state.
Various generalizations can be found in
Refs.~\cite{tsang17,tsang19,tsang19a,tsang19b} and references
therein. Besides imaging, the model can also be used to describe a
quantum particle under random displacements
\cite{hall09,*vidrighin,*branford19,ng16}.

The problem is semiparametric if $\mathcal G_1$ is
infinite-dimensional, such as
\begin{align}
\mathcal G_1
&= \textrm{all probability densities on $\mathbb R$},
\label{Gincoh}
\end{align}
and the parameter of interest is a functional of $F$, such as the
object moment
\begin{align}
\beta_\mu(F) &= \int   dX F(X) X^\mu,
\end{align}
where $\mu \in \mathbb N_1$ denotes the order of the moment of
interest. Notice that the family indicated by Eq.~(\ref{Gincoh}) is
much smaller than the one given by Eq.~(\ref{Gclassical}) in the
previous example, as the Glauber-Sudarshan function is now separable
in terms of $(\alpha',\alpha'')$ and confined to the real axis of
$\alpha$, viz.,
\begin{align}
P(\alpha) =2F(2\alpha')\delta(\alpha'').
\end{align}
In fact, the dimension of $\mathcal T^\perp$ is now infinite, as shown
in Appendix~\ref{sec_Tperp}, so this problem is the most difficult
type described in Sec.~\ref{sec_phil}.

The errors and their bounds are all functionals of the true density
$F$, and we will focus on their values for subdiffraction
distributions, which are defined as those with a width $\Delta$ around
$X = 0$ much smaller than the point-spread-function width, or in other
words $\Delta \ll 1$ \cite{tsang19a}.

\subsubsection{Semiparametric measurements and estimators}
Two globally unbiased measurements for semiparametric moment
estimation are known \cite{tsang19b}.  For $N$ detected photons
\footnote{ Refs.~\cite{tnl,tsang17,tsang19b} use the symbol $L$ for
  the number of detected photons, which is stochastic, and $N$ for the
  expected number of detected photons. For optics models with Poisson
  statistics, the conditioning on the detected photon number does not
  introduce any significant difference to the theory.}, both are
separable measurements and sample means in the form of \footnote{In
  practice, a histogram of the photon counts at the detectors provides
  sufficient statistics for the estimators and the photons do not need
  to be resolved individually \cite{tsang19b}.}
\begin{align}
E^{(N)}(\mathcal A_1,\mathcal A_1,\dots,
\mathcal A_N) &= \bigotimes_{n=1}^N E(\mathcal A_n),
\quad
\mathcal A_n \in \Sigma_{\mathcal X},
\\
\check\beta^{(N)}(\lambda_1,\lambda_2,\dots,\lambda_N) 
&= \frac{1}{N}\sum_{n=1}^N \check\beta(\lambda_n),
\quad
\lambda_n \in \mathcal X.
\end{align}
The first measurement is direct imaging, which measures the intensity
on the image plane and is equivalent to the projection of each photon
in the position basis as
\begin{align}
dE^{({\rm direct})}(x) &=  dx \ket{x}\bra{x},
\quad
x \in \mathcal X = \mathbb R.
\end{align}
An unbiased semiparametric estimator is given by the sample mean of
\begin{align}
\check\beta_\mu^{(\rm direct)}(x) &= \sum_{\nu=0}^\mu (C^{-1})_{\mu\nu} x^\nu,
\\
C_{\mu\nu} &= 
\iverson{\mu\ge\nu}\begin{pmatrix}\mu\\ \nu\end{pmatrix}
\int dx |\psi(x)|^2 x^{\mu-\nu},
\\
\iverson{{\rm proposition}} &= \begin{dcases}
1, & {\rm if\ proposition\ is\ true},\\
0, & {\rm otherwise,}
\end{dcases}
\end{align}
and the error is
\begin{align}
\mathsf{E}_\mu^{(\rm direct)} &= \frac{O(1)}{N},
\label{MSE_direct}
\end{align}
where $O(1)$ denotes a prefactor that does not scale with $\Delta$ in
the first order.  The second measurement is the so-called spatial-mode
demultiplexing or SPADE \cite{tnl,tsang17,tsang19,tsang19a,tsang19b},
which demultiplexes the image-plane light in the Hermite-Gaussian
basis given by
\begin{align}
\ket{\phi_m} &= \int dx \phi_m(x)\ket{x},
\quad
m \in \mathbb N_0,
\\
\phi_m(x) &= \frac{\operatorname{He}_m(x)}{(2\pi)^{1/4}\sqrt{m!}}
\exp\bk{-\frac{x^2}{4}},
\end{align}
where $\operatorname{He}_m(x)$ is a Hermite polynomial \cite{olver}.
For the estimation of an even moment with $\mu = 2j$, the POVM
for each photon is
\begin{align}
E^{({\rm SPADE})}(m) &= \ket{\phi_m}\bra{\phi_m},
\quad m \in \mathcal X = \mathbb N_0,
\end{align}
an unbiased semiparametric estimator is given by the sample mean of
\begin{align}
\check\beta_{2j}^{(\rm SPADE)}(m) &= \iverson{m\ge j} \frac{4^j m!}{(m-j)!},
\end{align}
and the error is 
\begin{align}
\mathsf{E}_{2j}^{({\rm SPADE})} &= \frac{O(\Delta^{2j})}{N}
=\frac{O(\Delta^{\mu})}{N},
\label{MSE_SPADE}
\end{align}
which is much lower than that of direct imaging in the subdiffraction
regime for the second and higher moments. For the estimation of odd
moments with SPADE, only approximate results have been obtained so far
\cite{tsang17,tsang18a,zhou19,bonsma19} and are not elaborated here.

Both estimators are efficient for their respective measurements in the
classical sense \cite{tsang19b}. In the quantum case, the question is
whether SPADE is efficient or there exist even better
measurements. Computing the GHB, or at least bounding it, would answer
the question and establish the fundamental quantum efficiency for
incoherent imaging.


\subsubsection{Lower bounds via parametric submodels}
Both Eqs.~(\ref{MSE_direct}) and (\ref{MSE_SPADE}) are upper bounds on
the GHB. By virtue of Proposition~\ref{loose}, all earlier quantum
lower bounds derived for incoherent imaging via parametric models are
in fact lower bounds on the GHB for the mother family given by
Eq.~(\ref{mother_imaging}), with the true $\rho$ being evaluated at
certain special cases of
$F$. References~\cite{tnl,bisketzi19,*lupo20a}, for example, assume
discrete point sources, but exact results become difficult to obtain
for a large number of sources. Here we highlight two methods that work
for any $F$ but can only give looser bounds.

The first method is the culmination of Ref.~\cite[Sec.~6]{tsang17} and
Ref.~\cite[Appendix~C]{tsang19}.  Assume that
\begin{align}
\theta &= \begin{pmatrix}\theta_g\\ \theta_h
\end{pmatrix}
\end{align}
consists of two sets of parameters
$\theta_g = (\theta_{g1},\theta_{g2},\dots)^\top$ and
$\theta_h = (\theta_{h0},\theta_{h1},\dots)^\top$.  Define a submodel
given by
\begin{align}
\sigma(\theta) &= \int dX F(X|\theta)\ket{\psi_X}\bra{\psi_X},
\\
F(X|\theta) &= \int dY \delta\bk{X-h(Y|\theta_h)} G(Y|\theta_g),
\\
G(Y|\theta_g) &= \frac{[1+\tanh g(Y|\theta_g)] F(Y)}
{\int dY [1+\tanh g(Y|\theta_g)] F(Y)}.
\end{align}
The truth is at 
\begin{align}
\sigma(0) &= \rho, &
F(X|0) &= G(X|0) = F(X),
\\
h(Y|0) &= Y, & g(Y|0) &= 0.
\end{align}
$\sigma(\theta)$ can be rewritten as
\begin{align}
\sigma(\theta) &= \int dY G(Y|\theta_g) 
\ket{\psi_{h(Y|\theta_h)}}\bra{\psi_{h(Y|\theta_h)}}.
\label{sigma_sub}
\end{align}
In other words, we have introduced parameters to both the mixing
density and the displacement in the model by rewriting the mixture.
Appendix~\ref{sec_EC} shows how the extended convexity of the Helstrom
information \cite{alipour,ng16} can be used on Eq.~(\ref{sigma_sub})
to give
\begin{align}
\tilde{\mathsf{H}}_\mu^{(N)} &\ge
\frac{\mathsf{H}_\mu^\sigma}{N} \ge 
\frac{\beta_{2\mu}-\beta_\mu^2 + \mu^2\beta_{2\mu-2}}{N}
= \frac{O(\Delta^{2\mu-2})}{N}.
\label{ECbound}
\end{align}
A more careful calculation shows that the SPADE error is exactly equal
to this bound for $\mu = 2$ \cite{tsang19b}. For higher moments,
however, Eq.~(\ref{ECbound}) remains much lower than that achievable
by SPADE.

The second method, as reported in Ref.~\cite{tsang19}, considers the
formal expansion $\exp(-ikX) = \sum_{p=0}^\infty (-ikX)^p/p!$, which
leads to
\begin{align}
\sigma(\theta) &= \sum_{p_1=0}^\infty \sum_{p_2=0}^\infty\beta_{p_1+ p_2}
\frac{(-ik)^{p_1}}{p_1!}\ket{\psi_0}\bra{\psi_0} \frac{(ik)^{p_2}}{p_2!}.
\end{align}
Consider this as a parametric submodel with only one scalar
parameter $\theta = \beta_\mu$ for a given $\mu$ , while all the
other moments $\beta_\nu$ with $\nu \neq \mu$
are fixed. Then the Helstrom bound for this
submodel is simply $\mathsf{H}_\mu^\sigma = 1/K^\sigma_{\mu\mu}$,
where $K^\sigma_{\mu\mu}$ is the Helstrom information
with respect to $\theta = \beta_\mu$. Reference~\cite{tsang19} finds
via a purification technique that this Helstrom bound is in turn bounded by
\begin{align}
\mathsf{H}_\mu^\sigma &=\frac{1}{K^\sigma_{\mu\mu}}
 \ge O(\Delta^{2\lfloor\mu/2\rfloor}).
\end{align}
By virtue of Corollary~\ref{iid2} and Proposition~\ref{loose}, 
we obtain
\begin{align}
\tilde{\mathsf{H}}_\mu^{(N)} &\ge 
\frac{\mathsf{H}_\mu^\sigma}{N} 
\ge \frac{O(\Delta^{2\lfloor\mu/2\rfloor})}{N}.
\end{align}
This lower bound does match the performance of SPADE in order of
magnitude, but it does not have a simple closed-form expression, and
the question of whether SPADE is exactly efficient for moments higher
than the second remains open.


\section{\label{sec_displacement}Semiparametric estimation with explicit
  nuisance parameters}

\subsection{The efficient score operator}
We now consider problems where there is an explicit partition of the
parameters into a scalar $\beta$ and nuisance parameters $\eta$ that
may be infinite-dimensional, viz.,
\begin{align}
\mathbf G &= \BK{\rho(\beta,\eta): \beta \in \Theta_\beta \subseteq \mathbb R, 
\eta \in \mathcal G}.
\end{align}
An example is the displacement model given by
Eq.~(\ref{displacement}), where $\beta$ is the displacement parameter
and the initial state $\rho_0$ depends on the nuisance parameters. All
previous studies of the problem assume that $\rho_0$ is known
exactly. In practice, however, $\rho_0$ may be poorly characterized,
and the estimation performance in the presence of unknown nuisance
parameters may suffer as a result.


With the explicit partition of the parameters, the scores can
  be partitioned similarly. Let $S^\beta$ be the score with respect to
  the parameter of interest, as defined by
\begin{align}
\parti{\rho(\beta,\eta)}{\beta} &= \rho \circ S^\beta,
\end{align}
where $\eta$ is fixed at the truth.
To define the nuisance scores, consider the subfamily
\begin{align}
\mathbf G_\eta &\equiv \BK{\rho(\beta,\eta): \eta \in \mathcal G},
\end{align}
which holds $\beta$ fixed at the truth instead.  Define the nuisance
tangent set $\{S^\eta\}$ as the set of scores from all parametric
submodels of $\mathbf G_\eta$ and the nuisance tangent space as
\begin{align}
\Lambda &\equiv \cspn\BK{S^\eta}.
\end{align}
The unbiasedness condition for an influence operator becomes
\begin{align}
\Avg{S^\beta,\delta} &= \parti{\beta}{\beta} = 1,
&
\Avg{S^\eta,\delta} &= 0.
\label{unbiased_eta}
\end{align}
The second of Eqs.~(\ref{unbiased_eta}) implies that
$\delta \perp \Lambda$, so if $S^\beta \in \Lambda$,
$\avg{S^\beta,\delta} = 0$, and no influence operator that obeys both
Eqs.~(\ref{unbiased_eta}) can exist.  In that case we assume the GHB
to be infinite. Provided that $S^\beta \notin \Lambda$, however, the
following theorem provides another method of computing the efficient
influence and the GHB.
\begin{theorem}
  Assuming $S^\beta \notin \Lambda$ and the unbiasedness condition
  given by Eqs.~(\ref{unbiased_eta}), the efficient influence and the
  GHB are given by
\begin{align}
\delta_{\rm eff} &= \frac{S_{\rm eff}}{\norm{S_{\rm eff}}^2},
&
\tilde{\mathsf{H}} &= \frac{1}{\norm{S_{\rm eff}}^2},
\end{align}
where $S_{\rm eff}$, henceforth called the efficient score, is given
by
\begin{align}
S_{\rm eff} &= S^\beta - \Pi(S^\beta|\Lambda).
\label{Seff}
\end{align}
\label{thm_Seff}
\end{theorem}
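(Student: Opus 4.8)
The plan is to exhibit the claimed $\delta_{\rm eff} = S_{\rm eff}/\norm{S_{\rm eff}}^2$ explicitly and to show that it is the \emph{unique} element of the influence-operator set $\mathcal D$ that also lies in the full tangent space $\mathcal T$; by Theorem~\ref{thm_SHB} (equivalently Theorem~\ref{thm_GHB}) this characterization identifies it as the efficient influence. The starting observation is that the full tangent space here is $\mathcal T = \cspn\bk{\spn\BK{S^\beta}\cup\Lambda}$, because any parametric submodel of $\mathbf G$ varies both $\beta$ and $\eta$ and its score is, to first order, a linear combination of $S^\beta$ and the nuisance scores. In particular $S^\beta\in\mathcal T$ and $\Lambda\subseteq\mathcal T$, and $S_{\rm eff} = S^\beta - \Pi(S^\beta|\Lambda)$ is by construction the component of $S^\beta$ orthogonal to the closed subspace $\Lambda$, so that $S_{\rm eff}\perp\Lambda$.

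First I would record that the hypothesis $S^\beta\notin\Lambda$ guarantees $S_{\rm eff}\neq 0$, hence $\norm{S_{\rm eff}}^2>0$ and both claimed quantities are well defined. Next I would verify membership in $\mathcal D$, i.e.\ the two unbiasedness conditions of Eqs.~(\ref{unbiased_eta}). Since each nuisance score $S^\eta\in\Lambda$ while $S_{\rm eff}\perp\Lambda$, one has $\avg{S^\eta,S_{\rm eff}}=0$ and therefore $\avg{S^\eta,\delta_{\rm eff}}=0$. For the $\beta$ direction, the orthogonal decomposition $S^\beta = S_{\rm eff}+\Pi(S^\beta|\Lambda)$ together with $\avg{\Pi(S^\beta|\Lambda),S_{\rm eff}}=0$ yields $\avg{S^\beta,S_{\rm eff}}=\norm{S_{\rm eff}}^2$, whence $\avg{S^\beta,\delta_{\rm eff}}=1$. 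Thus $\delta_{\rm eff}\in\mathcal D$.

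The remaining step is to confirm that this $\delta_{\rm eff}$ lies in $\mathcal T$ and is the unique such element of $\mathcal D$. Because $S^\beta\in\mathcal T$ and $\Pi(S^\beta|\Lambda)\in\Lambda\subseteq\mathcal T$, we get $S_{\rm eff}\in\mathcal T$, so $\delta_{\rm eff}\in\mathcal D\cap\mathcal T$. For uniqueness, any two elements of $\mathcal D\cap\mathcal T$ differ by an element $g$ that lies in $\mathcal T$ yet satisfies $\avg{S,g}=0$ (since both obey the same unbiasedness constraints), so $g\in\mathcal T\cap\mathcal T^\perp=\BK{0}$; this is precisely the uniqueness already proved inside Theorem~\ref{thm_GHB}. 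Hence $\delta_{\rm eff}=\Pi(\delta|\mathcal T)$ for any influence operator $\delta\in\mathcal D$, and finally $\tilde{\mathsf H}=\norm{\delta_{\rm eff}}^2=\norm{S_{\rm eff}}^2/\norm{S_{\rm eff}}^4=1/\norm{S_{\rm eff}}^2$.

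The computation itself is routine Hilbert-space geometry, mirroring the classical efficient-score construction; the one point deserving care is the structural identification $\mathcal T=\cspn\bk{\spn\BK{S^\beta}\cup\Lambda}$ together with the well-definedness of the projection $\Pi(\cdot|\Lambda)$, which requires $\Lambda$ to be a closed subspace so that the orthogonal splitting $S^\beta=\Pi(S^\beta|\Lambda)+S_{\rm eff}$ exists. Since $\Lambda\equiv\cspn\BK{S^\eta}$ is closed by definition, this holds automatically, and I expect no genuine obstacle beyond stating these facts clearly.
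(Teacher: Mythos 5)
Your proposal is correct and follows essentially the same route as the paper's proof in Appendix~\ref{sec_Seff}: construct $\delta = S_{\rm eff}/\norm{S_{\rm eff}}^2$, verify the two unbiasedness conditions of Eqs.~(\ref{unbiased_eta}) via the orthogonal decomposition $S^\beta = S_{\rm eff} + \Pi(S^\beta|\Lambda)$ and the fact $S_{\rm eff} \perp \Lambda$, observe that $\delta \in \mathcal T$ because $S^\beta \in \mathcal T$ and $\Pi(S^\beta|\Lambda) \in \Lambda \subseteq \mathcal T$, and invoke Theorem~\ref{thm_SHB} to conclude $\delta_{\rm eff} = \Pi(\delta|\mathcal T) = \delta$. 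Your added uniqueness argument for elements of $\mathcal D \cap \mathcal T$ and the identification of $\mathcal T$ as the closed span of $\spn\BK{S^\beta}\cup\Lambda$ merely make explicit what the paper leaves implicit in its appeal to Theorems~\ref{thm_GHB} and \ref{thm_SHB}, so the two proofs coincide in substance.
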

\begin{proof}
Delegated to Appendix~\ref{sec_Seff}.
\end{proof}
Figure~\ref{efficient_score} illustrates the Hilbert-space concepts
involved in Theorem~\ref{thm_Seff}. We note that
  Ref.~\cite[Sec.~5]{suzuki19} has also arrived at conclusions similar
  to Theorem~\ref{thm_Seff} in the parametric case, but the crucial
  point here is the Hilbert-space approach, which will enable us to
  derive closed-form solutions to semiparametric problems, as shown in
  the next section.

\begin{figure}[htbp!]
\centerline{\includegraphics[width=0.45\textwidth]{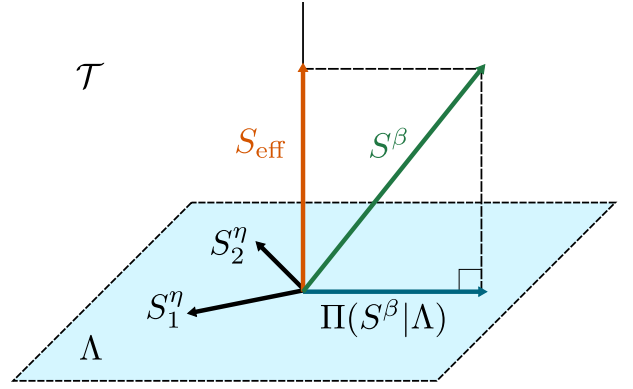}}
\caption{\label{efficient_score} The whole space in the picture
  represents the tangent space $\mathcal T$.  $\Lambda$ is the
  nuisance tangent space spanned by the nuisance tangent set
  $\{S^\eta\}$.  $S^\beta$ is the score with respect to the parameter
  of interest. The efficient score $S_{\rm eff}$ is $S^\beta$ minus
  its projection $\Pi(S^\beta|\Lambda)$. The result is orthogonal to
  $\Lambda$.}
\end{figure}

\subsection{Displacement estimation with a constrained 
family of initial states}

Consider the displacement model given by Eq.~(\ref{displacement}) and
illustrated by Fig.~\ref{displacement_example}.  For high-dimensional
systems, only a few moments of the initial state $\rho_0$ may be known
in practice, and it is prudent to assume that $\rho_0$ is in the
constrained family $\mathbf G_\gamma$ defined by
Eq.~(\ref{family_con}). The density-operator family for the problem
can be expressed as
\begin{align}
\mathbf G &= \BK{\rho(\beta,\rho_0)
= \mathcal U_\beta \rho_0: \beta \in \Theta_\beta \subseteq \mathbb R,
\rho_0 \in \mathbf G_\gamma},
\label{semi_displacement}
\end{align}
where the unitary map $\mathcal U_\beta$ is defined as
\begin{align}
\mathcal U_\beta\rho_0 &\equiv \exp\bk{-iH\beta} \rho_0 \exp\bk{iH\beta}.
\label{displace}
\end{align}
Generalization for more complicated generators is possible \cite{twc}
but outside the scope of this paper.

\begin{figure}[htbp!]
\centerline{\includegraphics[width=0.45\textwidth]{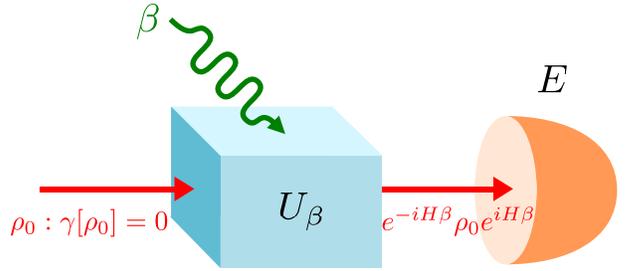}}
\caption{\label{displacement_example}A schematic of the semiparametric
  displacement model given by Eq.~(\ref{semi_displacement}).}
\end{figure}

Define an inner product and a norm with respect to the true $\rho_0$
as
\begin{align}
  \Avg{h_1,h_2}_0 &\equiv \trace \rho_0 h_1 \circ h_2,
&
    \norm{h}_0 &\equiv \sqrt{\Avg{h,h}_0}.
\end{align}
Define also the operator Hilbert space $\mathcal Z_0$ with
  respect to $\rho_0$, the tangent space $\mathcal T_0$ at $\rho_0$
  with respect to $\mathbf G_\gamma$, and the orthocomplement
  $\mathcal T_0^\perp$ that gives
  $\mathcal Z_0 = \mathcal T_0 \oplus \mathcal T_0^\perp$, in the same
  way as how the spaces $\mathcal Z$, $\mathcal T$, and
  $\mathcal T^\perp$ are defined with respect to $\rho$.  Noting the
  unitarity of $\mathcal U_\beta$ and following the method in
  Appendix~\ref{sec_iid2}, it can be shown that the nuisance tangent
  space is given by
\begin{align}
\Lambda &= \mathcal U_\beta\mathcal T_0 \equiv 
\BK{\mathcal U_\beta h: h \in \mathcal T_0}.
\label{semitangent2}
\end{align}
Define the map adjoint to $\mathcal U_\beta$ by
$\mathcal U_\beta^*h \equiv \exp(iH\beta) h \exp(-iH\beta)$.  Exploiting the
isomorphism between $\Lambda$ and $\mathcal T_0$, we can compute the
efficient score as follows:
\begin{align}
S_{\rm eff} &= S^\beta - \Pi(S^\beta|\Lambda)
= S^\beta - \mathcal U_\beta\Pi(\mathcal U_\beta^*S^\beta|\mathcal T_0)
\\
&= S^\beta - \mathcal U_\beta\Bk{\mathcal U_\beta^*S^\beta - \Pi(\mathcal U_\beta^*S^\beta|\mathcal T_0^\perp)}
\\
&= \mathcal U_\beta\Pi(\mathcal U_\beta^*S^\beta|\mathcal T_0^\perp)
\\
&= 
\Avg{R,\mathcal U_\beta^* S^\beta}_0^\top \Avg{R,R}_0^{-1}
\mathcal U_\beta R,
\label{Seff_R1}
\end{align}
where $R$ is the vector of antiscores with respect to $\rho_0$, as
defined by Eq.~(\ref{Rgradient}) but with $\rho_0$ and
$\avg{\cdot,\cdot}_0$ instead. 
Equation~(\ref{Seff_R1}) can be further simplified, with
\begin{align}
\Avg{R,\mathcal U_\beta^* S^\beta}_0 &= \Avg{\mathcal U_\beta R,S^\beta} 
\\
&= -i \trace \rho \Bk{\mathcal U_\beta R,H}
\label{comm_op}
\\
&= -i \trace \rho_0\Bk{R,H} = \Bk{R,H}_0,
\end{align}
where $[A,B]_{jk} \equiv A_jB_k-B_kA_j$ and $[\cdot,\cdot]_0$ is
shorthand for $-i\trace \rho_0[\cdot,\cdot]$. Equation~(\ref{comm_op})
comes from the fact that $S^\beta = \mathfrak D H$ for the
  model given by Eq.~(\ref{displace}), where $\mathfrak D$ is the
  so-called commutation superoperator defined by
  \cite{holevo11,holevo77}
\begin{align}
\Avg{h,\mathfrak D H} &= -i\trace \rho\Bk{h,H}
\quad
\forall h \in \mathcal Y.
\label{commutation}
\end{align}
The final result is
\begin{align}
S_{\rm eff} &= \Bk{R,H}_0^\top\Avg{R,R}_0^{-1}\mathcal U_\beta R,
\label{Seff_R2}
\\
\norm{S_{\rm eff}}^2 &=
\Bk{R,H}_0^\top \Avg{R,R}_0^{-1}\Bk{R,H}_0,
\label{Seff_norm}
\\
\tilde{\mathsf{H}}^{(N)} &= \frac{1}{N\norm{S_{\rm eff}}^2}.
\label{GHB_semi}
\end{align}
In particular, if the constraint is linear and a scalar given by
\begin{align}
\trace \rho_0 Z &= 0,
&
R &= Z,
\end{align}
then
\begin{align}
\tilde{\mathsf{H}}^{(N)} &=  \frac{\norm{Z}_0^2}{N[Z,H]_0^2},
\label{GHB_semi2}
\end{align}
which gives Eq.~(\ref{displacement_bound}). $\norm{Z}_0^2$ is the variance of
$Z$, while
\begin{align}
[Z,H]_0 = -i\trace \rho_0\Bk{Z,H} = 
\trace \rho_0 \left.\parti{}{\beta} \mathcal U_\beta^*Z\right|_{\beta=0}
\end{align}
is a measure of how sensitive the Heisenberg-picture $Z$ is to the
displacement. An intuitive explanation of this result is as follows. A displacement
can be estimated only with respect to a known reference.  If only the
mean of $Z$ is known about the initial state, then it is the only
reference in the quantum object that is available to the observer. It
is therefore not surprising---in hindsight---that the statistics of
$Z$ determine the fundamental limit.

If $H$ is the momentum operator and $Z$ is the position operator
satisfying $[Z,H] = i$, the Heisenberg picture of $Z$ is
\begin{align}
\mathcal U_\beta^* Z &= \beta + Z,
\end{align}
which is a quantum additive-noise model with no known statistics about
the noise operator $Z$ other than its mean.  Measurements of $Z$ and
the sample mean of the outcomes are efficient. This problem then
becomes equivalent to the $\beta = \trace\rho Y$ example, but note
that Eqs.~(\ref{Seff_norm}) and (\ref{GHB_semi}) are more general, as
they can deal with any generator, a $\beta$ that cannot be easily
expressed as a functional of $\rho$, and more general constraints.

Another example is optical phase estimation with
\begin{align}
H = a^\dagger a = \frac{1}{2}\bk{Z_1^2+Z_2^2},
\end{align}
and constraint $\trace \rho_0 Z = \zeta$ on the mean of the quadrature
operators $Z = (Z_1,Z_2)^\top$ with $[Z_1,Z_2] = i$. There is no phase
observable \cite{mandel}, so expressing $\beta$ as a functional of
$\rho$ is difficult if not impossible. Equations~(\ref{Seff_norm}) and
(\ref{GHB_semi}), on the other hand, are simple expressions in terms
of the generator and the antiscores.  In
Eqs.~(\ref{Seff_R2})--(\ref{GHB_semi}), $R = Z - \zeta$,
\begin{align}
\Avg{R_j,R_k}_0 = \trace \rho_0 \bk{Z_j-\zeta_j}\circ \bk{Z_k-\zeta_k}
\end{align}
is simply the covariance matrix of the quadratures, while
\begin{align}
\Bk{R_1,H}_0 &= \Bk{Z_1,H}_0 = \trace \rho_0Z_2 = \zeta_2, 
\\
\Bk{R_2,H}_0 &= \Bk{Z_2,H}_0 = -\trace \rho_0 Z_1 = -\zeta_1
\end{align}
are the mean quadrature values. The efficient influence
$\delta_{\rm eff}\propto S_{\rm eff}$ is a linear combination of the
quadratures according to Eq.~(\ref{Seff_R2}), indicating the ideal,
though parameter-dependent, quadrature to be measured.  An adaptive
measurement can then aim to measure the ideal quadrature to approach
the quantum limit.

When $\rho_0$ is exactly known, the Helstrom bound for displacement
estimation has been computed exactly only if $\rho_0$ is pure or
Gaussian. Only looser bounds have been found otherwise
\cite{helstrom,holevo11,demkowicz15}.  The Mandelstam-Tamm inequality,
for example, is looser than the Helstrom bound for mixed states
\cite{holevo11}. $S^\beta$ is determined by $\mathfrak D H$, and if
$\rho_0$ is a high-dimensional non-Gaussian mixed state, $S^\beta$ is
intractable. With the infinitely many nuisance parameters and
infinitely many scores assumed here, the problem is hopeless under the
conventional bottom-up approach. The top-down geometric approach, on
the other hand, is able to avoid the computation of the scores
altogether and give a simple result in terms of the more tractable
antiscores.

\section{\label{sec_multi}Vectoral parameter of interest}
To complete the formalism, here we generalize the core results in this
paper for a vectoral parameter of interest $\beta \in \mathbb R^q$
with $q \ge 1$ entries. $p$, the dimension of the parameter space,
should be at least as large as $q$ and may be infinite. Define the
error matrix as
\begin{align}
\Sigma &\equiv \int \Bk{\check\beta(\lambda)-\beta} 
\Bk{\check\beta(\lambda)-\beta}^\top \trace dE(\lambda)\rho,
\end{align}
where $\check\beta:\mathcal X \to \mathbb R^q$ is an estimator.  An
influence operator should then be a vector of $q$ operators. The
inner product between two vectoral operators and the norm are now
defined as
\begin{align}
\trace \Avg{h,g} &= \sum_{j=1}^q \trace \rho\bk{h_j\circ g_j},
&
\norm{h} &\equiv \sqrt{\trace\Avg{h,h}}.
\end{align}
The Hilbert spaces $\mathcal Y$ and $\mathcal Z$ for the vectoral
operators are still expressed as Eqs.~(\ref{Y}) and (\ref{Z}), while
the tangent space is now defined as the replicating space
\cite{tsiatis06}
\begin{align}
\mathcal T &\equiv \bk{\cspn\BK{S}}^{\oplus q} 
\equiv \underbrace{\cspn\BK{S} \oplus \dots \oplus \cspn\BK{S}}
_{q \textrm{ terms}}.
\end{align}
The set of influence operators is still given by
Eq.~(\ref{influence_set}) if $\avg{S,\delta} = \partial\beta$ is
interpreted as $\avg{S^\sigma,\delta_k} = \partial\beta_k$ for all
submodels and $k = 1,\dots,q$.  For an unbiased measurement, the error
operator given by Eq.~(\ref{error}) remains an element of
$\mathcal D$, and it can be shown \cite[Sec.~6.2]{holevo11} that
\begin{align}
\Sigma &\ge \Avg{\delta,\delta},
\end{align}
where the matrix inequality $A \ge B$ means that $A-B$ is
positive-semidefinite. The GHB can then be expressed as
\begin{align}
\mathsf{E} &\equiv \trace W \Sigma 
\ge \trace W \Avg{\delta,\delta}
\ge \inf_{\delta \in\mathcal D} \trace W \Avg{\delta,\delta}
\equiv \tilde{\mathsf{H}},
\end{align}
where $W \ge 0$ is a real cost matrix
\cite{hayashi}. Generalizing Theorems~\ref{thm_GHB} and
\ref{thm_SHB}, we have
\begin{theorem}
\label{thm_GHB_vec}
The GHB for a vectoral parameter of interest is given by
\begin{align}
\tilde{\mathsf{H}} &=
\min_{\delta \in\mathcal D} \trace W \Avg{\delta,\delta}
= \trace W \Avg{\delta_{\rm eff},\delta_{\rm eff}},
\label{GHB_vec}
\end{align}
where the efficient influence $\delta_{\rm eff}$ is the unique element
in $\mathcal D$ given by
\begin{align}
\delta_{\rm eff} &= \Pi(\delta|\mathcal T).
\end{align}
\end{theorem}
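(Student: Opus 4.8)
The plan is to follow the proof of Theorem~\ref{thm_GHB} almost verbatim, but to perform the Pythagorean step at the level of the $q\times q$ Gram matrix $\avg{\delta,\delta}$ and to bring in the cost matrix $W$ only at the very end. The structural fact I will lean on is that the replicating space $\mathcal T=\bk{\cspn\BK{S}}^{\oplus q}$ is an \emph{orthogonal} direct sum under the vectoral inner product $\trace\avg{h,g}=\sum_k\avg{h_k,g_k}$, so $\Pi(\cdot|\mathcal T)$ decouples across the $q$ slots, with $\Bk{\Pi(\delta|\mathcal T)}_k=\Pi(\delta_k|\cspn\BK{S})$; correspondingly $h=\Pi(\delta|\mathcal T^\perp)$ has each component obeying $\avg{S,h_k}=0$.

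First I would take any $\delta\in\mathcal D$ and split it as $\delta=\delta_{\rm eff}+h$ with $\delta_{\rm eff}=\Pi(\delta|\mathcal T)$ and $h=\Pi(\delta|\mathcal T^\perp)$. Because $\avg{S,h_k}=0$ for every $k$, the unbiasedness condition survives the projection, $\avg{S,\delta_{{\rm eff},k}}=\avg{S,\delta_k}=\partial\beta_k$, so $\delta_{\rm eff}\in\mathcal D$. I would then expand the Gram matrix entrywise, $\avg{\delta,\delta}_{jk}=\avg{\delta_{{\rm eff},j}+h_j,\delta_{{\rm eff},k}+h_k}$, and note that every cross term vanishes---$\avg{\delta_{{\rm eff},j},h_k}=0$ for \emph{all} $j,k$, since $\delta_{{\rm eff},j}\in\cspn\BK{S}$ while $h_k$ is orthogonal to $\cspn\BK{S}$. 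This yields the matrix identity
\begin{align}
\avg{\delta,\delta}=\avg{\delta_{\rm eff},\delta_{\rm eff}}+\avg{h,h},
\end{align}
and since $\avg{h,h}$ is a Gram matrix of $h_1,\dots,h_q$ it is positive semidefinite, giving $\avg{\delta,\delta}\ge\avg{\delta_{\rm eff},\delta_{\rm eff}}$ as a matrix inequality.

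Finally I would contract with the cost: as $W\ge0$ and $\avg{h,h}\ge0$, the trace of their product is nonnegative, so $\trace W\avg{\delta,\delta}\ge\trace W\avg{\delta_{\rm eff},\delta_{\rm eff}}$, which is Eq.~(\ref{GHB_vec}). For the uniqueness clause I would observe that for any $\delta,\delta'\in\mathcal D$ the difference satisfies $\avg{S,(\delta-\delta')_k}=0$ for every $k$, hence $\delta-\delta'\in\mathcal T^\perp$ and $\Pi(\delta|\mathcal T)=\Pi(\delta'|\mathcal T)$; thus the projection returns the same element of $\mathcal D$ irrespective of which influence operator is fed in. The one genuinely new wrinkle relative to the scalar proof is that optimality must be secured for the whole matrix $\avg{\delta,\delta}$ \emph{before} $W$ enters---this is what forces me to kill the off-diagonal cross terms $\avg{\delta_{{\rm eff},j},h_k}$ with $j\neq k$, not merely the diagonal ones---and it is the elementary fact $\trace W\avg{h,h}\ge0$ for an arbitrary $W\ge0$ that then makes a single $\delta_{\rm eff}$ simultaneously optimal for every admissible cost matrix, which I expect to be the only step requiring real care.
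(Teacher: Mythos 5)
Your proof is correct and follows essentially the same route as the paper's: decompose $\delta=\Pi(\delta|\mathcal T)+\Pi(\delta|\mathcal T^\perp)$, observe that orthogonality to the replicating space forces \emph{each component} of the $\mathcal T^\perp$ part to be orthogonal to $\cspn\BK{S}$ (the paper derives this by testing against single-entry vectors $g=(0,\dots,e,\dots,0)^\top$, which is the same fact you invoke as slot-wise decoupling of the projection), yielding the matrix Pythagorean identity $\Avg{\delta,\delta}=\Avg{\delta_{\rm eff},\delta_{\rm eff}}+\Avg{h,h}$ and then the bound by contracting with $W\ge 0$. Your uniqueness argument (the projection is independent of the representative $\delta\in\mathcal D$ since differences lie in $\mathcal T^\perp$) is phrased as well-definedness rather than the paper's unique attainment of the minimal Gram matrix, but the two are equivalent given your Pythagorean identity, so nothing is missing.
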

\begin{proof}
  Delegated to Appendix~\ref{sec_GHB_vec}.
\end{proof}
It is straightforward to generalize the methods introduced in this
paper to compute the GHB for the vectoral case.

Holevo proposed another bound, denoted in the following by the
sans-serif $\mathsf X$, that can account for the quantum effect of
observable incompatibility in multiparameter estimation
\cite{holevo11,nagaoka89}. \PRXagain{Before we prove the bound and
  related results, we need the following lemma.

\begin{lemma}[Belavkin and Grishanin \cite{belavkin73}]
For any complex positive-semidefinite matrix $A$,
\begin{align}
\trace \real A &\ge \norm{\imag A}_1,
\end{align}
where $\real A$ and $\imag A$ denote the entry-wise real and imaginary
parts of $A$, respectively, and $\norm{\cdot}_1$ denotes the trace
norm, defined as the sum of the singular values.
\label{lem_belavkin}
\end{lemma}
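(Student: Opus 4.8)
\emph{Proof proposal.} The plan is to exploit the structure that positive semidefiniteness forces on $A$. Since $A \ge 0$ entails $A = A^\dagger$, writing $A = \real A + i\,\imag A$ shows that $\real A$ is a real \emph{symmetric} matrix while $\imag A$ is a real \emph{skew-symmetric} matrix. Two facts follow immediately: a skew-symmetric matrix has vanishing diagonal, so $\trace \real A = \trace A \in \mathbb R$; and, writing $Y \equiv \imag A$, the singular values of a real skew-symmetric matrix occur in equal pairs $\mu_k \ge 0$, so that $\norm{Y}_1 = 2\sum_k \mu_k$. The claim therefore reduces to $\trace \real A \ge 2\sum_k \mu_k$.

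First I would bring $Y$ to its real canonical form. There is a real orthogonal $O$ for which $O^\top Y O$ is block diagonal with $2\times2$ blocks $\bk{\begin{smallmatrix} 0 & \mu_k\\ -\mu_k & 0\end{smallmatrix}}$ and zeros elsewhere. Conjugating the full matrix, $O^\top A O$ is again positive semidefinite, since for any complex $v$ one has $v^\dagger O^\top A O v = (Ov)^\dagger A (Ov) \ge 0$ (using that $O$ is real). Because $O,\real A,Y$ are all real, the real and imaginary parts of $O^\top A O$ are exactly $O^\top(\real A)O$ and $O^\top Y O$; orthogonal conjugation leaves $\trace \real A$ invariant and preserves the singular values of $Y$, hence $\norm{Y}_1$. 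Thus without loss of generality $Y$ is already in canonical form.

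The estimate is then blockwise. Writing $x_{ij}$ for the entries of $\real A$, the $2\times2$ principal submatrix of $A$ on indices $(2k-1,2k)$ is the Hermitian matrix $\bk{\begin{smallmatrix} x_{2k-1,2k-1} & x_{2k-1,2k}+i\mu_k\\ x_{2k-1,2k}-i\mu_k & x_{2k,2k}\end{smallmatrix}}$, which is positive semidefinite as a principal submatrix of a positive-semidefinite matrix. Its nonnegative determinant gives $x_{2k-1,2k-1}\,x_{2k,2k} \ge x_{2k-1,2k}^2 + \mu_k^2 \ge \mu_k^2$, whence AM--GM yields $x_{2k-1,2k-1} + x_{2k,2k} \ge 2\mu_k$. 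Summing over all blocks and discarding the remaining diagonal entries of $\real A$---which are nonnegative, being diagonal entries of a positive-semidefinite matrix---gives $\trace \real A \ge 2\sum_k \mu_k = \norm{\imag A}_1$, as desired.

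The only step demanding care is the canonical-form reduction: one must verify simultaneously that orthogonal conjugation keeps $A$ positive semidefinite, fixes $\trace \real A$, and preserves $\norm{\imag A}_1$. All three are standard, so I anticipate no genuine obstacle; in the finite-dimensional setting relevant here---where $A$ is a $q\times q$ cost-type matrix---the argument is elementary and avoids any appeal to the spectral theory of trace-class operators.
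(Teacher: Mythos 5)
Your proof is correct, but it takes a genuinely different route from the paper's. The paper works directly with quadratic forms: it observes that $A \ge 0$ implies $A^* \ge 0$ as well, hence $z^\dagger(\real A)z \ge \abs{z^\dagger(i\imag A)z}$ for every $z$, and then evaluates the trace of $\real A$ in the orthonormal eigenbasis of the Hermitian matrix $i\imag A$, so that the right-hand sides sum to exactly $\norm{\imag A}_1$ -- about four lines once the setup is in place. You instead exploit the real structure forced by Hermiticity ($\real A$ symmetric, $\imag A$ skew-symmetric), bring $\imag A$ to its real canonical form by orthogonal conjugation, and then run a $2\times2$ principal-minor plus AM--GM estimate blockwise. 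Both arguments are sound; your reduction steps (positivity, trace, and trace norm all preserved under real orthogonal conjugation) are verified correctly, and the blockwise determinant bound $x_{2k-1,2k-1}x_{2k,2k} \ge x_{2k-1,2k}^2 + \mu_k^2$ is right. What your approach buys is a more elementary, self-contained estimate (nothing beyond $2\times2$ determinants once the canonical form is granted) and a clear picture of the equality cases, block by block; what it costs is reliance on the canonical form of real skew-symmetric matrices, a structural theorem of comparable weight to the spectral theorem the paper invokes. The paper's quadratic-form argument is shorter and transfers more cleanly to settings where one prefers not to manipulate matrix entries at all.
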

\begin{proof}
Provided in Appendix~\ref{sec_belavkin} for completeness.
\end{proof}

We can now present the Holevo bound. It requires little modification
to be applied to semiparametric estimation; only the definition of
$\mathcal D$ needs to be generalized to Eq.~(\ref{influence_set})
here. Otherwise the proof is standard
\cite{holevo11,nagaoka89,demkowicz20}; we provide it here simply to
demonstrate that it remains valid in the semiparametric setting.  }
\begin{theorem}
\label{thm_holevo}
\begin{align}
\mathsf{E} &\ge \mathsf{X} \equiv \inf_{\delta\in\mathcal D}
\Bk{\trace W \real \Gamma(\delta) + 
\norm{\sqrt{W}\imag \Gamma(\delta)\sqrt{W}}_1},
\label{X}
\end{align}
where $\Gamma(\delta)$ is a complex matrix given by
\begin{align}
\Gamma_{jk}(\delta) \equiv \trace \rho \delta_j\delta_k.
\end{align}
\end{theorem}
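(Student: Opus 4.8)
The plan is to reduce the theorem to a single per-measurement inequality and then pass to the infimum. Because $\mathsf X = \inf_{\delta\in\mathcal D}[\cdots]$ and the error operator $\delta$ of Eq.~(\ref{error}) for an arbitrary unbiased measurement already lies in $\mathcal D$ (the same fact that underlies $\Sigma \ge \avg{\delta,\delta}$ in Sec.~\ref{sec_multi}), it suffices to prove, for that particular $\delta$, that
\begin{align}
\trace W\Sigma &\ge \trace W\real\Gamma(\delta) + \norm{\sqrt W\imag\Gamma(\delta)\sqrt W}_1 ,
\end{align}
after which taking $\inf_{\delta\in\mathcal D}$ on the right gives $\mathsf E = \trace W\Sigma \ge \mathsf X$. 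As a preliminary I would record that $\Gamma(\delta)$ is Hermitian and positive semidefinite, since $\sum_{jk}\bar c_j c_k\Gamma_{jk}(\delta) = \trace\rho\,A^\dagger A \ge 0$ for $A = \sum_k c_k\delta_k$, and that $\real\Gamma(\delta) = \avg{\delta,\delta}$ while $\imag\Gamma(\delta)$ carries the commutator data $\tfrac{1}{2i}\trace\rho[\delta_j,\delta_k]$.

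The crux is to sharpen $\Sigma \ge \real\Gamma(\delta)$ into an exact decomposition that keeps track of the imaginary part. I would dilate the POVM $E$ to a genuine projective measurement on an enlarged space via Naimark's theorem, with $\tilde\rho$ the corresponding extension of $\rho$ and $\tilde\delta_j$ the dilated error operators. Since these all belong to one commutative spectral algebra, the classical outcome covariance reproduces the error matrix exactly, $\Sigma_{jk} = \trace\tilde\rho\,\tilde\delta_j\tilde\delta_k$, whereas compressing back onto the original subspace reinserts its projector $P_0$ and yields $\Gamma_{jk}(\delta) = \trace\tilde\rho\,\tilde\delta_j P_0\tilde\delta_k$. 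The difference matrix $D \equiv \Sigma - \Gamma(\delta)$ then has entries $\trace\tilde\rho\,\tilde\delta_j(\mathbb 1-P_0)\tilde\delta_k$, and a short factorization argument (writing $\mathbb 1-P_0 = Q^\dagger Q$ and inserting $\tilde\rho^{1/2}$) shows that $D$ is Hermitian positive semidefinite. Because $\Sigma$ is real symmetric, separating real and imaginary parts gives $\Sigma = \real\Gamma(\delta) + \real D$ together with $\imag D = -\imag\Gamma(\delta)$.

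With this decomposition, the bound follows by feeding $D$ into Lemma~\ref{lem_belavkin}. Since $D \ge 0$ and $W \ge 0$, the matrix $\sqrt W D\sqrt W$ is complex positive semidefinite, so the Belavkin--Grishanin inequality applied to $A = \sqrt W D\sqrt W$ gives $\trace\real(\sqrt W D\sqrt W) \ge \norm{\imag(\sqrt W D\sqrt W)}_1$, i.e.
\begin{align}
\trace W\real D &\ge \norm{\sqrt W\imag D\sqrt W}_1 = \norm{\sqrt W\imag\Gamma(\delta)\sqrt W}_1 ,
\end{align}
where I used $\imag D = -\imag\Gamma(\delta)$ and the sign-invariance of the trace norm. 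Combining this with $\trace W\Sigma = \trace W\real\Gamma(\delta) + \trace W\real D$ produces the per-measurement inequality, and the infimum over $\mathcal D$ completes the proof.

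The main obstacle I anticipate is precisely the positive semidefiniteness of the full complex Hermitian matrix $D$: the known scalar-weight bound $\Sigma \ge \real\Gamma(\delta)$ controls only $\real D$ and thus misses the Holevo term, whereas the extra $\norm{\sqrt W\imag\Gamma(\delta)\sqrt W}_1$ emerges only because the entire $D$ is positive, so that Lemma~\ref{lem_belavkin} can convert its real trace into a bound on its imaginary part. Carrying this out cleanly in the possibly infinite-dimensional, semiparametric setting — verifying that the Naimark dilation and all the traces are well defined for square-summable but possibly unbounded $\delta_j$ — is where the care lies; everything downstream is a direct application of the lemma and linearity of the trace.
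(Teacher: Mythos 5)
Your proposal is correct, and its final steps coincide exactly with the paper's proof: both hinge on the matrix inequality $\Sigma \ge \Gamma(\delta)$, feed the positive-semidefinite difference into Lemma~\ref{lem_belavkin} via $A = \sqrt{W}(\Sigma-\Gamma)\sqrt{W}$, use the realness of $\Sigma$ (so that $\imag A = -\sqrt{W}\,\imag\Gamma\,\sqrt{W}$), and then take the infimum over $\mathcal D$, which is legitimate because the error operator of any unbiased measurement lies in $\mathcal D$. The one genuine difference is upstream: the paper simply cites Holevo's Eq.~(6.6.55) for $\Sigma \ge \Gamma(\delta)$, whereas you re-derive it via a Naimark dilation --- dilating the POVM to a projective measurement so that the outcome covariance equals $\trace\tilde\rho\,\tilde\delta_j\tilde\delta_k$, compressing to get $\Gamma_{jk} = \trace\tilde\rho\,\tilde\delta_j P_0\tilde\delta_k$, and factorizing $\mathbb{1}-P_0$ to show the difference matrix is positive semidefinite. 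That argument is sound (it is essentially the standard dilation proof of Holevo's inequality found elsewhere in the literature), and it buys self-containedness: your write-up makes explicit why the \emph{full complex} matrix $\Sigma - \Gamma(\delta)$ is positive, which is precisely the ingredient that lets the Belavkin--Grishanin lemma produce the trace-norm term, and which a reader of the paper must retrieve from Holevo's book. What the citation buys the paper instead is rigor in the infinite-dimensional, square-summable setting: Holevo's treatment already handles unbounded $\delta_j$ carefully, while your dilation sketch would still need the technical verification you yourself flag (well-definedness of $\tilde\delta_j$ and the traces after dilation). Neither point is a gap in the logic; your proof is a valid, slightly longer, self-contained alternative.
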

\PRXagain{
\begin{proof}
Holevo proved \cite[Eq.~(6.6.55)]{holevo11} that the error matrix and
the error operator of any unbiased measurement obeys
\begin{align}
\Sigma \ge \Gamma(\delta).
\end{align}
Thus $A = \sqrt{W}(\Sigma-\Gamma)\sqrt{W} \ge 0$.  Applying
Lemma~\ref{lem_belavkin} and noting that $\Sigma$ is real, we obtain
\begin{align}
\trace \real A = 
\trace \sqrt{W}(\Sigma-\real\Gamma)\sqrt{W}
= \trace W(\Sigma-\real\Gamma) 
\nonumber\\
\ge \norm{\imag\Bk{\sqrt{W}(\Sigma-\Gamma)\sqrt{W}}}_1
= \norm{\sqrt{W}\imag \Gamma\sqrt{W}}_1.
\end{align}
Hence
\begin{align}
\trace W \Sigma &\ge \trace W \real\Gamma
+\norm{\sqrt{W}\imag \Gamma\sqrt{W}}_1
\ge \mathsf X.
\end{align}
\end{proof}
}

The asymptotic attainability of the Holevo bound for $d < \infty$ has
been shown in Refs.~\cite{kahn09,gill_guta,demkowicz20}.  The rough
idea there is to consider a two-step method: first find an estimate
$\check\theta$ of $\theta$ using some of the object copies, and then
perform a measurement based on the influence operators obtained from
the minimization in Eq.~(\ref{X}), assuming $\check\theta$ to be the
truth. In the limit of $N\to\infty$, the overhead for finding
$\check\theta$ is benign, and it can be shown that the error
approaches $\mathsf X$ by local asymptotic normality.

For all the examples studied in previous sections, $\beta$ was a
scalar, and it is straightforward to prove that the Holevo bound is
equal to the GHB in that case.
\begin{corollary} If $\beta$ is a scalar ($q = 1$),
\label{cor_holevo_scalar}
\begin{align}
\mathsf X = \tilde{\mathsf H}.
\end{align}
\end{corollary}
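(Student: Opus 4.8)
The plan is to show that, for a scalar parameter of interest, the incompatibility penalty in Holevo's bound---the trace-norm term built from $\imag\Gamma$---vanishes identically, so that the infimand in Eq.~(\ref{X}) collapses to $\norm{\delta}^2$ and the Holevo minimization becomes word-for-word the GHB minimization of Theorems~\ref{thm_GHB_vec} and \ref{thm_GHB}. First I would specialize the objects in Theorem~\ref{thm_holevo} to $q=1$: the influence operator $\delta$ is now a single operator rather than a vector, $\Gamma(\delta)$ is the $1\times 1$ matrix whose sole entry is $\Gamma_{11}(\delta) = \trace\rho\,\delta^2$, and the cost ``matrix'' $W \ge 0$ is a single nonnegative number.

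The key observation is that this entry is a real, nonnegative scalar. Since every operator in this paper is self-adjoint, $\delta$ is self-adjoint and $\delta^2 \ge 0$; together with $\rho \ge 0$ this gives $\trace\rho\,\delta^2 \ge 0$, so $\Gamma(\delta)$ is real and $\imag\Gamma(\delta) = 0$. Consequently the trace-norm term $\norm{\sqrt{W}\,\imag\Gamma(\delta)\,\sqrt{W}}_1$ is zero, while the real-part term is
\begin{align}
\trace W\,\real\Gamma(\delta) = W\,\trace\rho\,\delta^2 = W\Avg{\delta,\delta} = W\norm{\delta}^2,
\end{align}
using the inner product of Eq.~(\ref{inner}) together with $\delta\circ\delta = \delta^2$.

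It then remains to compare the two infima over the common set $\mathcal D$. Equation~(\ref{X}) gives $\mathsf{X} = \inf_{\delta\in\mathcal D} W\norm{\delta}^2$, whereas for $q=1$ the vectoral GHB reduces to $\tilde{\mathsf{H}} = \min_{\delta\in\mathcal D}\trace W\Avg{\delta,\delta} = \min_{\delta\in\mathcal D} W\norm{\delta}^2$; the two are therefore equal for every $W \ge 0$, with the common value attained at $\delta_{\rm eff} = \Pi(\delta|\mathcal T)$ by Theorem~\ref{thm_GHB} (or Theorem~\ref{thm_SHB} for the mother family). I do not anticipate any real obstacle: the entire content of the corollary is that the vectoral incompatibility penalty, which forces $\mathsf{X} \ge \tilde{\mathsf{H}}$ in general, carries no cost when $q=1$. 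The only care required is the bookkeeping that establishes reality of $\Gamma(\delta)$---self-adjointness of $\delta$ and positivity of $\rho$---and the verification that the scalar weight $W$ factors out of both optimization problems identically.
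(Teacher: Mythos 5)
Your proposal is correct and follows essentially the same route as the paper's proof: observe that for $q=1$ the matrix $\Gamma(\delta)=\trace\rho\,\delta^2$ is a real scalar, so $\imag\Gamma(\delta)=0$, and the Holevo infimand collapses to $\trace W\Avg{\delta,\delta}$, making the two minimizations over $\mathcal D$ identical. Your additional bookkeeping (self-adjointness of $\delta$ and positivity of $\rho$ giving reality, and attainment of the infimum at $\delta_{\rm eff}$) simply makes explicit what the paper leaves implicit.
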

\begin{proof}
  For $q = 1$, $\Gamma(\delta) = \trace \rho \delta^2$ and
  $\imag \Gamma(\delta) = 0$, leading to
\begin{align}
\mathsf X &= \inf_{\delta\in\mathcal D} \trace W \real \Gamma(\delta)
=\inf_{\delta\in\mathcal D} \trace W \Avg{\delta,\delta} = \tilde{\mathsf H}.
\end{align}
\end{proof}
The scalar GHB hence inherits all the properties of the Holevo bound,
including its asymptotic attainability. In fact, for any $q$, the
Holevo bound turns out to be a marginal improvement over the GHB only.
\begin{theorem}
\begin{align}
\tilde{\mathsf{H}} \le \mathsf{X}  \le 2\tilde{\mathsf{H}}.
\label{sandwich}
\end{align}
\label{thm_sandwich}
\end{theorem}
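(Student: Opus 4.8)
The plan is to extract both inequalities from a single structural fact: the matrix $\Gamma(\delta)$ is positive semidefinite, which makes the Belavkin--Grishanin inequality of Lemma~\ref{lem_belavkin} directly applicable. The starting observation is the entrywise identity $\real\Gamma(\delta)=\Avg{\delta,\delta}$, valid because $\real\trace\rho\,\delta_j\delta_k=\trace\rho(\delta_j\circ\delta_k)=\Avg{\delta_j,\delta_k}$, so the first term inside the bracket of Eq.~(\ref{X}) is exactly $\trace W\Avg{\delta,\delta}$. The left inequality is then immediate: since $\norm{\cdot}_1\ge 0$, for every $\delta\in\mathcal D$ the bracketed expression in Eq.~(\ref{X}) is at least $\trace W\real\Gamma(\delta)=\trace W\Avg{\delta,\delta}$, and taking the infimum over $\mathcal D$ on both sides yields $\mathsf X\ge\inf_{\delta}\trace W\Avg{\delta,\delta}=\tilde{\mathsf H}$.

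For the right inequality the strategy is to bound the trace-norm term by the real-part term for each fixed $\delta$, and then pass to the infimum over the common set $\mathcal D$. First I would verify that $\Gamma(\delta)$ is positive semidefinite: for any $v\in\mathbb C^q$, setting $A=\sum_k v_k\delta_k$ and using that the $\delta_k$ are self-adjoint gives $\sum_{jk}\bar v_j\Gamma_{jk}(\delta)v_k=\trace\rho\,A^\dagger A=\trace[(A\sqrt\rho)^\dagger(A\sqrt\rho)]\ge 0$. Because $W\ge 0$, the congruence $\sqrt W\Gamma(\delta)\sqrt W$ is therefore also positive semidefinite, and since $\sqrt W$ is real we have $\real(\sqrt W\Gamma\sqrt W)=\sqrt W\real\Gamma\sqrt W$ and $\imag(\sqrt W\Gamma\sqrt W)=\sqrt W\imag\Gamma\sqrt W$.

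Applying Lemma~\ref{lem_belavkin} to the positive-semidefinite matrix $A=\sqrt W\Gamma(\delta)\sqrt W$, and using $\trace\sqrt W\real\Gamma\sqrt W=\trace W\real\Gamma$ by cyclicity of the trace, produces for every $\delta\in\mathcal D$ the key estimate $\norm{\sqrt W\imag\Gamma(\delta)\sqrt W}_1\le\trace W\real\Gamma(\delta)=\trace W\Avg{\delta,\delta}$. Hence the bracketed expression in Eq.~(\ref{X}) is at most $2\trace W\Avg{\delta,\delta}$ for every $\delta$, and taking the infimum over $\mathcal D$ gives $\mathsf X\le 2\inf_{\delta}\trace W\Avg{\delta,\delta}=2\tilde{\mathsf H}$, completing the sandwich.

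The only step that requires genuine care---and the one I would flag as the main obstacle---is establishing the positive-semidefiniteness of $\Gamma(\delta)$ and confirming that it is preserved under the $\sqrt W$ congruence, since everything afterward follows mechanically once Lemma~\ref{lem_belavkin} can be invoked. In the infinite-dimensional setting I would additionally check that $\Avg{\delta,\delta}$ is finite, so that $\Gamma(\delta)$ is a genuine $q\times q$ matrix; this is guaranteed for any $\delta\in\mathcal D\subseteq\mathcal Z$.
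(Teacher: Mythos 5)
Your proposal is correct and takes essentially the same route as the paper: the left inequality follows by dropping the non-negative trace-norm term, and the right inequality follows from Lemma~\ref{lem_belavkin} applied to $\sqrt{W}\,\Gamma(\delta)\sqrt{W}$, exactly as in the paper's proof. The only cosmetic differences are that you establish the bound pointwise for every $\delta\in\mathcal D$ and then pass to infima, whereas the paper evaluates the Holevo expression at the minimizer $\delta_{\rm eff}$, and that you verify $\Gamma(\delta)\ge 0$ explicitly---a hypothesis of the lemma that the paper leaves implicit.
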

\PRXagain{
\begin{proof}
For all $\delta \in \mathcal D$,
\begin{align}
\trace W \real \Gamma(\delta) + \norm{\sqrt{W}\imag \Gamma(\delta)\sqrt{W}}_1
\label{holevo_expression}
\\
\ge \trace W \real \Gamma(\delta)
= \trace W \Avg{\delta,\delta} \ge \tilde{\mathsf H}.
\end{align}
As $\mathsf X$ is the infimum of Eq.~(\ref{holevo_expression}), we
obtain $\mathsf X \ge \tilde{\mathsf H}$, the first inequality of the
theorem. The second inequality is proved as follows:
\begin{align}
\mathsf{X} &\le 
\trace W \real \Gamma(\delta_{\rm eff}) 
+ \norm{\sqrt{W}\imag \Gamma(\delta_{\rm eff})\sqrt{W}}_1
\equiv \mathsf D
\label{D_invariant}
\\
&\le 
\trace W \real \Gamma(\delta_{\rm eff}) + \trace \sqrt{W} \real \Gamma(\delta_{\rm eff})
\sqrt{W}
\label{belavkin_step}
\\
&= 2 \trace W \real \Gamma(\delta_{\rm eff}) = 2\tilde{\mathsf H},
\end{align}
where Eq.~(\ref{belavkin_step}) is obtained by applying
Lemma~\ref{lem_belavkin}  to
$A = \sqrt{W} \Gamma(\delta_{\rm eff})\sqrt{W}$.
\end{proof}
}
The first inequality $\tilde{\mathsf H} \le \mathsf X$ is well known
\cite{holevo11,nagaoka89,ragy16}. A special case
$\mathsf X \le 2\mathsf H$ of the second inequality---when $p<\infty$,
$K^{-1}$ exists, and $\tilde{\mathsf H} = \mathsf H$ is the original
Helstrom bound---was proved recently in
Ref.~\cite{carollo19,*carollo20}. $\mathsf X = 2\mathsf H$ can be
attained in special cases \cite{kahn09,gill_guta,demkowicz20}. 

Theorem~\ref{thm_sandwich} implies that the effect of incompatibility
is surprisingly benign in the context of asymptotic statistics, the
GHB can be approached to within a factor of two if the Holevo bound is
attainable, and the GHB is a serviceable alternative to the Holevo
bound, especially when the latter is more difficult to compute.  See
Ref.~\cite{demkowicz20} for further interesting discussions regarding
this result.

\PRXagain{As an aside, we remark that the $\mathsf D$ in
  Eq.~(\ref{D_invariant}) is called the $\mathfrak D$-invariant bound
  and coincides with $\mathsf X$ if
  $\mathcal T = \mathfrak D \mathcal T$, where $\mathfrak D$ is given
  by Eq.~(\ref{commutation}) \cite{holevo11,suzuki16,*suzuki19a}. In
  general, $\mathsf D$ offers a tighter upper bound on $\mathsf X$
  than $2\tilde{\mathsf H}$ but may not be much more difficult to
  compute, as it also depends on $\delta_{\rm eff}$, which can be
  found via the methods introduced in this work.

We present a few other interesting results concerning multiparameter
estimation with $p < \infty$ in Appendix~\ref{sec_parametric}.
}

Finally, we generalize the concept of efficient score in Theorem~\ref{thm_Seff}
for a vectoral $\beta$.
\begin{theorem}
  Assume a density-operator family given by
  \begin{align}
\mathbf G &= \BK{\rho(\beta,\eta):\beta\in\Theta_\beta \subseteq \mathbb R^q, 
\eta \in\mathcal G}.
\end{align}
Let $S^\beta = (S^\beta_1,\dots,S^\beta_q)^\top$ be the scores with
respect to $\beta$ and $\{S^\eta\}$ be the nuisance tangent set.
Assume the unbiasedness condition for influence operators
$\delta\in\mathcal D$ given by
\begin{align}
\Avg{S^\beta,\delta} &= I, & \Avg{S^\eta,\delta} &= 0,
\end{align}
where $I$ is the identity matrix. The efficient influence and the GHB
are given by
\begin{align}
\delta_{\rm eff} &= \Avg{S_{\rm eff},S_{\rm eff}}^{-1} S_{\rm eff},
&
\tilde{\mathsf{H}} &= \trace W \Avg{S_{\rm eff},S_{\rm eff}}^{-1},
\end{align}
where the efficient score $S_{\rm eff}$ is given by
\begin{align}
S_{\rm eff} &= S^\beta - \Pi(S^\beta|\Lambda),
&
\Lambda &\equiv \bk{\cspn\BK{S^\eta}}^{\oplus q},
\end{align}
and $\avg{S_{\rm eff},S_{\rm eff}}^{-1}$ is assumed to exist.
\end{theorem}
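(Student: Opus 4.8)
The plan is to reduce the statement to the projection characterization of the efficient influence already proved in Theorem~\ref{thm_GHB_vec}, which asserts that for any $\delta\in\mathcal D$ the unique minimizer of $\trace W\avg{\delta,\delta}$ is $\delta_{\rm eff}=\Pi(\delta|\mathcal T)$. All that remains is therefore to (i) identify the full tangent space $\mathcal T$ for the partitioned family, (ii) verify that the candidate $\avg{S_{\rm eff},S_{\rm eff}}^{-1}S_{\rm eff}$ lies in $\mathcal D$ and coincides with that projection, and (iii) evaluate its norm. Throughout I abbreviate $\Lambda_0\equiv\cspn\{S^\eta\}$ in the scalar operator space $\mathcal Z$, and $G\equiv\avg{S_{\rm eff},S_{\rm eff}}$, the $q\times q$ real symmetric Gram matrix assumed invertible.

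First I would fix the geometry. The tangent set of $\mathbf G$ contains both $S^\beta_1,\dots,S^\beta_q$ and all nuisance scores $\{S^\eta\}$, so the scalar span is $\mathcal T_0\equiv\spn\{S^\beta\}+\Lambda_0$, and by the replicating definition $\mathcal T=\mathcal T_0^{\oplus q}$ while $\Lambda=\Lambda_0^{\oplus q}$. The structural fact driving everything is the orthogonal decomposition $\mathcal T_0=\Lambda_0\oplus\spn\{S_{\rm eff}\}$: since $(S_{\rm eff})_k=S^\beta_k-\Pi(S^\beta_k|\Lambda_0)$ is the $\Lambda_0^\perp$ component of $S^\beta_k$, the efficient scores span exactly the part of $\mathcal T_0$ orthogonal to $\Lambda_0$, and they number only $q$, so $\spn\{S_{\rm eff}\}$ is finite-dimensional, closed, and the Gram-matrix projection formula applies. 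Because the inner product on vectoral operators is the direct-sum form $\trace\avg{h,g}=\sum_j\avg{h_j,g_j}$, the projection onto $\mathcal T=\mathcal T_0^{\oplus q}$ decouples and acts component-wise, $\Pi(\delta|\mathcal T)_j=\Pi(\delta_j|\mathcal T_0)$.

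Next I would check the two claims. For membership in $\mathcal D$: each component $(\delta_{\rm eff})_j=\sum_l(G^{-1})_{jl}(S_{\rm eff})_l$ lies in $\Lambda_0^\perp$, so $\avg{S^\eta,\delta_{\rm eff}}=0$; and using $\avg{S^\beta_k,(S_{\rm eff})_l}=\avg{(S_{\rm eff})_k,(S_{\rm eff})_l}=G_{kl}$ (the $\Lambda_0$-part of $S^\beta_k$ being orthogonal to $(S_{\rm eff})_l$) together with symmetry of $G^{-1}$ gives $\avg{S^\beta,\delta_{\rm eff}}=G^{-1}G=I$, so the unbiasedness conditions hold. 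For the projection identity, take any $\delta\in\mathcal D$; the condition $\avg{S^\eta,\delta}=0$ forces each $\delta_j\perp\Lambda_0$, whence $\Pi(\delta_j|\mathcal T_0)=\Pi(\delta_j|\spn\{S_{\rm eff}\})$ by the orthogonal decomposition. Evaluating the Gram-matrix projection with $\avg{(S_{\rm eff})_k,\delta_j}=\avg{S^\beta_k,\delta_j}=\delta_{kj}$ (again because $\delta_j\perp\Lambda_0$) reproduces exactly $(\delta_{\rm eff})_j$. Hence $\delta_{\rm eff}=\Pi(\delta|\mathcal T)$, and Theorem~\ref{thm_GHB_vec} certifies it as the unique efficient influence.

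The bound then follows from a one-line computation: $\avg{\delta_{\rm eff},\delta_{\rm eff}}=G^{-1}\avg{S_{\rm eff},S_{\rm eff}}G^{-1}=G^{-1}$, so $\tilde{\mathsf H}=\trace W\avg{\delta_{\rm eff},\delta_{\rm eff}}=\trace W\avg{S_{\rm eff},S_{\rm eff}}^{-1}$. The main obstacle I anticipate is purely organizational rather than analytic: keeping straight which $\avg{\cdot,\cdot}$ denote $q\times q$ matrices versus scalar inner products, and justifying rigorously that projection onto the replicating space $\mathcal T_0^{\oplus q}$ is component-wise under the direct-sum inner product. The only genuine analytic hypothesis needed beyond this is that $G$ is invertible, which is assumed and is the vectoral analog of the scalar condition $S^\beta\notin\Lambda$ in Theorem~\ref{thm_Seff}; granted it, the remaining manipulations are finite-dimensional linear algebra inside $\spn\{S_{\rm eff}\}$.
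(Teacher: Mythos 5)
Your proof is correct, but it reaches the conclusion by a different mechanism than the paper. The paper's proof (given for the scalar case in Appendix~\ref{sec_Seff} and declared to extend verbatim) verifies, as you do, that the candidate $\Avg{S_{\rm eff},S_{\rm eff}}^{-1}S_{\rm eff}$ satisfies the unbiasedness conditions, but then finishes with a one-line observation: the candidate lies in $\mathcal T$, because $S^\beta\in\mathcal T$ and $\Pi(S^\beta|\Lambda)\in\Lambda\subseteq\mathcal T$, so $\Pi(\delta|\mathcal T)=\delta$ for this $\delta$, and Theorem~\ref{thm_GHB_vec} (uniqueness of the efficient influence as the projection of \emph{any} influence operator) immediately identifies it as $\delta_{\rm eff}$. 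You instead project an \emph{arbitrary} $\delta\in\mathcal D$ onto $\mathcal T$ explicitly, using the orthogonal decomposition $\mathcal T_0=\Lambda_0\oplus\spn\BK{S_{\rm eff}}$ and the Gram formula. The trade-off is this: your route requires the exact identification $\mathcal T_0=\spn\BK{S^\beta}+\Lambda_0$, whose nontrivial direction (that the score of every submodel varying $\beta$ and $\eta$ \emph{jointly} decomposes, via the chain rule, into a $\spn\BK{S^\beta}$ component plus a nuisance score) you assert with a ``so'' but do not prove --- it is true and standard in semiparametric theory under the paper's smoothness assumptions, but it is a genuine extra step; the paper's route needs only the trivial inclusions $S^\beta_k\in\mathcal T_0$ and $\Lambda_0\subseteq\mathcal T_0$, and so works even without characterizing $\mathcal T$ exactly. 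In exchange, your argument buys something the paper's does not state: the explicit formula for $\Pi(\delta|\mathcal T)$ for every influence operator, which makes the geometry (unbiasedness forces $\delta_j\perp\Lambda_0$, collapsing the projection into $\spn\BK{S_{\rm eff}}$) visible. Your remaining steps --- the component-wise action of the projection on the replicating space, the verification $\Avg{S^\beta,\delta_{\rm eff}}=I$ using $\Avg{S^\beta_k,(S_{\rm eff})_l}=G_{kl}$, and the final computation $\Avg{\delta_{\rm eff},\delta_{\rm eff}}=G^{-1}$ --- are all correct.
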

\begin{proof}
  Almost identical to that of Theorem~\ref{thm_Seff} in
  Appendix~\ref{sec_Seff} and omitted here for brevity.
\end{proof}

\section{Conclusion}
We have founded a theory of quantum semiparametric estimation and
showcased its power by producing simple quantum bounds for a large
class of problems with high dimensions and few assumptions about the
density operator. The theory establishes the notion of quantum
semiparametric efficiency, which should inform and inspire the design
of more efficient measurements in many areas of quantum physics.

While the experimental design of efficient semiparametric measurements
is only touched upon here and awaits further research, the importance
of the quantum limits set forth should not be underestimated. As more
experiments are now being performed on complex quantum systems and
advantages of such systems for metrology and information processing in
general are being claimed, the precision limits serve as ultimate
yardsticks as well as ``no-go'' theorems that guard against spurious
proposals and fruitless endeavors, in the same way the laws of
thermodynamics impose limits to engines and rule out perpetual-motion
machines. Deriving precision limits for highly complex or poorly
modeled quantum systems was a daunting task under the curse of
dimensionality; the semiparametric theory offers a new way forward.

Many open problems still remain. More extensions and applications of
the theory remain to be worked out.  The asymptotic attainability of
efficiency \cite{hayashi,hayashi05,kahn09,gill_guta,demkowicz20} is a
thorny issue for infinite-dimensional problems.  The assumption of
unbiased estimation is a drawback; generalizations to the Bayesian or
minimax paradigm
\cite{bell,*schutzenberger57,*vantrees,*gill95,*personick71,*hayashi11,*liu16,*chabuda16,*rubio19,*rubio20}
should help but await further research.  These problems should benefit
from studies of alternative quantum bounds beyond the Cram\'er-Rao
type \cite{tsuda05,*glm2012,*qzzb,*qbzzb,*qwwb,*hall_prx,*nair18}. In
view of Eq.~(\ref{GHB_entropy}) and Figs.~\ref{manifolds} and
\ref{straight_lines}, the connections of quantum semiparametrics to
other domains of quantum information
\cite{tomamichel13,*li14a,*tomamichel16} and quantum state geometry
\cite{hayashi,hayashi05,amari} are also interesting future directions.

In light of the richness and wide applications of the classical
semiparametric theory
\cite{ibragimov81,bickel93,tsiatis06,newey90,feigelson12,tsang19b},
this work has only scratched the surface of the full potential of
quantum semiparametrics. It should open doors to further useful
results.

\begin{acknowledgments}
  We thank M.~G.~Genoni both for several fruitful discussions and for
  making us aware of Refs.~\cite{carollo19,*carollo20}. We are
  grateful to R.~Nair, M.~Gu\c{t}\u{a}, R.~Gill, D.~Branford,
  R.~Demkowicz-Dobra\'nzski, J.~F.~Friel, W.~G\'orecki, and J.~Suzuki
  for useful discussions. This research is partly supported by the
  National Research Foundation (NRF) Singapore, under its Quantum
  Engineering Programme (Award QEP-P7).  AD and FA have been supported
  by the UK EPSRC (EP/K04057X/2) and the UK National Quantum
  Technologies Programme (EP/M01326X/1, EP/M013243/1). FA also
  acknowledges financial support from the National Science Center
  (Poland) grant No. 2016/22/E/ST2/00559.
\end{acknowledgments}

\appendix

\section{\label{sec_HB} Proof of Corollary~\ref{cor_HB}}
If $p < \infty$ and $K^{-1}$ exists, the solution to
$\Pi(\delta|\mathcal T)$ can be found, for example, in
Ref.~\cite[Eq.~(15) in Appendix~A.2]{bickel93}. Here we give a simple
proof for completeness.  By definition of the projection
\cite{debnath05},
\begin{align}
\Pi(\delta|\mathcal T) &= \argmin_{h\in\mathcal T}\norm{\delta-h}.
\end{align}
Any $h \in \mathcal T$ can be expressed as the linear combination
$w^\top S$ with respect to a certain vector $w \in \mathbb R^p$. Then
\begin{align}
\norm{\delta-h}^2
&= \Avg{\delta,\delta} - w^\top \Avg{S,\delta}-
\Avg{S,\delta}^\top w + w^\top \Avg{S,S}w.
\end{align}
The solution to the least-squares problem is
\begin{align}
w_{\rm min} &= \Avg{S,S}^{-1}\Avg{S,\delta},
\\
\Pi(\delta|\mathcal T) &= w_{\rm min}^\top S
= \Avg{S,\delta}^\top \Avg{S,S}^{-1}S.
\label{delta_eff_formula}
\end{align}
Hence
\begin{align}
\norm{\Pi(\delta|\mathcal T)}^2 &= 
\Avg{S,\delta}^\top \Avg{S,S}^{-1}\Avg{S,\delta},
\label{delta_eff_norm}
\end{align}
which is equal to Eq.~(\ref{HB}), since
$\avg{S,\delta} = \partial\beta$ for an influence operator.

\section{\label{sec_iid}Proof of Corollary~\ref{iid}}
Denote any concept discussed so far with the superscript $(N)$ if it
is associated with $\mathbf F^{(N)}$, but omit the superscript $(1)$
for brevity if $N = 1$.  From $\mathcal Z$, we generate a subspace
$U\mathcal Z \subset \mathcal Z^{(N)}$ such that
  \begin{align}
U\mathcal Z \equiv \BK{Uh: h \in \mathcal Z}.
\end{align}
$U$ is a surjective map to $U\mathcal Z$ by definition of the space.
It can be shown that
  \begin{align}
\Avg{Uh_1,Uh_2}^{(N)} = \Avg{h_1,h_2}
\quad \forall h_1,h_2 \in \mathcal Z,
\end{align}
so $U\mathcal Z$ is isomorphic to $\mathcal Z$, and $U$ is a unitary
map from $\mathcal Z$ to $U\mathcal Z$ \cite{reed_simon}. It can also
be shown that
\begin{align}
S^{(N)} = \sqrt{N} U S,
\end{align}
so $\mathcal T^{(N)}= \cspn\{S^{(N)}\} \subseteq U\mathcal Z$,
and $\mathcal T^{(N)}$ is isomorphic to $\mathcal T$.  For any
$Uh \in U\mathcal Z$, it is not difficult to prove that
\begin{align}
\Pi(Uh|\mathcal T^{(N)}) = U\Pi(h|\mathcal T),
\end{align}
given the isomorphisms. Now let
\begin{align}
\delta^{(N)} = \frac{U\delta}{\sqrt{N}} \in U\mathcal Z,
\end{align}
where $\delta$ is an influence operator. $\delta^{(N)}$ is also an
influence operator, since
\begin{align}
\Avg{S^{(N)},\delta^{(N)}}^{(N)} = \Avg{\sqrt{N} U
  S,\frac{U\delta}{\sqrt{N}}}^{(N)} = \Avg{S,\delta} = \partial\beta.
\end{align}
The efficient influence for $\mathbf F^{(N)}$ becomes
\begin{align}
\delta_{\rm eff}^{(N)} &= \Pi(\delta^{(N)}|\mathcal T^{(N)}) = 
\frac{\Pi(U\delta|\mathcal T^{(N)})}{\sqrt{N}}
= 
\frac{U \Pi(\delta|\mathcal T)}{\sqrt{N}}
\nonumber\\
&= \frac{ U \delta_{\rm eff}}{\sqrt{N}},
\end{align}
the norm becomes
\begin{align}
\norm{\delta_{\rm eff}^{(N)}}^{(N)} = \frac{\norm{\delta_{\rm eff}}}{\sqrt{N}},
\end{align}
and the corollary ensues.

\section{\label{sec_iid2}Proof of Corollary~\ref{iid2}}
  Let $\{S^{(N)}\}$ be the tangent set for $\mathbf G^{(N)}$.  For
  each parametric submodel $\{\sigma(\theta)\}$ of $\mathbf G$, let
\begin{align}
\BK{\tau(\theta) = \sigma(\theta)^{\otimes N}}
\label{FsigmaN}
\end{align}
be a parametric submodel of $\mathbf G^{(N)}$. The score of the
submodel is given by
\begin{align}
S^\tau &= \sqrt{N} U S^\sigma.
\label{Stau}
\end{align}
In other words, each $S^\sigma \in \{S\}$ can be used to
generate a score in $\{S^{(N)}\}$ via Eq.~(\ref{Stau}).  The set of
scores generated this way is therefore a subset of $\{S^{(N)}\}$,
viz.,
\begin{align}
\BK{\sqrt{N}US}
&\equiv \BK{\sqrt{N}US^\sigma: S^\sigma \in \BK{S}}
\subseteq \BK{S^{(N)}}.
\end{align}
Conversely, any parametric submodel of $\mathbf G^{(N)}$ must
  be in the form of Eq.~(\ref{FsigmaN}), with $\{\sigma(\theta)\}$
  being a certain parametric submodel of $\mathbf G$.  The score of
  the former is then related to the score of the latter via
  Eq.~(\ref{Stau}).  Since $\{S\}$ includes the scores of all
  parametric submodels of $\mathbf G$, any $S^\tau \in \{S^{(N)}\}$
  must be in $\{\sqrt{N}US\}$. Thus $\{S^{(N)}\} \subseteq \{\sqrt{N}US\}$,
  and equality holds, viz., 
\begin{align}
\BK{S^{(N)}} &= \BK{\sqrt{N}US}.
\end{align}
It follows that
\begin{align}
\mathcal T^{(N)} \equiv \cspn\BK{S^{(N)}} = \cspn\BK{\sqrt{N}US}
\end{align}
is isomorphic to $\mathcal T = \cspn\{S\}$.  Hence, projecting an
influence operator of the form $\delta^{(N)} = U\delta/\sqrt{N}$ into
$\mathcal T^{(N)}$ gives the efficient influence
$\delta_{\rm eff}^{(N)} = U \delta_{\rm eff}/\sqrt{N}$, by the same
argument as Appendix~\ref{sec_iid}.

\section{\label{sec_cauchy} The set of bounded operators 
 is dense in $\mathcal Z$}
To generalize Theorem~\ref{TeqZ} for the infinite-dimensional case and
prove Theorem~\ref{TeqZ2}, we need to be mindful of the unbounded
operators in $\mathcal Z$.  The good news is that they are well
defined as limits of bounded-operator sequences in $\mathcal Y$,
thanks to Holevo \cite{holevo11,holevo77}; just a minor modification
is needed to make his result work for $\mathcal Z$.

Consider the set $\mathcal B$ of bounded elements defined by
Eq.~(\ref{set_bounded}). If $d < \infty$,
$\mathcal B = \overline{\mathcal B} = \mathcal Z$, since all operators
are bounded in the finite-dimensional case, but if $d = \infty$,
$\mathcal B \subset \mathcal Z$ is a strict subset. A useful lemma is
as follows.

\begin{lemma}
\label{cauchy}
$\overline{\mathcal B} = \mathcal Z$.
\end{lemma}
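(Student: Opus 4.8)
The plan is to reduce the statement to Holevo's theorem that the bounded elements of $\mathcal Y$ are dense in $\mathcal Y$ \cite{holevo11,holevo77}, and then to repair the zero-mean constraint with a single orthogonal projection. Since $\mathcal B \subseteq \mathcal Z$ and $\mathcal Z = \bk{\spn\BK{I}}^\perp$ is closed in $\mathcal Y$, the inclusion $\overline{\mathcal B} \subseteq \mathcal Z$ is immediate, so the real content is the reverse inclusion $\mathcal Z \subseteq \overline{\mathcal B}$.

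First I would fix an arbitrary $h \in \mathcal Z \subseteq \mathcal Y$ and, by the cited density, choose bounded operators $b_n$ with $\opnorm{b_n} < \infty$ and $\norm{b_n - h} \to 0$. These need not be zero-mean, so I would correct each one by setting $h_n \equiv b_n - \Avg{b_n,I} I = \Pi(b_n|\mathcal Z)$. Using $\Avg{I,I} = \trace \rho = 1$, one finds $\Avg{h_n,I} = \Avg{b_n,I} - \Avg{b_n,I}\Avg{I,I} = 0$, so $h_n \in \mathcal Z$; and since $h_n$ differs from the bounded $b_n$ only by a scalar multiple of the bounded identity, $\opnorm{h_n} < \infty$ and hence $h_n \in \mathcal B$.

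Convergence is then automatic. Because $\Pi(\cdot|\mathcal Z)$ is a norm-nonincreasing orthogonal projection and $\Pi(h|\mathcal Z) = h$, linearity gives $\norm{h_n - h} = \norm{\Pi(b_n - h|\mathcal Z)} \le \norm{b_n - h} \to 0$. Thus $h \in \overline{\mathcal B}$; as $h$ was arbitrary, $\mathcal Z \subseteq \overline{\mathcal B}$, which together with the trivial inclusion yields $\overline{\mathcal B} = \mathcal Z$.

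The only genuine difficulty lives in the cited density of bounded operators in $\mathcal Y$, which is where the unbounded elements of $\mathcal Z$ and the square-summability bookkeeping are handled; the mean subtraction above is precisely the ``minor modification'' needed to pass from $\mathcal Y$ to $\mathcal Z$. Should a self-contained argument be preferred, I would build the approximants explicitly by matrix truncation: with $P_n = \sum_{j \le n}\ket{e_j}\bra{e_j}$ the spectral projection of $\rho$ onto its $n$ largest eigenvalues, set $h_n = P_n h P_n - \bk{\trace \rho\, P_n h P_n} I$, which is finite-rank up to a scalar, hence bounded, and zero-mean by the same subtraction. The estimate $\norm{P_n h P_n - h}^2 = \sum_{\max(j,k) > n}\lambda_j \abs{\bra{e_j}h\ket{e_k}}^2 \to 0$ is the vanishing tail of the convergent series $\norm{h}^2 = \sum_{jk}\lambda_j\abs{\bra{e_j}h\ket{e_k}}^2$, while $\abs{\trace\rho\, P_n h P_n} = \abs{\sum_{j\le n}\lambda_j\bra{e_j}h\ket{e_j}} \to 0$ since $\sum_j \lambda_j\abs{\bra{e_j}h\ket{e_j}} \le \norm{h}$ by Cauchy--Schwarz; together these give $\norm{h_n - h} \to 0$.
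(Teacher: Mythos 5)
Your main argument is essentially identical to the paper's proof: invoke Holevo's result (Ref.~\cite[Theorem~2.8.1]{holevo11}) giving density of bounded operators in $\mathcal Y$, project the bounded approximants onto $\mathcal Z$ to restore the zero-mean condition, observe that subtracting a scalar multiple of $I$ preserves boundedness, and use contractivity of the orthogonal projection (the Pythagorean theorem) to preserve convergence, so the comparison is a match. The truncation variant you append is unnecessary and carries a caveat the cited theorem exists to avoid: a general element of $\mathcal Z$ is an abstract equivalence class arising from completion, so matrix elements $\bra{e_j}h\ket{e_k}$ and the identity $\norm{h}^2 = \sum_{jk}\lambda_j\abs{\bra{e_j}h\ket{e_k}}^2$ are not automatically well defined for it.
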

\begin{proof}
  Reference~\cite[Theorem~2.8.1]{holevo11} implies that, for any
  $h \in \mathcal Z \subset \mathcal Y$, there exists a Cauchy
  sequence $\{h_n\}$ with each $h_n\in \mathcal Y$ satisfying
  $\opnorm{h_n} < \infty$ such that
\begin{align}
\lim_{n\to \infty} \norm{h - h_n} &= 0.
\label{cauchy_y}
\end{align}
To derive a similar convergent sequence in $\mathcal Z$, consider the
projection of each $h_n$ into $\mathcal Z$, written as
\begin{align}
h_n' = \Pi(h_n|\mathcal Z) = h_n - \Avg{h_n,I} \in \mathcal Z.
\end{align}
Denote a bounded operator in the equivalence class of $h_n$ as
$\hat h_n$. An operator for $h_n'$ can be expressed as
\begin{align}
\hat h_n' = \hat h_n - \Avg{h_n,I}\hat I.
\end{align}
Since $\opnorm{\hat h_n} < \infty$ and
$\opnorm{\avg{h_n,I}\hat I} = |\avg{h_n,I}|< \infty$, 
\begin{align}
\opnorm{\hat h_n'} &\le \opnorm{\hat h_n} + 
\opnorm{\Avg{h_n,I}\hat I} < \infty
\end{align}
by the triangle inequality, leading to $h_n' \in \mathcal B$. The
Pythagorean theorem leads to
\begin{align}
\norm{h_n-h_m} &\ge \norm{h_n'-h_m'}
\quad
\forall n,m,
\\
\norm{h - h_n} &\ge \norm{h - h_n'},
\end{align}
which can be combined with Eq.~(\ref{cauchy_y}) to give
\begin{align}
\lim_{n\to\infty} \norm{h- h_n'} &= 0.
\end{align}
In other words, $\{h_n'\}$, with each $h_n' \in \mathcal B$, is also
Cauchy and converges to $h$. As the argument applies to any
$h \in \mathcal Z$, $\mathcal B$ is dense in $\mathcal Z$, and the
closure of $\mathcal B$ gives $\mathcal Z$.
\end{proof}

\section{\label{sec_loose}Proof of Proposition~\ref{loose}}
Let the orthocomplement of $\mathcal V$ in $\mathcal T$ be
$\mathcal V_{\mathcal T}^\perp$. Then the Pythogorean theorem yields
\begin{align}
\norm{\delta_{\rm eff}}^2 &= \norm{\Pi(\delta_{\rm eff}|\mathcal V)}^2
+\norm{\Pi(\delta_{\rm eff}|\mathcal V_{\mathcal T}^\perp)}^2
\\
&\ge \norm{\Pi(\delta_{\rm eff}|\mathcal V)}^2
= \norm{\Pi(\Pi(\delta|\mathcal T)|\mathcal V)}^2
\\
&= \norm{\Pi(\delta|\mathcal V)}^2,
\end{align}
where the last step uses Ref.~\cite[Proposition~3B in
Appendix~A.2]{bickel93}.
$\norm{\Pi(\delta|\mathcal V)}^2 = \norm{\delta}^2
-\norm{\Pi(\delta|\mathcal V^\perp)}^2$ follows again from the
Pythagorean theorem for a
$\delta \in \mathcal Z = \mathcal V \oplus \mathcal V^\perp$.
Equation~(\ref{GHB_sub}) comes from Theorem~\ref{thm_GHB}.

\section{\label{sec_classicalT}Proof of Proposition~\ref{classicalT}}
Let $P$ be the true density. For real functions on $\mathbb C$,
define an inner product and a norm
with respect to $P$ as
\begin{align}
\Avg{f,g}_P &\equiv \int d^2\alpha P(\alpha) f(\alpha)g(\alpha),
&
\norm{f}_P &\equiv \sqrt{\Avg{f,f}_P}.
\end{align}
Define the Hilbert space of zero-mean functions as
\begin{align}
\mathcal Z_P &\equiv \BK{f: \norm{f}_P < \infty, \Avg{f,1}_P = 0}.
\end{align}
For each $f \in \mathcal Z_P$, construct the parametric submodel
\begin{align}
\sigma(\theta) &= \int d^2\alpha P(\alpha|\theta)
\ket{\alpha}\bra{\alpha},
\\
P(\alpha|\theta) &= 
\frac{\{1+\tanh[f(\alpha)\theta]\}P(\alpha)}
{\int d^2\alpha \{1+\tanh[f(\alpha)\theta]\}P(\alpha)},
\end{align}
with the truth at $\sigma(0) = \rho$ and
$P(\alpha|0) = P(\alpha)$.  $f(\alpha)$ is the score function with
respect to $P(\alpha|\theta)$. The score with respect to $\sigma$
is then given by
\begin{align}
\rho \circ S &= \rho \circ \bk{\mathcal E f} = 
\int d^2\alpha P(\alpha)f(\alpha) \ket{\alpha}\bra{\alpha},
\end{align}
where the map $\mathcal E:\mathcal Z_P \to \mathcal Z$ is a
quantum version of the conditional expectation \cite{hayashi}.  Hence
\begin{align}
  \BK{\mathcal Ef: f \in \mathcal Z_P} \subseteq \BK{S}.
\end{align}

Consider the inner product between $\mathcal Ef$ and an
$h \in \mathcal B \subset \mathcal Z$ given by
\begin{align}
\Avg{\mathcal Ef,h} &= \trace \rho \Bk{\bk{\mathcal Ef} \circ h} 
= \trace\Bk{\rho \circ \bk{\mathcal Ef}} h
= \Avg{f,\mathcal E^*h}_P,
\end{align}
where Eq.~(\ref{cyclic}) is used and $\mathcal E^*$ is the adjoint map
given by the Husimi representation
\begin{align}
(\mathcal E^*h)(\alpha) &= \bra{\alpha}h\ket{\alpha}.
\end{align}
Since $h \in \mathcal Z$,
\begin{align}
\trace \rho h &= \int d^2\alpha P(\alpha) \bra{\alpha}h\ket{\alpha} = 
\Avg{\mathcal E^*h,1}_P = 0,
\end{align}
and $\mathcal E^*h \in \mathcal Z_P$. The map
$\mathcal E^*:\mathcal B \to \mathcal Z_P$ is obviously linear. It is also
bounded because
\begin{align}
\norm{\mathcal E^*h}_P^2 &= 
\int d^2\alpha P(\alpha) \bk{\bra{\alpha}h\ket{\alpha}}^2
\\
&\le \int d^2\alpha P(\alpha) \bra{\alpha}h^2\ket{\alpha}
= \norm{h}^2.
\end{align}
Thus $\mathcal E^*$ is a continuous linear map
\cite[Theorem~1.5.7]{debnath05}.  As $\mathcal B$ is a dense subset of
$\mathcal Z$ by virtue of Lemma~\ref{cauchy}, $\mathcal E^*$ can be uniquely
extended to a continuous linear map on the whole $\mathcal Z$
\cite[Theorem~1.5.10]{debnath05}.

Any $h \in \mathcal T^\perp$ must obey
\begin{align}
\Avg{\mathcal Ef,h} &= \Avg{f,\mathcal E^*h}_P = 0 \quad \forall f \in \mathcal Z_P.
\end{align}
The only solution is $\mathcal E^*h = 0$. In other words,
$\mathcal T^\perp$ is in the null space of $\mathcal E^*$. As the
Husimi representation is injective \cite{jordan64,*mehta65}, the only
solution to $\mathcal E^*h = 0$ is $h = 0$. Hence
$\mathcal T^\perp = \{0\}$, and $\mathcal T = \mathcal Z$.

\section{\label{sec_Tperp}$\mathcal T^\perp$ for diffraction-limited
  incoherent imaging is infinite-dimensional}
Following Appendix~\ref{sec_classicalT}, it can be shown that
$h \in \mathcal T^\perp$ if
\begin{align}
\bra{\psi_X}h\ket{\psi_X} &= 0
\quad \forall X \in \supp F
\label{Tperp_cond}
\end{align}
for the incoherent-imaging problem in
Sec.~\ref{sec_imaging}. Consider, for example,
$h = \int dk \tilde h(k)\ket{k}\bra{k}$, where $\ket{k}$ is a momentum
eigenket. Then Eq.~(\ref{Tperp_cond}) is satisfied if
\begin{align}
  \bra{\psi_X}h\ket{\psi_X}  &\propto \int dk \tilde h(k) \exp(-2k^2) = 0.
\label{Tperp_cond2}
\end{align}
Let $\{\tilde a_j(k): j \in \mathbb N_0\}$ be the set of Hermite
polynomials that are orthogonal with respect to the weight function
$\exp(-2k^2)$.  Then any $\tilde a_j(k)$ with $j > 0$ satisfies
Eq.~(\ref{Tperp_cond2}).  Define the set
\begin{align}
\BK{a} &\equiv
\BK{a_j = \int dk \tilde a_j(k)\ket{k}\bra{k}: j \in \mathbb N_1}.
\end{align}
Each $a_j$ obeys Eq.~(\ref{Tperp_cond2}) and
\begin{align}
\Avg{a_j,a_k} &\propto \int dk \tilde a_j(k) \tilde a_k(k) \exp(-2k^2)
\propto \delta_{jk},
\end{align}
so $\{a\}$ is an orthogonal set with respect to the inner product given by
Eq.~(\ref{inner}). As $\cspn\{a\} \subseteq \mathcal T^\perp$,
\begin{align}
\dim \mathcal T^\perp &\ge |\BK{a}| = |\mathbb N_1|,
\end{align}
which means that the dimension of $\mathcal T^\perp$ must be infinite.

\section{\label{sec_EC}Derivation of Eq.~(\ref{ECbound})}
For the density-operator family given by Eq.~(\ref{sigma_sub}), the
extended convexity of the Helstrom information \cite{alipour,ng16}
implies that
\begin{align}
K  &\le \tilde K = K^g + K^h,
\\
K_{jk}^g &= 
\left.
\int dX F(X)\parti{g(X|\theta_g)}{\theta_{j}}\parti{g(X|\theta_g)}{\theta_{k}}
\right|_{\theta=0},
\\
K^h_{jk} &= 4\Avg{\Delta k^2}\left.\int dX F(X)
\parti{h(X|\theta_h)}{\theta_{j}}\parti{h(X|\theta_h)}{\theta_{k}}
\right|_{\theta=0},
\end{align}
where $\avg{\Delta k^2} = \bra{\psi_0} k^2\ket{\psi_0}- (\bra{\psi_0}
k\ket{\psi_0})^2 = 1/4$ is the variance of $k$. With the explicit
partition of $\theta$ into $\theta_g$ and $\theta_h$, $\tilde K$
can be expressed as
\begin{align}
\tilde K &= \begin{pmatrix}K^g & 0\\ 0 & K^h\end{pmatrix}
\\
&= \int dX F(X)
\begin{pmatrix}(\partial_g g)(\partial_g g)^\top & 0\\ 0 & 
(\partial_h h)(\partial_h h)^\top \end{pmatrix},
\end{align}
where $\partial_g = (\partial_{g1},\partial_{g2},\dots)^\top$ and
$\partial_h = (\partial_{h0},\partial_{h1},\dots)^\top$. Let
\begin{align}
g(X|\theta_g) &= \sum_{j=1}^\infty \theta_{gj} a_j(X),
\\
h(X|\theta_h) &= X + \sum_{j=0}^\infty \theta_{hj} a_j(X),
\end{align}
where $\{a_j(X): j \in \mathbb N_0\}$ is a set of orthogonal
polynomials with respect to the true $F$ that satisfy
$\int dX F(X) a_j(X) a_k(X) = \delta_{jk}$.  $a_0(X) = 1$ is omitted
from $g(X|\theta_g)$ because $g$ is a score function with respect to
$F$ and $\int dX F(X) g(X|\theta_g) = 0$ implies that $g(X|\theta_g)$
cannot contain $a_0(X)$ in its expansion. The orthonormality of
$\{a\}$ leads to
\begin{align}
\tilde K &= I, & \tilde K^{-1} &= I.
\end{align}
Now consider
\begin{align}
\beta_\mu(\theta) &= \int dX F(X|\theta) X^\mu
\\
&= \int dY G(Y|\theta_g) \Bk{h(Y|\theta_h)}^\mu,
\\
\left.\parti{\beta_\mu}{\theta_{gj}}\right|_{\theta = 0} &= 
\int dX F(X) a_j(X) X^\mu,
\\
\left.\parti{\beta_\mu}{\theta_{hj}}\right|_{\theta = 0} &= 
\mu \int dX F(X)    X^{\mu-1} a_j(X).
\end{align}
Then 
\begin{align}
\bk{\partial\beta_\mu}^\top \tilde K^{-1}\partial\beta_\mu
&= 
\sum_{j=1}^\infty \bk{\parti{\beta_\mu}{\theta_{gj}}}^2 + 
\sum_{j=0}^\infty \bk{\parti{\beta_\mu}{\theta_{hj}}}^2
\\
&= \beta_{2\mu} - \beta_\mu^2 + \mu^2 \beta_{2\mu-2},
\end{align}
where the completeness property
\begin{align}
\sum_{j=0}^\infty a_j(X) a_j(X') = 
1 + \sum_{j=1}^\infty a_j(X) a_j(X') = \delta(X-X')
\end{align}
is assumed. With
\begin{align}
\mathsf{H}_\mu^\sigma &= \bk{\partial\beta_\mu}^\top K^{-1}\partial\beta_\mu
\ge \bk{\partial\beta_\mu}^\top \tilde K^{-1}\partial\beta_\mu,
\end{align}
and using Corollary~\ref{iid2} and Proposition~\ref{loose},
Eq.~(\ref{ECbound}) is obtained.

\section{\label{sec_Seff}Proof of Theorem~\ref{thm_Seff}}
The proof follows the classical case \cite{tsiatis06}.
As $S^\beta \notin \Lambda$, the $S_{\rm eff}$ given by Eq.~(\ref{Seff})
is not zero. Let 
\begin{align}
\delta &= \frac{S_{\rm eff}}{\norm{S_{\rm eff}}^2}.
\label{delta_Seff}
\end{align}
Notice that Eq.~(\ref{Seff}) is a projection of $S^\beta$ into
a space orthogonal to $\Lambda$, so $S_{\rm eff} \perp \Lambda$
and $\delta \perp\Lambda$. Then
\begin{align}
\Avg{S^\beta,\delta} &= \frac{1}{\norm{S_{\rm eff}}^2}
\Avg{S_{\rm eff}+\Pi(S^\beta|\Lambda),S_{\rm eff}} = 1,
\\
\Avg{S_j^\eta,\delta} &= 0,
\end{align}
because $\Pi(S^\beta|\Lambda) \in \Lambda$ and each
$S_j^\eta \in \Lambda$.  Thus $\delta$ satisfies
Eqs.~(\ref{unbiased_eta}) and is an influence operator.  Notice also
that $S_{\rm eff}$ and $\delta$ are in $\mathcal T$, because
$S^\beta \in \mathcal T$ and
$\Pi(S^\beta|\Lambda) \in \Lambda \subseteq \mathcal T$.  Hence, by
Theorem~\ref{thm_SHB},
\begin{align}
\delta_{\rm eff} &= \Pi(\delta|\mathcal T) = \delta,
\end{align}
and Eq.~(\ref{delta_Seff}) is the efficient influence.

\section{\label{sec_GHB_vec}Proof of Theorem~\ref{thm_GHB_vec}}
We again follow Ref.~\cite{tsiatis06}. Decompose any
$\delta \in \mathcal D \subseteq \mathcal Z = \mathcal T \oplus\mathcal
T^\perp$ into
\begin{align}
\delta &= \delta_{\rm eff} + h,
&
\delta_{\rm eff} &= \Pi(\delta|\mathcal T),
&
h &= \Pi(\delta|\mathcal T^\perp).
\end{align}
It is straightforward to prove that $\delta_{\rm eff} \in \mathcal D$.
As $h$ is orthogonal to any element in
$\mathcal T \equiv (\cspn\{S\})^{\oplus q}$, it must be orthogonal to
$g = (0,\dots,0,e,0,\dots,0)^\top$ with any $e \in \cspn\{S\}$ in any
entry of $g$, say, the $j$th entry. Then
\begin{align}
\trace\Avg{h,g} &= \Avg{h_j,e} = 0,
\end{align}
meaning that each entry of $h$ is orthogonal to $\cspn\{S\}$.  This
leads to a stronger matrix form of the orthogonality between
$\delta_{\rm eff} \in \mathcal T$ and $h \in \mathcal T^\perp$ given
by
\begin{align}
\Avg{\delta_{\rm eff},h} &= 0,
\end{align}
and a matrix form of the Pythagorean theorem given by
\begin{align}
\Avg{\delta,\delta} &= \Avg{\delta_{\rm eff},\delta_{\rm eff}}
+ \Avg{h,h} \ge \Avg{\delta_{\rm eff},\delta_{\rm eff}},
\end{align}
resulting in Eq.~(\ref{GHB_vec}). To prove the uniqueness of
$\delta_{\rm eff}$ in $\mathcal D$, suppose that there exists another
$\delta' \in \mathcal D$ that gives
$\avg{\delta',\delta'} = \avg{\delta_{\rm eff},\delta_{\rm eff}}$.
Define $g = \delta'-\delta_{\rm eff}$. As
$\avg{S,g} = \avg{S,\delta'} - \avg{S,\delta_{\rm eff}} =
\partial\beta-\partial\beta = 0$, $g \in \mathcal T^\perp$, and the
matrix Pythagorean theorem gives
$\avg{\delta',\delta'} = \avg{\delta_{\rm eff},\delta_{\rm eff}} +
\avg{g,g}$. This implies that $\avg{g,g} = 0$,
$\norm{g}^2 = \trace \avg{g,g} = 0$, and $g = 0$, contradicting the
assumption that $\delta' \neq \delta_{\rm eff}$. Hence
$\delta_{\rm eff}$ must be unique.

\section{\label{sec_belavkin}Proof of Lemma~\ref{lem_belavkin}}

  Let the superscript $*$ denote the entry-wise conjugation of a
  matrix and the superscript $\dagger = *\top$ denote the conjugate
  transpose.  $A \ge 0$ means that $z^\dagger A z\ge 0$ for any
  $z \in \mathbb C^q$. We also have $A^* \ge 0$, since
  $z^\dagger A^* z = (z^{*\dagger} A z^*)^* = z^{*\dagger} A z^* \ge
  0$ for any $z \in \mathbb C^q$. Thus, for any $z \in \mathbb C^q$,
\begin{align}
z^\dagger\bk{\real A \pm i \imag A} z &\ge 0,
&
z^\dagger (\real A)z &\ge \abs{z^\dagger (i\imag A) z}.
\end{align}
Let $\{\lambda_s,z_s: s = 1,\dots,q\}$ be the eigenvalues and
eigenvectors of the Hermitian $i\imag A$.  As the singular values of
$i\imag A$ are $\{|\lambda_s|\}$, we obtain
\begin{align}
\trace\real A &= \sum_s z_s^\dagger(\real A) z_s \ge \sum_s
\abs{z_s^\dagger(i\imag A)z_s}
\nonumber\\
&= \sum_s |\lambda_s| = \norm{i\imag A}_1 =
\norm{\imag A}_1.
\end{align}


\section{\label{sec_parametric}Some results
concerning quantum multiparameter estimation}
This appendix presents some interesting results concerning quantum
multiparameter estimation, following Sec.~\ref{sec_multi} and assuming
$1\le q \le p <\infty$.

A crucial assumption in this paper is that $\mathcal D$, the set of
influence operators, is not empty. While this is not a problem for all
the examples studied in this paper, the following theorem,
generalizing a classical result by Stoica and Marzetta
\cite{stoica01}, can be used to verify the assumption.

\begin{theorem}
  $\mathcal D$ is not empty if and only if all the columns of
  $\partial\beta$ are in the range of the Helstrom information matrix
  $K$, viz.,
\begin{align}
KK^+\partial\beta &= \partial\beta,
\label{range}
\end{align}
where the superscript $+$ denotes the Moore-Penrose pseudoinverse
\cite{golub13}.

\label{thm_range}
\end{theorem}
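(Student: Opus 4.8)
The plan is to identify the image of the linear map $\delta\mapsto\avg{S,\delta}$ with the range of the Helstrom matrix $K$, after which Eq.~(\ref{range}) drops out from a standard pseudoinverse fact. Because the unbiasedness constraint $\avg{S,\delta}=\partial\beta$ couples no two columns of $\delta$, I would first reduce to the scalar problem: $\mathcal D$ is nonempty iff, for every column $c$ of $\partial\beta$, there is some $\delta\in\mathcal Z$ with $\avg{S,\delta}=c$. The basic observation is that this inner product sees only the tangent component of $\delta$: decomposing $\delta=\Pi(\delta|\mathcal T)+\Pi(\delta|\mathcal T^\perp)$ and using $S_j\in\mathcal T$ gives $\avg{S_j,\delta}=\avg{S_j,\Pi(\delta|\mathcal T)}$.

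Next I would exploit $p<\infty$. Then $\mathcal T=\spn\BK{S}$ is finite-dimensional, hence closed, and every element of $\mathcal T$ has the form $w^\top S$ for some $w\in\mathbb R^p$, with $\avg{S,w^\top S}=\avg{S,S}w=Kw$. Combining the two steps, the set of attainable values $\{\avg{S,\delta}:\delta\in\mathcal Z\}$ is exactly $\{Kw:w\in\mathbb R^p\}=\on{range}(K)$. This settles both directions at the column level: if such a $\delta\in\mathcal Z$ exists then $c=Kw$ with $w^\top S=\Pi(\delta|\mathcal T)$, so $c\in\on{range}(K)$; conversely, given $c\in\on{range}(K)$ one picks $w$ with $Kw=c$ and takes $\delta=w^\top S\in\mathcal T\subseteq\mathcal Z$, which then lies in $\mathcal D$.

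Finally I would translate $c\in\on{range}(K)$ into the pseudoinverse identity. Since $K=\avg{S,S}$ is a real symmetric positive-semidefinite Gram matrix, $KK^+$ is the orthogonal projector onto $\on{range}(K)$, so $c\in\on{range}(K)$ is equivalent to $KK^+c=c$. Applying this to every column of $\partial\beta$ simultaneously gives $KK^+\partial\beta=\partial\beta$, which is Eq.~(\ref{range}).

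I do not anticipate a genuine obstacle: the two facts doing the real work---that $\mathcal T$ is closed and spanned by $\BK{S}$, and that $KK^+$ projects onto $\on{range}(K)$---are both elementary in the present finite-$p$ setting. The only point warranting a remark is that the argument never requires $K$ to be invertible; if the scores are linearly dependent (so that $K$ is singular) the representation $\mathcal T=\{w^\top S\}$ and the identity $\avg{S,w^\top S}=Kw$ still hold verbatim, which is precisely why the pseudoinverse, rather than the inverse used in Corollary~\ref{cor_HB}, is the right tool here.
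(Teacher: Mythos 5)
Your proof is correct, but your ``only if'' direction takes a genuinely different route from the paper's. The paper argues by duality: given $\delta\in\mathcal D$, it applies the Cauchy--Schwarz inequality to $u^\top\Avg{S,\delta}v$ with $u$ in the null space of $K$ and $v=(\partial\beta)^\top u$, concluding that every column of $\partial\beta$ is orthogonal to the null space of $K$ and hence lies in its range. You instead give an exact geometric characterization: using the closedness of $\mathcal T=\spn\BK{S}$ (guaranteed by $p<\infty$) and the decomposition $\mathcal Z=\mathcal T\oplus\mathcal T^\perp$, you show that the image of the constraint map $\delta\mapsto\Avg{S,\delta}$ is precisely $\on{range}(K)$, from which both directions follow at once. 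Your argument is more transparent and yields a slightly stronger statement (equality of the attainable set with $\on{range}(K)$, not just containment), at the cost of invoking the projection structure of the Hilbert space; the paper's Cauchy--Schwarz argument needs only bilinearity and positive semidefiniteness, so it does not rely on $\mathcal T$ being closed or on representing its elements as $w^\top S$. Your ``if'' direction is essentially identical to the paper's: your $\delta=w^\top S$ with $Kw=c$ is the paper's explicit influence $\delta=(\partial\beta)^\top K^+S$ column by column. Your closing remark---that the pseudoinverse, not the inverse, is what the singular-$K$ case demands---matches the paper's motivation for stating the theorem this way.
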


\begin{proof}
  We prove the ``only if'' part first. Assume that a
  $\delta \in \mathcal D$ exists.  It satisfies
  $\avg{S,\delta} = \partial\beta$, and therefore
\begin{align}
u^\top \Avg{S,\delta} v = \Avg{u^\top S,v^\top\delta} = u^\top\bk{\partial\beta}v,
\end{align}
for any $u \in \mathbb R^p$ and $v \in \mathbb R^q$. The Cauchy-Schwartz
inequality gives
\begin{align}
\abs{u^\top\bk{\partial\beta}v}^2 &\le 
\bk{u^\top K u} \bk{v^\top\Avg{\delta,\delta}v}.
\end{align}
Now suppose that $u$ is in the null space of $K$, such that $Ku = 0$,
and pick $v = (\partial\beta)^\top u$. We obtain
\begin{align}
\abs{u^\top\bk{\partial\beta}(\partial\beta)^\top u}^2 \le 0,
\end{align}
which implies $(\partial\beta)^\top u = 0$. As this must hold for any
$u$ in the null space of $K$, each column of $\partial\beta$ must be
orthogonal to the null space and therefore in the range of $K$.
$KK^+$ is the projection matrix into the range of $K$ \cite{golub13}, so
Eq.~(\ref{range}) holds.

The ``if'' part comes from the fact that,
as long as Eq.~(\ref{range}) holds,
\begin{align}
\delta &= (\partial\beta)^\top K^+ S
\label{an_influence}
\end{align}
satisfies $\avg{\delta,I} = 0$ and
$\avg{S,\delta} = K K^+ \partial\beta = \partial\beta$ and is
therefore an influence operator.
\end{proof}
For an illustrative example, consider
\begin{align}
\theta &= \begin{pmatrix}\theta_1\\ \theta_2\end{pmatrix},
&
\partial\beta &= 
\begin{pmatrix}a \\ b\end{pmatrix},
\end{align}
with the geometry depicted in Figure~\ref{singular}.  $S_1 = 0$ and
$K_{11} = \avg{S_1,S_1} = 0$ at the singular point $\theta = \varphi$,
meaning that
\begin{align}
K(\varphi) &= \begin{pmatrix}0 & 0\\ 0 & \Avg{S_2,S_2}\end{pmatrix}.
\end{align}
The tangent space there becomes a line in the $S_2$ direction, and it
is impossible for a $\delta$ to satisfy
\begin{align}
\Avg{S,\delta} &= \begin{pmatrix} 0 \\ \Avg{S_2,\delta} \end{pmatrix}
= \begin{pmatrix}a \\ b\end{pmatrix},
\end{align}
if $a \neq 0$.

\begin{figure}[htbp!]
\centerline{\includegraphics[width=0.45\textwidth]{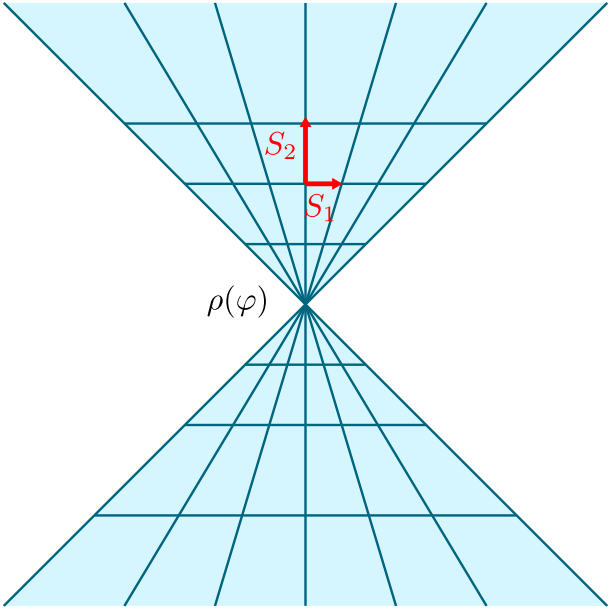}}
\caption{\label{singular} 
The lines denote the Cartesian grid of
  $\theta_1$ and $\theta_2$.  The distortion of the grid represents
  the geometry of $\rho(\theta_1,\theta_2)$.  For example, the
  distance from one grid point $(\theta_1,\theta_2)$ to a neighboring
  $(\theta_1+\Delta,\theta_2)$ in the figure represents the distance
  between $\rho(\theta_1,\theta_2)$ and
  $\rho(\theta_1+\Delta,\theta_2)$. $S_1$ and $S_2$ are tangent
  vectors, and $K = \avg{S,S}$ is a metric that governs the
  infinitesimal distance between two neighboring $\rho$'s.  At the
  singular point $\rho(\varphi)$, $K_{11} = 0$, the tangent space
  becomes a line in the $S_2$ direction, which forbids the existence
  of influence operators for certain $\partial\beta$.}
\end{figure}

If Eq.~(\ref{range}) does not hold at certain values of $\theta$,
Theorem~\ref{thm_range} implies that an unbiased estimator of $\beta$
cannot exist there, and the GHB can be assumed to be infinite.  Note,
however, that a biased estimator may still be able to achieve a finite
error.

Provided that Eq.~(\ref{range}) holds, a pseudoinverse form of the
Helstrom bound can be obtained.
\begin{corollary}
\label{cor_pseudoH}
If Eq.~(\ref{range}) holds, 
\begin{align}
\delta_{\rm eff} &= (\partial\beta)^\top K^+ S,
\\
\Avg{\delta_{\rm eff},\delta_{\rm eff}}
&= (\partial\beta)^\top K^+ \partial\beta,
\label{efficient_covar}
\\
\tilde{\mathsf H} &= 
\trace W (\partial\beta)^\top K^+ \partial\beta.
\label{pseudoH}
\end{align}
\end{corollary}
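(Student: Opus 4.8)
The plan is to recognize that the candidate expression $\delta_\star \equiv (\partial\beta)^\top K^+ S$ is exactly the influence operator already exhibited in Eq.~(\ref{an_influence}) during the proof of Theorem~\ref{thm_range}, so that under the range condition (\ref{range}) it is guaranteed to lie in $\mathcal D$. The only substantive task is then to argue that $\delta_\star$ is already the \emph{efficient} influence; once that is settled, the covariance and the GHB follow by routine matrix algebra together with Theorem~\ref{thm_GHB_vec}.

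First I would check that $\delta_\star \in \mathcal T$. Each component $(\delta_\star)_k = \sum_{j,l}(\partial_j\beta_k)(K^+)_{jl}S_l$ is a finite linear combination of the scores $S_l$, hence lies in $\cspn\{S\}$; since this holds for every $k = 1,\dots,q$, we obtain $\delta_\star \in (\cspn\{S\})^{\oplus q} = \mathcal T$. Now I invoke Theorem~\ref{thm_GHB_vec}: the efficient influence equals the projection of \emph{any} influence operator into $\mathcal T$, and that projection is independent of which influence operator one starts from. Taking $\delta_\star$ itself as the starting influence operator, its projection is $\Pi(\delta_\star|\mathcal T) = \delta_\star$ precisely because $\delta_\star \in \mathcal T$. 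Hence $\delta_{\rm eff} = \delta_\star = (\partial\beta)^\top K^+ S$, which is the first claimed identity.

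For the covariance I would expand $\avg{\delta_{\rm eff},\delta_{\rm eff}}$ by bilinearity of the inner product. Treating the scores as a column vector and using $K = \avg{S,S}$, one finds $\avg{\delta_{\rm eff},\delta_{\rm eff}} = (\partial\beta)^\top K^+ \avg{S,S} K^+ \partial\beta = (\partial\beta)^\top K^+ K K^+ \partial\beta$, where I have used the symmetry $(K^+)^\top = K^+$ (inherited from the symmetry of $K$). The Moore-Penrose identity $K^+ K K^+ = K^+$ then collapses this to $(\partial\beta)^\top K^+ \partial\beta$, giving Eq.~(\ref{efficient_covar}). The third identity, $\tilde{\mathsf H} = \trace W (\partial\beta)^\top K^+ \partial\beta$, is then immediate from $\tilde{\mathsf H} = \trace W \avg{\delta_{\rm eff},\delta_{\rm eff}}$ in Theorem~\ref{thm_GHB_vec}.

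The only point requiring care — the nearest thing to an obstacle — is the bookkeeping with the non-invertible pseudoinverse: one must rely on the symmetry $(K^+)^\top = K^+$ and the identity $K^+ K K^+ = K^+$ rather than on any cancellation that would secretly require $K^{-1}$, and one should keep in mind that it is exactly the range condition (\ref{range}) (equivalently $K^+ K \partial\beta = \partial\beta$) that makes $\delta_\star$ satisfy $\avg{S,\delta_\star} = \partial\beta$ and hence belong to $\mathcal D$. Everything else is a direct application of the already-established Theorems~\ref{thm_range} and \ref{thm_GHB_vec}.
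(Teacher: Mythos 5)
Your proposal is correct and takes essentially the same route as the paper's own proof: identify $(\partial\beta)^\top K^+ S$ as the influence operator of Eq.~(\ref{an_influence}), observe that it lies in $\mathcal T$ because each component is a linear combination of scores (hence it is efficient by the uniqueness/projection argument of Theorem~\ref{thm_GHB_vec}), and then obtain Eqs.~(\ref{efficient_covar}) and (\ref{pseudoH}) from $K^+KK^+ = K^+$ and the definition of $\tilde{\mathsf H}$. Your write-up merely makes explicit the matrix algebra and the symmetry $(K^+)^\top = K^+$ that the paper leaves implicit.
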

\begin{proof}
  Equation~(\ref{an_influence}) is an influence operator and also a
  linear combination of $S$, so it is in the tangent space
  $\mathcal T$.  By Theorem~\ref{thm_GHB}, it must be efficient. The
  other results follow from the fact $K^+KK^+ = K^+$ \cite{golub13}
  and the definition of $\tilde{\mathsf H}$.
\end{proof}

The original Helstrom bound is a simple consequence, generalizing the
scalar version in Corollary~\ref{cor_HB}.
\begin{corollary}
  If $K > 0$, 
\begin{align}
\delta_{\rm eff} &= (\partial\beta)^\top K^{-1} S,
\\
\Avg{\delta_{\rm eff},\delta_{\rm eff}}
&= (\partial\beta)^\top K^{-1} \partial\beta,
\\
\tilde{\mathsf H} &=  
\trace W (\partial\beta)^\top K^{-1} \partial\beta
\equiv \mathsf H.
\end{align}
\label{cor_helstrom_vec}
\end{corollary}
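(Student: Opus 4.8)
The plan is to treat this as the nondegenerate specialization of Corollary~\ref{cor_pseudoH}. The key observation is that the hypothesis $K > 0$ makes $K$ strictly positive-definite and hence invertible, so the Moore--Penrose pseudoinverse collapses to the ordinary inverse, $K^+ = K^{-1}$, by the standard uniqueness property of the pseudoinverse.

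First I would verify that the hypothesis of Corollary~\ref{cor_pseudoH}, namely the range condition $KK^+\partial\beta = \partial\beta$ of Eq.~(\ref{range}), is automatically satisfied. Since $K$ is invertible we have $KK^+ = KK^{-1} = I$, so $KK^+\partial\beta = \partial\beta$ holds identically for every $\partial\beta$; equivalently, the null space of $K$ is trivial, so each column of $\partial\beta$ lies vacuously in the range of $K$. Thus Corollary~\ref{cor_pseudoH} applies, and in particular the operator $(\partial\beta)^\top K^+ S$ it identifies is the efficient influence.

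I would then substitute $K^+ = K^{-1}$ into the three conclusions of Corollary~\ref{cor_pseudoH} to obtain $\delta_{\rm eff} = (\partial\beta)^\top K^{-1} S$, then $\Avg{\delta_{\rm eff},\delta_{\rm eff}} = (\partial\beta)^\top K^{-1}\partial\beta$, and finally $\tilde{\mathsf H} = \trace W (\partial\beta)^\top K^{-1}\partial\beta$, the last of which is by definition the vectoral Helstrom bound $\mathsf H$. This completes the proof by direct inheritance.

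I anticipate no genuine obstacle: the entire content is the observation that positive-definiteness upgrades the pseudoinverse to a true inverse and renders the range condition vacuous. The only step deserving a moment of care is invoking $K^+ = K^{-1}$ for invertible $K$, which is immediate from the Moore--Penrose axioms, together with the remark that efficiency of $\delta_{\rm eff}$ is already guaranteed by Corollary~\ref{cor_pseudoH} and needs no separate argument here.
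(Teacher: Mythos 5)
Your proposal is correct and is essentially identical to the paper's own proof: both deduce $K^+ = K^{-1}$ from $K > 0$, note that the range condition of Eq.~(\ref{range}) then holds automatically, and read off the three conclusions directly from Corollary~\ref{cor_pseudoH}. No gap to report.
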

\begin{proof}
  If $K > 0$, $K^{-1}$ exists, $K^+ = K^{-1}$, Eq.~(\ref{range})
  always holds, and the results follow from
  Corollary~\ref{cor_pseudoH}.
\end{proof}

Finally, we mention that the semidefinite program presented
  in~Ref.~\cite{albarelli19} to evaluate the Holevo bound for
  $\beta=\theta$ and a nonsingular $K$ can be straightforwardly
  extended to the more general setup considered in this appendix.


  \section{Post-publication notes}

  After the completion of this work and its acceptance for
  publication, Masahito Hayashi informed us that
  Ref.~\cite[(c)]{yang19} and Refs.~\cite{suzuki19,suzuki19b,suzuki20}
  also study quantum estimation theory with nuisance parameters.  In
  particular, Ref.~\cite[Sec.~4.3]{suzuki20} independently arrives at
  results similar to our Theorem~\ref{thm_holevo} and
  Corollary~\ref{cor_helstrom_vec}. Reference~\cite{suzuki20} focuses
  on the parametric case ($p < \infty$), whereas our
  Theorems~\ref{thm_GHB_vec} and \ref{thm_holevo} are proven to work
  in both parametric and semiparametric settings.

  We note that our Theorems~\ref{thm_GHB_vec}, \ref{thm_holevo},
  \ref{thm_sandwich}, \ref{thm_range} and
  Corollaries~\ref{cor_holevo_scalar}--\ref{cor_helstrom_vec} first
  appear in an arXiv preprint of ours on February 5th, 2020
  \cite[v2]{albarelli20}. We then decided to merge our two preprints
  \cite{tsang20,albarelli20} into one manuscript \cite[v6]{tsang20},
  which was accepted by PRX on June 1st, 2020. On the other hand, the
  first appearance of Sec.~4.3 in Ref.~\cite{suzuki20} seems to be in
  the Accepted Manuscript on the JPA website on April 21st, 2020---the
  section is absent in v1 and v2 of their arXiv preprint
  \cite{suzuki19}.

  On another note, Ref.~\cite[Remark~4.5 in v3]{suzuki19} proves that,
  if $\dim \mathcal T^\perp < \infty$ and $W > 0$, then there exists a
  minimizing solution in $\mathcal D$ for the Holevo bound given by
  Eq.~(\ref{X}).


\bibliography{research2}

\end{document}